\documentclass[onecolumn]{IEEEtran}
\usepackage{cite}
\usepackage{amsmath,amssymb,amsfonts}
\usepackage{dsfont}
\usepackage{cancel}
\usepackage{textcomp}
\usepackage[inline]{enumitem}
\usepackage[utf8]{inputenc}
\usepackage[english]{babel}
\usepackage{amsthm}
\usepackage{mathtools}
\usepackage{amssymb}
\usepackage[pdftex]{graphicx}
\usepackage{tikz}
\usetikzlibrary{positioning}
\graphicspath{{./pdf/}{./jpeg/}{./png}}
\DeclareGraphicsExtensions{.pdf,.jpeg,.png}

\definecolor{phiGreen}{RGB}{0,128,32}

\def\BibTeX{{\rm B\kern-.05em{\sc i\kern-.025em b}\kern-.08em
    T\kern-.1667em\lower.7ex\hbox{E}\kern-.125emX}}

\newtheorem{theorem}{Theorem}
\newtheorem{corollary}{Corollary}[theorem]
\newtheorem{lemma}{Lemma}
\theoremstyle{definition}

\theoremstyle{remark}

\newcommand{\qbar}{\mkern3mu \raisebox{-0.6ex}[\height][4pt] {\text{-}} \mkern-8mu q }

\title{Subspace Decomposition of Coset Codes}

\author{\IEEEauthorblockN{David Hunn}, \IEEEmembership{Student Member, IEEE}, and \IEEEauthorblockN{Willie K. Harrison}
\IEEEmembership{Senior Member, IEEE}
\thanks{This paper was presented in part at ISIT 2022. }
\thanks{This work was funded by the US National Science Foundation: Grant Award Number \#1910812.}}

\begin{document}

\maketitle

\begin{abstract}
A new method is explored for analyzing the performance of coset codes over the binary erasure wiretap channel (BEWC) by decomposing the code over subspaces of the code space. This technique leads to an improved algorithm for calculating equivocation loss. It also provides a continuous-valued function for equivocation loss, permitting proofs of local optimality for certain finite-blocklength code constructions, including a code formed by excluding from the generator matrix all columns which lie within a particular subspace. Subspace decomposition is also used to explore the properties of an alternative secrecy code metric, the $\chi^2$ divergence. The $\chi^2$ divergence is shown to be far simpler to calculate than equivocation loss. Additionally, the codes which are shown to be locally optimal in terms of equivocation are also proved to be globally optimal in terms of $\chi^2$ divergence.     
\end{abstract}

\begin{IEEEkeywords}
Wiretap channel, coset codes, $\chi^2$ divergence, finite blocklength, physical-layer security. 
\end{IEEEkeywords}


\section{Introduction}
\label{sec:introduction}
Wiretap coding has been explored since its introduction by Wyner in 1975 as a mechanism for secure and reliable data transmission over communication channels with eavesdroppers~\cite{Wyner1975,Aghdam2019OverviewPhysicalLayer,Bloch2021OverviewInformationTheoreticSecurity,Harrison2013ErrorControlForSecrecy,Bloch2015ErrorControlForSecrecy}. Arguably the simplest nontrivial wiretap channel, the binary erasure wiretap channel (BEWC) was among the first channel models employed to analyze the performance of wiretap codes. Several wiretap coding schemes have been developed which achieve the secrecy capacity of the BEWC in the limit of large blocklength~\cite{Mahdavifar2011PolarCapacityAchieving,Thangaraj2007WiretapLDPC,Subramanian2010StrongAndWeakSecrecy,Bloch2013StrongSecrecyResolvability,Bloch2015ErrorControlForSecrecy,Harrison2013ErrorControlForSecrecy}. On the other hand, a number of emerging applications impose latency, power, and computational limits, which require codes of short blocklength~\cite{Dursi2016ShortPacketRequirements}. The task of finding good secrecy codes with short blocklength has typically been approached using combinatorial techniques~\cite{Harrison2019AttributesOfGenerators}, and finding \emph{best} codes of a given blocklength and dimension remains an open problem even for simple channels such as the BEWC~\cite{Harrison2018DualRelationships, Harrison2018AnalysisShortSecrecyCodes}.

In this work, we present a novel approach which we call \emph{subspace decomposition} for analyzing the properties of linear block codes. We show subspace decomposition to be an especially powerful tool for computing the performance metrics of coset secrecy codes, particularly over the BEWC. 

This paper is an extension of \cite{Hunn2022SubspaceDecompositionExtremeRate}, which gave a preliminary development of subspace decomposition, including prefatory functions and notation used to describe subspace decomposition, as well as a weaker version of the central theorem of this paper. This paper extends these findings in the following ways: \begin{enumerate*}
\item the foundational functions and notation have been generalized to incorporate finite blocklengths and to allow for either fixed erasure probability or fixed number of erased bits; 
\item the analysis has been expanded to accommodate linear block codes whose generator matrices include the all-zero vector; 
\item the central theorem has been strengthened to include a proof of the previously conjectured pattern of coefficients in the central theorem; 
\item subspace decomposition is used to prove the local optimality of certain classes of coset codes, including members of a novel class of codes termed \emph{subspace exclusion codes}; 
\item some remarkable properties of the $\chi^2$ divergence as a coset code metric are explored via subspace decomposition; 
\item results on local optimality in terms of equivocation are extended to global optimality in terms of $\chi^2$ divergence; 
\item the complexity of computing both equivocation and $\chi^2$ divergence via subspace decomposition are analyzed and shown to provide advantage over existing techniques in many cases. \end{enumerate*}

\section{Background}
The wiretap channel used in this paper is the BEWC, shown in Fig.~\ref{fig:wiretap_channel}. In this channel, a message $M$ of size $k$ bits is encoded into a codeword $X$ of size $n$ and sent by the sender. It is then received via a noiseless channel by a legitimate receiver, and a degraded codeword $Z$ is received via a binary erasure channel (BEC) by an eavesdropper. The BEC erasure probability is denoted $\epsilon$. 

\subsection{Coset Coding}
The usual form of secrecy coding for this type of channel is called coset coding and involves encoding the original message $m$ to a codeword $x$ by allowing $m$ to select a coset of a base code $\mathcal{C}$, with $x$ being chosen uniformly at random from the selected coset. In practice, this is done by creating a full-rank $\kappa \times n$ generator matrix $G$ for the linear block code $\mathcal{C}$, where $\kappa = n-k$ is the dimension of the code $\mathcal{C}$. Next, a $k \times n$ auxiliary generator matrix $G'$ is defined which is linearly independent relative to $G$. A random $\kappa$-bit auxiliary message $m'$ is then appended to the message $m$, and the resulting vector is multiplied by the matrix $G^*$, which is formed by the vertical concatenation of $G'$ and $G$. The final codeword $x$ is given by 
\begin{equation}
    x = \begin{bmatrix} m & m' \end{bmatrix} \begin{bmatrix} G' \\ G \end{bmatrix}\mathrm{.}
\end{equation}

\begin{figure}
    \centering
    \scriptsize
    \begin{tikzpicture}
        \node at (-0.1,0) [rectangle, minimum height = 0.4cm, anchor = north, node font=\bfseries, inner sep=0.05cm, outer sep=0] (S) {Sender}; 
        \node at (1.5,0) [rectangle, fill=blue!20, draw=blue, minimum height = 0.6cm, minimum width = 1.1cm] (En) {Encoder}; 
        \node at (3.9,0) [rectangle, align=center, fill=green!20, draw=green, minimum height = 0.6cm, minimum width = 1.1cm] (NL) {Noiseless\\Channel};
        \node at (6.3,0) [rectangle, fill=blue!20, draw=blue, minimum height = 0.6cm, minimum width = 1.1cm] (De) {Decoder}; 
        \node at (7.7,0) [rectangle, minimum height = 0.4cm, minimum width = 1.1cm, align=left, anchor = north, node font=\bfseries] (LR) {Legitimate\\Receiver}; 
        \node at (3.9,-1.3) [rectangle, align=center, fill=green!20, draw=green, minimum height = 0.6cm, minimum width = 1.1cm] (BEC) {Binary Erasure\\Channel (BEC)};
        \node at (6.6,-1.3) [rectangle, minimum height = 0.4cm, minimum width = 1.1cm, align=left, anchor = north, node font=\bfseries] (Eve) {Eavesdropper};
        \draw[->] (En) -- (NL) node[pos=0.5, anchor=south]{\!$X\!$ ($n\!$ bits)}; 
        \draw[->] (NL) -- (De) node[pos=0.5, anchor=south]{\!$X\!$ ($n\!$ bits)};; 
        \draw[->] (S.north) + (-0.15cm,0) -- (En) node[pos=0.5, anchor=south]{\!$M\!$ ($k\!$ bits)};; 
        \draw[->] (De) -- (8.0cm,0) node[pos=0.5, anchor=south]{\!$M\!$ ($k\!$ bits)};
        \draw[->] (En.east)+(0.3,0) |- (BEC.west) ;
        \draw[->] (BEC) -- (6.8cm,-1.3) node[pos=0.95, anchor=south]{$Z\!$ ($n\!$ symbols, $Z_i \!\in\! \{0,\!1,\!?\}$)};
    \end{tikzpicture}

    \caption{Binary erasure wiretap channel.}
    \label{fig:wiretap_channel}
\end{figure}
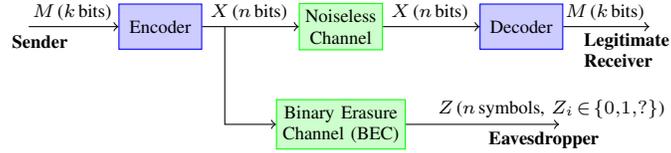

\subsection{Code Performance Characterization}
Leakage of information to an eavesdropper is typically measured using information theoretic metrics on the eavesdropper's codeword $Z$. Many such metrics take the form of a measure of the difference between two probability distributions: the joint distribution $p_{MZ}$ of the message $M$ and received codeword $Z$, and the product $p_M p_Z$ of the marginal distributions of $M$ and $Z$. The most common such metric is the mutual information between the the message and the eavesdropper's codeword, $I(M;Z)$. This metric is referred to as the equivocation loss or leakage, with the eavesdropper message entropy $H(M \!\! \mid \!\! Z) = H(M) - I(M;Z)$ being referred to as the eavesdropper's equivocation. The equivocation loss is equal to the Kullback-Leibler divergence between $p_{MZ}(m,z)$ and $p_M p_Z$. That is, 
\begin{equation}
    \label{eqn:equivocationDefinition}
    I(M;Z) = \mathds{D}(p_{MZ},p_M p_Z)\mathrm{.}
\end{equation}
Other information-theoretic secrecy metrics include norms on the difference between joint and marginal distributions, 
\begin{equation}
    \label{eqn:normDefinition}
    \| p_{MZ},p_M p_Z\|_\beta = \!\! \left(\!\!\!\!\!\!\!\!\!\! \sum_{\;\;\;\;\;\;\;\;\substack{m\in \{0,1\}^{\kappa}, \\ z\in\{0,1,?\}^n}}{\!\!\!\!\!\!\!\!\!\!\left|p_{MZ}(m,z)-p_M(m)p_Z(z)\right|^\beta}\!\!\right)^{\frac{1}{\beta}}\!\!\!\!\mathrm{,}
\end{equation}
with another common metric, the total variation distance, being half the 1-norm, 
\begin{equation}
    \label{eqn:oneNorm}
    \mathds{V}(p_{MZ},p_M p_Z) = \frac{1}{2} \!\!\!\!\!\!\!\!\!\!\!\! \sum_{\;\;\;\;\;\;\;\;\substack{m\in \{0,1\}^{\kappa}, \\ z\in\{0,1,?\}^n}}{\!\!\!\!\!\!\!\!\!\!\!\!\left|p_{MZ}(m,z)-p_M(m)p_Z(z)\right|}\mathrm{.}
\end{equation}


\subsection{Equivocation Properties}
\label{sec:messageEquivocationDefinition}
Equivocation is the most widely-studied information-theoretic measure of secrecy code performance, but calculating equivocation for a specific code is nontrivial, even over a channel as simple as the BEWC. The eavesdropper's equivocation can clearly be calculated as an expectation over all possible values of $z$, 
\begin{equation}
    \label{eqn:EquivocationGeneralExpectation}
    H(M \mid Z) = \sum_{z \in \{0,1,?\}^n}{\mathrm{Pr}(Z=z) \cdot H(M \mid Z=z)}\mathrm{.}
\end{equation}

An important result presented in \cite{Pfister2017QuantifyingEquivocation} is that for uniformly distributed $M$, the equivocation may be calculated based on the set, denoted $r(z)$, of revealed bit positions in the observation $z$, as 
\begin{equation}
    \label{eqn:EquivocationPfister}
    H(M \! \mid \! Z=z) = H(M) - \lvert r(z) \rvert + \mathrm{rank}(G_{r(z)})\mathrm{,}
\end{equation}
where $G_{r(z)}$ is the submatrix of $G$ formed by concatenating the columns of $G$ which are indexed by the set $r(z)$. It is assumed throughout this work that the assumption required by \eqref{eqn:EquivocationPfister} holds--- namely, that the message is uniformly distributed across the $2^k$ possibilities and, therefore, $H(M) = k$.
One of the implications of \eqref{eqn:EquivocationPfister} is that the eavesdropper's equivocation may be calculated via a matrix rank calculation for each of the $2^n$ possible erasure patterns $r(z)$, as 
\begin{equation}
    \label{eqn:EquivocationTotalPfister}
    H(M \! \mid \! Z) = \!\!\!\!\!\!\!\!\sum_{r \in P(\{1 \dots n\})}  \epsilon^{n-\lvert r \rvert} (1-\epsilon)^{\lvert r \rvert} \left(k - \lvert r \rvert + \mathrm{rank}(G_{r})\right)\mathrm{,}
\end{equation}
where $P(\cdot)$ is the power set function. Because $\mathcal{O}(n^2)$ calculations over $\mathds{F}_2^{\kappa}$ are required to calculate the rank of a matrix with a size up to $\kappa \times n$ ~\cite{Bard2009GF2Operations}, the total complexity of an equivocation calculation by this method is $\mathcal{O}(n^2 2^n)$. 
Using \eqref{eqn:EquivocationTotalPfister}, it is practical to calculate a code's equivocation for a given $\epsilon$ up to a blocklength of a few tens. Beyond this blocklength, researchers have generally resorted to approximate equivocation values generated using, e.g., Monte Carlo simulations, as studied in ~\cite{Pfister2017QuantifyingEquivocation} or asymptotic methods, as in ~\cite{Al-Hassan2013BestKnownLinearCodes}. 

\section{Notation and Definitions}
\label{sec:Definitions}
\subsection{Continuous Code Specification}
The usual method of specifying a coset code $\mathcal{C}$ is via the generator matrix $G$. In the BEWC, because erasure probability is independent of bit position, the ordering of columns of G does not affect code performance. For this reason, it is also possible without loss of generality to characterize a coset code $\mathcal{C}$ with generator matrix $G$ of size $n \times \kappa$ by a (zero-indexed) vector $Q$ with integer elements $Q_i, i \in [\![ 0,2^\kappa-1]\!]$. Then each $Q_i$ represents the number of columns of $G$ equal to the binary vector expansion $\nu(i)$ of $i$.

In this work, we use a further generalization of the coset code characterization described above. Instead of the vector $Q$ with integer-valued elements, we use a vector $q$ with continuous-valued elements $q_i, i \in [\![ 0,2^\kappa-1]\!]$, where each $q_i$ represents the fraction of the columns of $G$ equal to $\nu(i)$. In this work, we will generally require $q$ to meet both a nonnegativity constraint, 
\begin{equation}
    \label{eqn:QConstraintPositive}
    q_i \geq 0 \mathrm{,}
\end{equation}
and a unit-sum constraint,
\begin{equation}
    \label{eqn:QConstraintTotal}
    \sum_{i=0}^{2^\kappa-1}q_i = 1 \mathrm{.}
\end{equation}
In general, it is not possible to realize an $(n,\kappa)$ code characterized by $q$ unless $q$ additionally satisfies 
\begin{equation}
    \label{eqn:QConstraintRealizable}
    n q_i \in \mathbb{N} \mathrm{.}
\end{equation}
Nevertheless, it is shown in Section \ref{sec:AnalyticalResults} below that it is possible to define continuous functions acting on $q$ which yield correct code performance characteristics whenever \eqref{eqn:QConstraintRealizable} is satisfied.  

\subsection{Subspace Structure Definitions}
\label{sec:FunctionDefinitions}
Let $W$ be the vector space comprised of all the vectors in $\mathbb{F}_2^{\kappa}$, where binary vector addition and scalar multiplication are defined in the usual way. We then define the function $\Xi(S,d)$ acting on a space $S$ (equal to $W$ or a subspace of $W$) and a scalar dimension $d$. This function returns the set of all dimension-$d$ subspaces of $S$: 
\begin{equation}
    \label{eqn:defXi}
    \Xi(S,d) = \{T: T \subseteq S, \mathrm{dim}(T) = d\} \mathrm{.}
\end{equation}
In this work, braced superscript notation may be used to indicate the dimension of a vector space, so for example $S^{\{d\}}$ indicates a vector space S which has dimension $d$. This notation is for clarity and may be omitted if the dimension of the vector space is clear. The number of subspaces of dimension $d'$ contained within a binary space $S$ of dimension $d$ is given by the Gaussian binomial coefficient
\begin{equation}
    \label{eqn:GaussianBinomial}
    \lvert \Xi(S^{\{d\}},d') \rvert = \binom{d}{d'}_2 = 
    \begin{cases}
        \prod_{\iota=0}^{d' - 1}{\frac{2^{d-\iota} - 1}{2^{d'-\iota} - 1} } &\text{if }d \geq d' \geq 0 \\
        0 &\text{otherwise}
    \end{cases}
    \mathrm{.}    
\end{equation}

Next, several functions are defined which act on a subspace $S$ of $W$. These functions also take a number of other parameters as inputs, including the code-characterization vector $q$, blocklength $n$, erasure probability $\epsilon$, and/or revealed bit count $\mu$. Throughout this work, however, these additional inputs may be considered as implied parameters and omitted in the notation. Additionally, the code dimension $\kappa$ is always considered an implied parameter. First, a function $\zeta(S,q)$ is defined which expresses the fraction of the columns of $G$ that lie within $S$. (We refer to a particular revealed bit being within a space if the column of $G$ indexed by the bit is a member of the space.) This function is defined as 
\begin{equation}
    \label{eqn:defZeta}
    \zeta(S) = \zeta(S,q) = \sum_{i: \nu(i) \in S}{q_i}\mathrm{.}
\end{equation}

Next, we define $\Phi(S,n,\mu,q)$, which is used to represent the probability, given $\mu$ revealed bits, of all of the revealed bits lying within $S$. This probability is given by 
\begin{equation}
    \label{eqn:defPhi}
    \Phi(S) = \Phi(S,n,\mu,q) = \prod_{i=0}^{\mu-1}{\frac{\zeta(S,q)-\frac{i}{n}}{1-\frac{i}{n}}}\mathrm{.}
\end{equation}
Note that if $\zeta(S,q) < {(\mu-1)}/{n}$, although this function may have a negative value in general, it is continuous, and as long as \eqref{eqn:QConstraintRealizable} is satisfied, $\Phi(S)$ will always be positive or zero. 

A related function $\phi(S,n,\epsilon,q)$ is defined
which is used to represent the probability of all revealed bits lying within $S$ for a code with blocklength $n$ and erasure probability $\epsilon$. (The case of zero revealed bits occurring is included in this probability, even if $\zeta(S,q) = 0$.) The value of this probability is given by 
\begin{equation}
    \label{eqn:def_phi}
    \phi(S) = \phi(S,n,\epsilon,q) = \epsilon^{n(1-\zeta(S,q))}\mathrm{.}
\end{equation}



Another set of functions is defined to represent the probability of a set of revealed bits exactly spanning a space $S$. In this work, we use the terminology \emph{exactly spans} to indicate that a set of revealed bits all lie within a space and do not all lie within a proper subspace of that space. Equivalently, we could say that the column vectors indexed by the revealed bits span the space and do not span any higher-dimensional space. Note that from this definition, it is clear that any set of revealed bits spans exactly one subspace of $W$. 

Using this definition, the function $\Psi(S,n,\mu,q)$ is used to represent the probability, given $\mu$ revealed bits, of the set of revealed bits exactly spanning $S$. This probability may be calculated by finding the probability $\Phi(S)$ of all revealed bits lying within $S^{\{d\}}$, then subtracting the probability of $\mu$ revealed bits lying within any of the proper subspaces of $S$. Because a set of revealed bits exactly spans only one space, $\Psi(S)$ may be defined recursively as 
\begin{equation}
    \label{eqn:defPsi}
    \begin{split}
    \Psi(S^{\{d\}}) = \Psi(S^{\{d\}}\!,n,\mu,q) = \Phi(S) - \! \sum_{i = 0 }^{d - 1}{ \left(\!\!\!\!\!\!\!\!\!\!\!\!\!\!\!\!\sum_{\;\;\;\;\;\;\;\;\;\;\;\;T^{\{i\}} \in \Xi(S,i)}\!\!\!\!\!\!\!\!\!\!\!\!\!\! {\Psi(T)}\right)}\mathrm{.} 
    \end{split}
\end{equation}

Finally, the related function $\psi(S,n,\epsilon,q)$ is defined which is used to represent the probability of a set of revealed bits exactly spanning $S$ for a code with blocklength $n$ and erasure probability $\epsilon$. The value of this probability may also be calculated recursively, as 
\begin{equation}
    \label{eqn:def_psi}
    \begin{split}
    \psi(S^{\{d\}}) = \psi(S^{\{d\}},n,\epsilon,q) = \phi(S) - \sum_{i = 0 }^{d - 1}{ \left(\!\!\!\!\!\!\!\!\!\!\!\!\!\!\!\!\sum_{\;\;\;\;\;\;\;\;\;\;\;\;T^{\{i\}} \in \Xi(S,i)}\!\!\!\!\!\!\!\!\!\!\!\!\!\! {\psi(T)}\right)}\mathrm{.}
    \end{split}
\end{equation}

Using these definitions, a number of results may be obtained related to the subspace structure of coset codes. These results are presented below. 

\subsection{Subspace Decomposition Example}
Here we present a simple example of the application of the functions and variables defined in the preceding sections. Consider the code defined by generator matrix
\begin{equation}
    \label{eqn:ex_G}
    G = 
    \begin{bmatrix}
        0 & 1 & 0 & 0 & 1 \\
        0 & 0 & 1 & 1 & 1 \\
        0 & 0 & 0 & 0 & 1 \\
    \end{bmatrix}
    \mathrm{.} 
\end{equation}
Computing the expected equivocation for a given $\epsilon$ as described in Section \ref{sec:messageEquivocationDefinition} requires iterating through each of the possible revealed bit patterns $r(z)$ and finding the rank of the associated submatrix $G_{r(z)}$ and its probability, as shown in Table \ref{tbl:exampleBruteForce}. 

The first step of subspace decomposition is to convert $G$ into a code definition vector $q$, given in this case by 
\begin{equation}
    \label{eqn:ex_q}
    q = \begin{bmatrix}
        0.2 & 0.2 & 0.4 & 0 & 0 & 0 & 0 & 0.2
    \end{bmatrix}^{\intercal} \mathrm{.}
\end{equation}
Next, rather than enumerating all possible erasure patterns, subspace decomposition enumerates all of the subspaces of the global space $W$ and evaluates the properties of the code over each subspace. For the code defined by \eqref{eqn:ex_G}, these subspaces and their properties are illustrated in Table \ref{tab:exampleSubspaceDecomposition}. In the following sections, we demonstrate the utility of information of the type presented in Table \ref{tab:exampleSubspaceDecomposition} in analyzing coset code performance.

\begin{table}
\caption{Erasure pattern enumeration for coset code of \eqref{eqn:ex_G}.}
\centering
\begin{tabular}{|c |c |c ||c |c |c | }
    \hline
    $r(z)$ & $\! \mathrm{rank}(G_{\!r(z)}) \!$ & $\mathrm{Pr}[r(z)]$ & $r(z)$ & $\! \mathrm{rank}(G_{\!r(z)}) \!$ & $\mathrm{Pr}[r(z)]$
    \\
    \hline
    00000 & 0 & 0.00032 & 11100 & 2 & 0.02048 \\
    10000 & 0 & 0.00128 & 11010 & 2 & 0.02048 \\
    01000 & 1 & 0.00128 & 11001 & 2 & 0.02048 \\
    00100 & 1 & 0.00128 & 10110 & 1 & 0.02048 \\
    00010 & 1 & 0.00128 & 10101 & 2 & 0.02048 \\
    00001 & 1 & 0.00128 & 10011 & 2 & 0.02048 \\
    11000 & 1 & 0.00512 & 01110 & 2 & 0.02048 \\
    10100 & 1 & 0.00512 & 01101 & 3 & 0.02048 \\
    10010 & 1 & 0.00512 & 01011 & 3 & 0.02048 \\
    10001 & 1 & 0.00512 & 00111 & 2 & 0.02048 \\
    01100 & 2 & 0.00512 & 11110 & 2 & 0.08192\\
    01010 & 2 & 0.00512 & 11101 & 3 & 0.08192\\
    01001 & 2 & 0.00512 & 11011 & 3 & 0.08192\\
    00110 & 1 & 0.00512 & 10111 & 2 & 0.08192\\
    00101 & 2 & 0.00512 & 01111 & 3 & 0.08192\\
    00011 & 2 & 0.00512 & 11111 & 3 & 0.32768\\
     \hline    
\end{tabular}
\label{tbl:exampleBruteForce}
\end{table}

\begin{table}
\caption{Subspace decomposition of coset code defined by \eqref{eqn:ex_q}.}
\centering
\begin{tabular}{|c|c|c|c|c|c|}
    \hline
    \rule[0.28cm]{0pt}{0pt} Subspace \rule[-0.12cm]{0pt}{0pt} & $\zeta(S)$ & $\Phi(S)$ & $\Psi(S)$ & $\phi(S)$ & $\psi(S)$ \\
    \hline
    \multicolumn{6}{|c|}{\rule[0.28cm]{0pt}{0pt} Dimension-0 subspaces: $S \in \Xi(W,0)$ \rule[-0.10cm]{0pt}{0pt}} \\
    \hline
    $\rule[0.27cm]{0pt}{0pt}\{\nu(0)\}$ & 0.2 & 0 & 0 & 0.0016 & 0.0016\\
    \hline
    \multicolumn{6}{|c|}{\rule[0.28cm]{0pt}{0pt} Dimension-1 subspaces: $S \in \Xi(W,1)$ \rule[-0.10cm]{0pt}{0pt}} \\
    \hline
    \rule[0.27cm]{0pt}{0pt}$\{\nu(0),\nu(1)\}$ & 0.4 & 0.1 & 0.1 & 0.008 & 0.0064\\
    \hline
    \rule[0.27cm]{0pt}{0pt} $\{\nu(0),\nu(2)\}$ & 0.6 & 0.3 & 0.3 & 0.04 & 0.0384\\
    \hline
    \rule[0.27cm]{0pt}{0pt} $\{\nu(0),\nu(3)\}$ & 0.2 & 0 & 0 & 0.0016 & 0\\
    \hline
    \rule[0.27cm]{0pt}{0pt} $\{\nu(0),\nu(4)\}$ & 0.2 & 0 & 0 & 0.0016 & 0\\
    \hline
    \rule[0.27cm]{0pt}{0pt} $\{\nu(0),\nu(5)\}$ & 0.2 & 0 & 0 & 0.0016 & 0\\
    \hline
    \rule[0.27cm]{0pt}{0pt} $\{\nu(0),\nu(6)\}$ & 0.2 & 0 & 0 & 0.0016 & 0\\
    \hline
    \rule[0.27cm]{0pt}{0pt} $\{\nu(0),\nu(7)\}$ & 0.4 & 0.1 & 0.1 & 0.008 & 0.0064\\
    \hline
    \multicolumn{6}{|c|}{\rule[0.28cm]{0pt}{0pt} Dimension-2 subspaces: $S \in \Xi(W,2)$ \rule[-0.10cm]{0pt}{0pt}} \\
    \hline
    \rule[0.27cm]{0pt}{0pt} $\{\nu(0),\nu(1),\nu(2),\nu(3)\}$ & 0.8 & 0.6 & 0.2 & 0.2 & 0.1536\\
    \hline
    \rule[0.27cm]{0pt}{0pt} $\{\nu(0),\nu(1),\nu(4),\nu(5)\}$ & 0.4 & 0.1 & 0 & 0.008 & 0\\
    \hline
    \rule[0.27cm]{0pt}{0pt} $\{\nu(0),\nu(1),\nu(6),\nu(7)\}$ & 0.6 & 0.3 & 0.1 & 0.04 & 0.0256\\
    \hline
    \rule[0.27cm]{0pt}{0pt} $\{\nu(0),\nu(2),\nu(4),\nu(6)\}$ & 0.6 & 0.3 & 0 & 0.04 & 0\\
    \hline
    \rule[0.27cm]{0pt}{0pt} $\{\nu(0),\nu(2),\nu(5),\nu(7)\}$ & 0.8 & 0.6 & 0.2 & 0.2 & 0.1536\\
    \hline
    \rule[0.27cm]{0pt}{0pt} $\{\nu(0),\nu(3),\nu(4),\nu(7)\}$ & 0.4 & 0.1 & 0 & 0.008 & 0\\
    \hline
    \rule[0.27cm]{0pt}{0pt} $\{\nu(0),\nu(3),\nu(5),\nu(6)\}$ & 0.2 & 0 & 0 & 0.0016 & 0\\
    \hline
    \multicolumn{6}{|c|}{\rule[0.28cm]{0pt}{0pt} Dimension-3 subspaces: $S \in \Xi(W,3)$ \rule[-0.10cm]{0pt}{0pt}} \\
    \hline
    \rule[0.27cm]{0pt}{0pt} $W$ \rule[-0.05cm]{0pt}{0pt} & 1 & 1 & 0 & 1 & 0.6144 \\
    \hline
\end{tabular}
\label{tab:exampleSubspaceDecomposition}
\end{table}

\section{Analytical Results}
\label{sec:AnalyticalResults}
With the core concepts of subspace decomposition in place, we now show how this technique may be applied to derive useful results in the area of coset coding theory. This section first presents several lemmas necessary for the development of the main result and other theorems. Next, we present one of the principal results, a novel expression for the expected equivocation of coset code, followed by a complexity analysis and several secondary results arising from this expression. An alternative secrecy metric, the $\chi^2$ divergence, is also explored using subspace decomposition and found to exhibit remarkable properties. The secondary results and complexity analysis are repeated for the $\chi^2$ divergence, for which they provide stronger conclusions.

Proofs of the principal results are provided in this section. Supporting lemmas and secondary results are proved in the appendices, as the proofs are somewhat involved and provide little insight into the mechanics or applications of subspace decomposition.

\subsection{Supporting Lemmas}
\begin{lemma}
\label{PhiNonnegativityLemma}
For any subspace $S$ of $W^{\{\kappa\}}$, any positive integer $n$, any $0 \leq \epsilon < 1$, and any $\kappa$-dimensional code definition vector $q$ satisfying \eqref{eqn:QConstraintPositive} and \eqref{eqn:QConstraintTotal}, 
\begin{equation}
    \label{eqn:psiNonnegativityStatement}
    \psi(S,n,\epsilon,q) \geq 0 \mathrm{.} 
\end{equation} 
\end{lemma}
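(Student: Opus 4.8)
The plan is to prove the bound first in the \emph{realizable} case, where $\psi(S)$ literally equals a probability and is therefore nonnegative by construction, and then to remove the realizability hypothesis by a density-and-continuity argument. The key enabling observation is that the recursion \eqref{eqn:def_psi} expresses $\psi(S,n,\epsilon,q)$ as a fixed integer-coefficient linear combination of the quantities $\phi(T)$ over the subspaces $T\subseteq S$, with coefficients that depend only on the subspace lattice of $S$ (equivalently on $\dim S$ and $\dim T$) and not on $n$, $\epsilon$, or $q$. Since $\phi(T)=\epsilon^{n(1-\zeta(T))}=\bigl(\epsilon^{n}\bigr)^{1-\zeta(T)}$ by \eqref{eqn:def_phi}, the map $(n,\epsilon)\mapsto\psi(S,n,\epsilon,q)$ factors through the single scalar $\lambda:=\epsilon^{n}\in[0,1)$, and for fixed $\lambda$ it is a continuous function of $q$ on the simplex cut out by \eqref{eqn:QConstraintPositive} and \eqref{eqn:QConstraintTotal}.

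For the realizable case, suppose $q$ satisfies \eqref{eqn:QConstraintRealizable}, so that there is an actual matrix (not necessarily full rank) whose columns realize the multiplicities $nq_i$, each bit being erased independently with probability $\epsilon$. I would first check that $\phi(T)$ is exactly the probability that every revealed column lies in $T$: the columns outside $T$ number $n(1-\zeta(T))$ by \eqref{eqn:defZeta}, and $\phi(T)=\epsilon^{n(1-\zeta(T))}$ is precisely the probability that all of them are erased. Since the span of the revealed columns is a single, well-defined subspace of $W$, the event ``all revealed columns lie in $S$'' partitions into the disjoint events ``the revealed columns span exactly $T$'' as $T$ ranges over the subspaces of $S$. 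An induction on $\dim S$ then identifies $\psi(S)$ with $\mathrm{Pr}[\text{revealed columns span exactly } S]$: the base case $\dim S=0$ is immediate, and in the inductive step the recursion \eqref{eqn:def_psi} subtracts from $\phi(S)$ exactly the probabilities of spanning a proper subspace, leaving the probability of spanning $S$ itself. Being a probability, this is nonnegative.

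To reach an arbitrary $q^{\star}$ in the simplex for the given $n$ and $\epsilon$, I would set $\lambda=\epsilon^{n}$ and exploit the factorization above. For each integer $N$, the code-definition vectors with $N q_i\in\mathbb N$ are realizable at blocklength $N$, and as $N\to\infty$ these rational points become dense in the simplex, so one can choose realizable $q^{(N)}\to q^{\star}$ (e.g.\ by rounding $Nq^{\star}_i$ and correcting the sum). Choosing $\epsilon_N:=\lambda^{1/N}\in[0,1)$ gives $\epsilon_N^{\,N}=\lambda$, so by the previous paragraph $\psi(S,N,\epsilon_N,q^{(N)})\geq 0$. Because this equals the fixed continuous function of $(\lambda,q)$ evaluated at $(\lambda,q^{(N)})$, letting $N\to\infty$ and using continuity in $q$ yields $\psi(S,n,\epsilon,q^{\star})\geq 0$. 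The case $\epsilon=0$ (that is, $\lambda=0$) follows by continuity of this function in $\lambda$ on $[0,1)$, where each exponent $1-\zeta(T)$ is nonnegative and the $0^{0}$ convention applies only to the subspaces with $\zeta(T)=1$.

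The crux---and the step I expect to require the most care---is the reparametrization through $\lambda=\epsilon^{n}$. A naive density argument fails because, for the fixed blocklength $n$ in the statement, the realizable vectors form a discrete set that is not dense in the simplex; decoupling $\lambda$ from the blocklength by allowing the approximating codes to have growing blocklength $N$ while holding $\epsilon_N^{\,N}=\lambda$ is what makes dense realizable approximants available. The remaining bookkeeping---verifying the integer-coefficient expansion of the recursion, the independence of its coefficients from $(n,\epsilon,q)$, and the elementary probabilistic identities---is routine.
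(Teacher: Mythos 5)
Your proof is correct, but it reaches the conclusion by a genuinely different route than the paper. Both arguments share the same core observation: the recursion \eqref{eqn:def_psi} is exactly the inclusion--exclusion over the subspace lattice for the disjoint ``exactly spans'' events, so $\psi$ is nonnegative whenever one can exhibit a stochastic system in which it is a literal probability. Where you differ is in how the non-realizable $q$ are handled. You prove the realizable case inside the actual erasure-channel model and then pass to the limit, using two devices: density of the realizable vectors as the blocklength grows, and the reparametrization $\lambda=\epsilon^{n}$, which decouples the blocklength of the approximating codes from the erasure probability so that the fixed continuous function $F_S(\lambda,q)$ can be evaluated along realizable approximants. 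The paper instead removes the realizability hypothesis in one step: it defines an auxiliary system of $2^{\kappa}$ \emph{independent} Bernoulli variables, one per vector of $W$, in which $\nu(i)$ is ``absent'' with probability $\epsilon^{n q_i}$ (a valid probability for any real $q_i\geq 0$), shows that the probability of all vectors outside $S$ being absent is exactly $\phi(S)$, and concludes that the ``exactly spans $S$'' probability in this system satisfies the same recursion as $\psi$, hence equals it. The two constructions are intimately related: in your realizable model the indicator that at least one copy of $\nu(i)$ survives erasure is Bernoulli with failure probability $\epsilon^{n q_i}$, independent across $i$, and the span of the revealed columns equals the span of the surviving vectors --- so the paper's system is precisely the collapsed, per-vector version of yours, valid for all $q$ without any limit. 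What each approach buys: the paper's is shorter and avoids approximation entirely; yours stays within the original channel semantics and, as a useful by-product, makes explicit that $\psi(S,n,\epsilon,q)$ depends on $(n,\epsilon)$ only through $\epsilon^{n}$, including the correct right-continuity at $\epsilon=0$ under the $0^0$ convention.
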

\begin{proof}
    See Appendix \ref{Appendix:phiNonnegativityProof}. 
\end{proof}

\begin{lemma}
\label{EtaPrimeSumLemma}
The function $\eta'(a,b)$ defined by 
\begin{equation}
    \label{eqn:defEtaPrime}
    \eta'(a,b) = \binom{a}{b}_2 (-1)^{a-b} 2^{(a-b)(a-b-1)/2} 
\end{equation}
has the following properties: 
\begin{equation}
    \label{eqn:etaPrimeSum}
    \sum_{i=0}^{a}{\eta'(a,i)} = 
    \begin{cases}
        1 & \text{if } a = 0 \\
        0 & \text{otherwise} 
    \end{cases}
    \mathrm{,}
\end{equation}
\begin{equation}
    \label{eqn:etaPrimePartialSum}
    \sum_{i=b>0}^{a}{\eta'(a,i)} = 2^{a-b}\eta'(a-1,b-1) \mathrm{,}
\end{equation}
\begin{equation}
    \label{eqn:etaPrimeWeightedSum}
    \sum_{i=b}^{a}{\binom{a}{i}_2 \eta'(i,b)} = 
    \begin{cases}
        1 & \text{if } a = b \\
        0 & \text{otherwise}          
    \end{cases}
    \mathrm{,}
\end{equation}
and 
\begin{equation}
    \label{eqn:KConstantsLemmaStatement}
    \sum_{j=1}^{a}{\left( j \cdot \eta'(a,j) \right)} = \prod_{i=1}^{a-1}{(1-2^i)} \mathrm{.}
\end{equation}
\end{lemma}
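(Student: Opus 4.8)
The plan is to strip away the weight $j$ by a summation-by-parts manoeuvre that turns the weighted sum into a sum of the \emph{partial} sums of $\eta'$, which the already-established identity \eqref{eqn:etaPrimePartialSum} evaluates in closed form. Writing $j=\sum_{b=1}^{j}1$ and interchanging the order of summation over the index pairs $(j,b)$ with $1\le b\le j\le a$ gives
\begin{equation}
    \sum_{j=1}^{a}{j\,\eta'(a,j)} = \sum_{b=1}^{a}{\sum_{j=b}^{a}{\eta'(a,j)}}\mathrm{.}
\end{equation}
Every inner index satisfies $b\ge 1>0$, so \eqref{eqn:etaPrimePartialSum} applies to each inner sum and collapses it to $2^{a-b}\eta'(a-1,b-1)$. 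Reindexing with $c=b-1$ then reduces the target to the single sum
\begin{equation}
    \sum_{j=1}^{a}{j\,\eta'(a,j)} = \sum_{c=0}^{a-1}{2^{(a-1)-c}\,\eta'(a-1,c)}\mathrm{,}
\end{equation}
so it suffices to establish the auxiliary identity $T_m = \prod_{i=1}^{m}{(1-2^i)}$ for $m=a-1$, where $T_m := \sum_{c=0}^{m}{2^{m-c}\eta'(m,c)}$.

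For the auxiliary identity I would substitute the definition \eqref{eqn:defEtaPrime}, reindex by $d=m-c$, and use the symmetry $\binom{m}{m-d}_2=\binom{m}{d}_2$ of the Gaussian binomial coefficient from \eqref{eqn:GaussianBinomial}. This rewrites $T_m$ as $\sum_{d=0}^{m}{\binom{m}{d}_2(-1)^d 2^{d(d-1)/2}\,2^d}$, which is exactly the right-hand side of the finite Gaussian (Cauchy) binomial theorem specialized to base $2$,
\begin{equation}
    \prod_{k=0}^{m-1}{(1-x\,2^k)} = \sum_{d=0}^{m}{\binom{m}{d}_2 (-1)^d 2^{d(d-1)/2}\, x^d}\mathrm{,}
\end{equation}
evaluated at $x=2$. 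The product then telescopes, $\prod_{k=0}^{m-1}{(1-2^{k+1})}=\prod_{i=1}^{m}{(1-2^i)}$, which is the claim.

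Should a self-contained argument be preferred over invoking the Gaussian binomial theorem, I would prove the auxiliary identity by induction on $m$. The base case $m=0$ matches the empty product $1$ against the single term $\eta'(0,0)=1$. For the inductive step I would first derive the mixed recurrence $\eta'(m,c)=\eta'(m-1,c-1)-2^{m-1}\eta'(m-1,c)$ from the Pascal rule $\binom{m}{c}_2=2^c\binom{m-1}{c}_2+\binom{m-1}{c-1}_2$, together with the sign relation $(-1)^{m-c}=-(-1)^{m-1-c}$ and the exponent relation $2^{(m-c)(m-c-1)/2}=2^{m-c-1}\,2^{(m-1-c)(m-2-c)/2}$. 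Multiplying this recurrence by $2^{m-c}$, summing over $c$, and discarding the out-of-range terms (where the Gaussian binomials vanish) then collapses the right side to $(1-2^m)T_{m-1}$, yielding the recurrence $T_m=(1-2^m)T_{m-1}$ and hence the product.

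I expect the main obstacle to be the sign and power-of-two bookkeeping rather than anything conceptual: the reindexing in the first step is routine once \eqref{eqn:etaPrimePartialSum} is in hand, but verifying the mixed recurrence for $\eta'$ in the self-contained route is delicate, and the key technical point is the exponent identity $\binom{m-c}{2}-\binom{m-1-c}{2}=m-c-1$, which is precisely what allows the two Pascal terms to recombine into clean scalar multiples of $\eta'(m-1,\cdot)$. Once the factor $j$ has been absorbed, the remaining content is simply a base-$2$ specialization of a classical Gaussian binomial expansion.
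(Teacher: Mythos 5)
The proposal as written proves only the last of the four claims, \eqref{eqn:KConstantsLemmaStatement}, and it does so conditionally: your opening reduction treats \eqref{eqn:etaPrimePartialSum} as an ``already-established identity,'' but \eqref{eqn:etaPrimePartialSum} is itself one of the assertions of the lemma, and \eqref{eqn:etaPrimeSum} and \eqref{eqn:etaPrimeWeightedSum} are never touched. Proving the parts in order and using earlier ones to prove later ones is perfectly legitimate (the paper does exactly this), but then the earlier parts need proofs, and here they are absent. This is the genuine gap. It is a repairable one, and the repairs live inside machinery you already set up: evaluating your base-$2$ Gaussian binomial theorem at $x=1$ makes the $k=0$ factor $(1-2^0)=0$ kill the product for $a\ge 1$, which is precisely \eqref{eqn:etaPrimeSum}; the mixed recurrence $\eta'(m,c)=\eta'(m-1,c-1)-2^{m-1}\eta'(m-1,c)$ that you derive for your fallback induction is exactly what the paper uses to establish \eqref{eqn:etaPrimePartialSum} (both sides of \eqref{eqn:etaPrimePartialSum} satisfy that same recursion with the same initial values); and \eqref{eqn:etaPrimeWeightedSum} follows from \eqref{eqn:etaPrimeSum} via the factorization $\binom{a}{i}_2\binom{i}{b}_2=\binom{a}{b}_2\binom{a-b}{i-b}_2$. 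None of these steps appear in the proposal, so the lemma as a whole is not yet proven.

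The part you do prove is correct, and your route is genuinely different from (and cleaner than) the paper's. I verified the interchange $\sum_{j=1}^{a}j\,\eta'(a,j)=\sum_{b=1}^{a}\sum_{j=b}^{a}\eta'(a,j)$, the collapse of each inner sum to $2^{a-b}\eta'(a-1,b-1)$, the reindexing to $T_{a-1}=\sum_{c=0}^{a-1}2^{a-1-c}\eta'(a-1,c)$, the identification of $T_m$ with the $x=2$ specialization of $\prod_{k=0}^{m-1}(1-x2^k)=\sum_{d=0}^{m}\binom{m}{d}_2(-1)^d2^{d(d-1)/2}x^d$, and the fallback recurrence $T_m=(1-2^m)T_{m-1}$; all are sound, including the exponent bookkeeping $2^{(m-c)(m-c-1)/2}=2^{m-c-1}\,2^{(m-1-c)(m-2-c)/2}$. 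By contrast, the paper proves \eqref{eqn:KConstantsLemmaStatement} by a considerably longer uniqueness-of-recursion argument: it shows the left-hand side $g_n$ satisfies $\sum_{i=1}^{n}\binom{n}{i}_2 g_i=n$, then introduces auxiliary sequences $h_n$ and $h'_n=h_n-h_{n-1}$ built from the right-hand side and grinds out $h'_n=1$ by a separate induction before invoking uniqueness. Your summation-by-parts plus $q$-binomial-theorem argument replaces all of that with a few lines and would be a genuine simplification of the appendix --- but only once the three missing properties are supplied with proofs.
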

\begin{proof}
    See Appendix \ref{Appendix:etaPrimeProof}. 
\end{proof}

\begin{lemma}
\label{thm:exponentialMersenneLemma}
For integers $b > 0$ and $n > 0$, 
\begin{equation}
    \label{eqn:MersenneExponentialProperty}
    \sum_{i=0}^{n}{\left( 2^{bi} \prod_{j=i}^{n}{(1-2^j)} \right)} < 0 \mathrm{.}
\end{equation}
\end{lemma}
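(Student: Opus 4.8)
The plan is to collapse the alternating sum into a one-term recursion and then control the resulting sequence in two regimes. Writing $f(n) = \sum_{i=0}^{n} 2^{bi}\prod_{j=i}^{n}(1-2^j)$, I would first note that the $i=0$ term vanishes because the factor $1-2^0 = 0$ occurs in its product, and that $\prod_{j=i}^{n}(1-2^j) = (1-2^n)\prod_{j=i}^{n-1}(1-2^j)$ for every $i \le n$. Pulling out $(1-2^n)$ and isolating the $i=n$ term gives the recursion $f(n) = (1-2^n)\bigl(f(n-1) + 2^{bn}\bigr)$ with $f(0) = 0$. Since $1-2^n < 0$ for $n \ge 1$, the target inequality $f(n) < 0$ is equivalent to the running lower bound $f(n-1) > -2^{bn}$. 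Equivalently, setting $N_0 = 1$ and $N_m = 2^{bm} - (2^m-1)N_{m-1}$, one checks $f(n) = -(2^n-1)2^{b}N_{n-1}$, so the whole lemma reduces to proving $N_m > 0$ for all $m \ge 0$.

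For the first regime I would normalize $T_m = N_m / 2^{bm}$, which satisfies $T_0 = 1$ and $T_m = 1 - (2^m-1)2^{-b}T_{m-1}$. As long as $1 \le m \le b$, the multiplier $(2^m-1)2^{-b}$ is strictly below $1$, so an elementary induction yields $0 < T_m \le 1$, hence $N_m > 0$, throughout this range. This argument breaks down precisely at $m = b+1$, where the multiplier first exceeds $1$ and the crude estimate $T_{m-1} \le 1$ no longer prevents $T_m$ from turning negative.

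The regime $m > b$ is the main obstacle, and a naive two-sided induction fails there: the recursion couples an upper and a lower bound crosswise, so any initial slack between them is amplified by a factor $(2^m-1)$ at each step and diverges. Instead I would pass to the exact series form $N_m = \bigl(\prod_{j=1}^{m}(1-2^j)\bigr)\sum_{k=0}^{m} 2^{bk}/\prod_{j=1}^{k}(1-2^j)$ and invoke Euler's $q$-exponential identity (equivalently a finite $q$-binomial identity) to evaluate the full (super-exponentially convergent) series $\sum_{k=0}^{\infty} 2^{bk}/\prod_{j=1}^{k}(1-2^j) = \prod_{i=1}^{\infty}(1-2^{b-i})$. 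Because $b$ is a positive integer, the factor at $i = b$ is $1 - 2^0 = 0$, so this full series vanishes. Hence the partial sum equals the negative of its tail, and resubstituting recasts $N_m$ as the alternating series $N_m = 2^{bm}\sum_{l=1}^{\infty}(-1)^{l+1} 2^{bl}/\prod_{j=m+1}^{m+l}(2^j-1)$. For $m \ge b$ the ratio of consecutive terms is $2^{b}/(2^{m+l+1}-1) < 1$, so the terms are positive and strictly decreasing and the Leibniz criterion forces the sum to be positive, giving $N_m > 0$. The crux is exactly this vanishing-series identity: it is what converts the problematic finite alternating sum, whose term magnitudes are unimodal (so that its sign is not evident directly), into a cleanly monotone infinite tail on which the alternating series test bites.
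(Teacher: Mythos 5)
Your proposal is correct, and it takes a genuinely different route from the paper's own proof. The paper (Appendix~\ref{apx:exponentialMersenneLemmaProof}) likewise drops the $i=0$ term and factors out $(1-2^n)$, but from there it stays entirely finite and combinatorial: using the $\eta'$ machinery of Lemma~\ref{EtaPrimeSumLemma}, it shows the coefficient array of the normalized sum satisfies the same recursion as the coefficients of the polynomials $p_j(x)=x(x-2)(x-4)\cdots(x-2^{j-1})$, and thereby rewrites the normalized sum in the closed form $\sum_{j=1}^{n}\binom{n-1}{j-1}_2\, 2^b\prod_{i=1}^{j-1}(2^b-2^i)$, in which every term is nonnegative (for $j-1\geq b$ the product contains the zero factor $2^b-2^b$) and the $j=1$ term is positive. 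You instead collapse the sum into the first-order recursion $f(n)=(1-2^n)\bigl(f(n-1)+2^{bn}\bigr)$, reduce the lemma to positivity of $N_m$, and split into two regimes, resolving the hard regime $m\geq b$ by evaluating the full series through Euler's $q$-exponential identity (after converting base $2$ to base $1/2$; the series converges super-exponentially, and the product $\prod_{i\geq 1}(1-2^{b-i})$ vanishes exactly because $b$ is a positive integer), then applying the alternating-series test to the tail. I verified your key steps—the recursion, the identity $f(n)=-(2^n-1)2^b N_{n-1}$, the product representation of $N_m$, the tail rewriting, and the Leibniz bound—and the two regimes together cover all $m\geq 0$, so the argument is sound. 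As for what each buys: the paper's proof is self-contained (leaning only on its own Lemma~\ref{EtaPrimeSumLemma}) and produces an exact, manifestly nonnegative finite expansion, while yours outsources the combinatorial core to a classical $q$-series identity, is conceptually shorter, and yields quantitative tail bounds on $N_m$ as a byproduct. It is worth noting that both arguments use the integrality of $b$ in the same structural way, through a single vanishing factor ($2^b-2^i$ at $i=b$ in the paper, $1-2^{b-i}$ at $i=b$ in your infinite product), so neither extends to non-integer $b$ without modification.
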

\begin{proof}
See Appendix \ref{apx:exponentialMersenneLemmaProof}. \end{proof}

\begin{lemma}
\label{SuperexponentialLemma}
For constant $\beta \geq 1$ and integer $n > 0$, 
\begin{equation}
    \label{eqn:SuperexponentialProperty}
    \sum_{i=0}^{n}{\left( 2^i \beta^{2^i} \prod_{j=i}^{n}{(1-2^j)} \right)} < 0 \mathrm{.}
\end{equation}
\end{lemma}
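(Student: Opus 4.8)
The plan is to reduce the claim to the exponential Mersenne inequality already established in Lemma~\ref{thm:exponentialMersenneLemma}. The only obstruction to invoking that lemma directly is the superexponential factor $\beta^{2^i}$, whose exponent grows like $2^i$ rather than linearly in $i$. The idea is to \emph{linearize} this factor by expanding it through the entire function $\beta^x = e^{x\ln\beta}$, turning the single superexponential weight into a weighted superposition of the linear-exponential weights $2^{bi}$ that Lemma~\ref{thm:exponentialMersenneLemma} is designed to control.

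Concretely, writing $L = \ln\beta \ge 0$ (using $\beta \ge 1$), I would substitute
\[
\beta^{2^i} = e^{2^i L} = \sum_{b=0}^{\infty} \frac{L^b}{b!}\, 2^{bi}
\]
into the target sum. After interchanging the finite sum over $i$ with the infinite sum over $b$, the expression becomes
\[
\sum_{i=0}^{n} 2^i \beta^{2^i}\prod_{j=i}^{n}(1-2^j)
= \sum_{b=0}^{\infty}\frac{L^b}{b!}\left(\sum_{i=0}^{n} 2^{(b+1)i}\prod_{j=i}^{n}(1-2^j)\right).
\]
Each inner sum is precisely the quantity appearing in Lemma~\ref{thm:exponentialMersenneLemma} with exponent base $b+1$, a positive integer since $b\ge 0$; hence every inner sum is strictly negative. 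Because $\beta\ge 1$ forces $L\ge 0$, every coefficient $L^b/b!$ is nonnegative, and the $b=0$ coefficient equals $1$. Thus the right-hand side is a nonnegative combination—with strictly positive weight on the strictly negative $b=0$ term—of strictly negative numbers, and so is itself strictly negative, which is the claim.

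The one step requiring genuine care is the interchange of summation order, which I would justify by absolute convergence: for each fixed $i$ the series $\sum_b \frac{L^b}{b!}2^{(b+1)i}\bigl\lvert\prod_{j=i}^n(1-2^j)\bigr\rvert$ sums to $2^i\beta^{2^i}\bigl\lvert\prod_{j=i}^n(1-2^j)\bigr\rvert<\infty$, and since $i$ ranges over only finitely many values, the double series converges absolutely and Fubini's theorem for series applies. The remaining points are routine bookkeeping: confirming that $b+1$ sweeps through every integer $\ge 1$ so that Lemma~\ref{thm:exponentialMersenneLemma} applies term by term (the hypothesis $n>0$ is inherited unchanged), and observing that the $b=0$ term alone forces the strict inequality even in the boundary case $\beta=1$, where all higher-order terms vanish. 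I expect no obstacle beyond this, since the power-series device reduces the superexponential problem exactly to the already-proved linear-exponential one.
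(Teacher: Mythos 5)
Your proposal is correct and follows essentially the same route as the paper: expand $\beta^{2^i}$ (equivalently $2^i\beta^{2^i}$) as a power series in $\ln\beta$, interchange the finite sum over $i$ with the series over $b$, and apply Lemma~\ref{thm:exponentialMersenneLemma} to each inner sum with positive integer exponent base. Your treatment is in fact slightly more careful than the paper's on two minor points—the explicit Fubini justification for the interchange and the boundary case $\beta=1$, where the paper loosely calls all coefficients ``positive'' while only the leading term survives.
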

\begin{proof}
See Appendix \ref{Appendix:SuperexponentialProof}
\end{proof}

\subsection{Equivocation Loss Via Subspace Decomposition}
The functions defined in Section \ref{sec:FunctionDefinitions} have direct application to the problem of calculating a coset code's expected equivocation loss. When equivocation is computed in this way, a remarkable expression, given in Theorem \ref{thm:ExpectedEquivocationFormula}, may be derived which gives the equivocation loss only in terms of constant integer multiples of $\phi(\cdot)$ or $\Phi(\cdot)$ for each of the subspaces of $W$. In addition to providing what is in many cases a more efficient means for calculating equivocation, this expression yields theoretical insight into the properties of good coset codes and facilitates proofs of a number of analytical results. 

\begin{theorem}
\label{thm:ExpectedEquivocationFormula}
The expected equivocation loss $I(M;Z)$ for a given number $\mu > 0$ of revealed codeword bits, denoted $L(n,\mu,q)$, is equal to 
\begin{equation}
    \label{eqn:expectedEquivocationMu}
    L(n,\mu,q) = \mu - \kappa + \!\! \sum_{\delta = 1 }^{\kappa}{\left( K_{\delta} \!\!\! \sum_{S \in \Xi(W,\kappa-\delta)} \!\!\!\!\!\!  \Phi(S)\right)}\mathrm{,} 
\end{equation}
and the expected equivocation loss for a given erasure probability $\epsilon$, denoted $l(n,\epsilon,q)$, is equal to 
\begin{equation}
    \label{eqn:expectedEquivocationEpsilon}
    l(n,\epsilon,q) = n (1-\epsilon) - \kappa + \!\! \sum_{\delta = 1 }^{\kappa}{\left( K_{\delta} \!\!\! \sum_{S \in \Xi(W,\kappa-\delta)} \!\!\!\!\!\!  \phi(S)\right)}\mathrm{,} 
\end{equation}
with $K_\delta$ being a series of constants given by 
\begin{equation}
    \label{eqn:expectedEquivocationConstantsPattern}
    K_\delta = \prod_{i=1}^{\delta-1} {(1-2^{i})}\mathrm{.}
\end{equation}

\end{theorem}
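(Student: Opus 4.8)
The plan is to start from the known expectation formula \eqref{eqn:EquivocationPfister}, rewritten in terms of subspaces rather than erasure patterns, and then collect terms according to which subspace the revealed bits span. The key observation is that $\mathrm{rank}(G_{r(z)})$ equals the dimension of the subspace exactly spanned by the revealed columns. Thus I would first express the expected equivocation as a sum over subspaces $S$ of $W$, weighting each subspace by the probability $\Psi(S)$ (or $\psi(S)$) that the revealed bits exactly span $S$, and by the contribution $\mathrm{dim}(S) = \kappa - \delta$ to the rank. Concretely, starting from \eqref{eqn:EquivocationTotalPfister} and grouping the $2^n$ erasure patterns by the subspace they span, the equivocation loss becomes $\mu - \kappa + \sum_{S} \mathrm{dim}(S)\, \Psi(S)$ (in the fixed-$\mu$ case), since $H(M)-H(M\mid Z) = \lvert r \rvert - \mathrm{rank}(G_r)$ after accounting for $H(M)=k$ and the relation $\kappa = n-k$.

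Next I would substitute the recursive definition \eqref{eqn:defPsi} of $\Psi(S)$ in terms of $\Phi(\cdot)$ and invert the recursion. Because $\Psi$ is defined by Möbius-style inclusion--exclusion over the subspace lattice, each $\Phi(T)$ appears in $\Psi(S)$ with a coefficient that depends only on the dimensions $\mathrm{dim}(T)$ and $\mathrm{dim}(S)$ and on the number of intermediate subspaces, i.e. on Gaussian binomial coefficients. The goal is to show that after summing $\sum_S \mathrm{dim}(S)\,\Psi(S)$ and regrouping by $\Phi(S)$, the net coefficient on $\sum_{S \in \Xi(W,\kappa-\delta)} \Phi(S)$ collapses exactly to $K_\delta = \prod_{i=1}^{\delta-1}(1-2^i)$. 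The function $\eta'(a,b)$ of Lemma~\ref{EtaPrimeSumLemma} is precisely the inverse-coefficient bookkeeping device here: I expect $\eta'(\delta, j)$ to capture how a fixed $\Phi(S)$ with $\mathrm{dim}(S) = \kappa-\delta$ accumulates contributions from all the $\Psi(T)$ with $T \supseteq S$, and the identities \eqref{eqn:etaPrimeSum}--\eqref{eqn:etaPrimeWeightedSum} to perform the intermediate cancellations. The final collapse to the product form $K_\delta$ should follow from \eqref{eqn:KConstantsLemmaStatement}, which evaluates exactly the weighted sum $\sum_j j\cdot \eta'(a,j)$ that arises because each $\Psi(S)$ is weighted by $\mathrm{dim}(S)$.

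The main obstacle is the combinatorial bookkeeping of the double regrouping: first inverting the $\Psi$-recursion to write everything in terms of $\Phi$, then swapping the order of summation so that the outer sum runs over the fixed subspace carrying $\Phi(S)$ while the inner sum ranges over all superspaces $T \supseteq S$ of each intermediate dimension. Counting how many $T$ of a given dimension sit above a fixed $S$ of dimension $\kappa-\delta$ inside $W$ requires the Gaussian binomial coefficients $\binom{\delta}{\cdot}_2$, and aligning these counts with the definition of $\eta'$ so that the dimension-weight $\mathrm{dim}(S)$ factors cleanly is where the argument will be most delicate. Once that alignment is established, reducing the resulting expression via Lemma~\ref{EtaPrimeSumLemma} is essentially mechanical. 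The $\epsilon$-parameterized statement \eqref{eqn:expectedEquivocationEpsilon} should then follow by the identical argument with $\phi$ and $\psi$ replacing $\Phi$ and $\Psi$ throughout, using $n(1-\epsilon)$ as the expected number of revealed bits in place of the fixed $\mu$, so I would prove the $\mu$-version in full and remark that the $\epsilon$-version is obtained by the same steps.
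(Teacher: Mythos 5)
Your plan is essentially the paper's own proof: reduce the loss to the expected rank of $G_{r(z)}$, write that expectation as $\sum_S \mathrm{dim}(S)\,\Psi(S)$ over exactly-spanned subspaces, invert the $\Psi$-recursion into $\Phi$-terms with dimension-only coefficients (the paper does this by tracing recursion paths, which is the same M\"obius inversion you describe), regroup by $\Phi(S)$ using the superspace count $\binom{\kappa-d'}{d-d'}_2$, and collapse the coefficients with Lemma~\ref{EtaPrimeSumLemma} --- \eqref{eqn:etaPrimeWeightedSum} to pin down the inversion coefficients, \eqref{eqn:etaPrimeSum} to cancel the $\kappa$-proportional part, and \eqref{eqn:KConstantsLemmaStatement} to produce the product form of $K_\delta$ --- with the fixed-$\epsilon$ case handled by the identical argument with $\psi,\phi$ in place of $\Psi,\Phi$, exactly as the paper remarks.

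One slip needs fixing before the bookkeeping will close: your displayed intermediate formula $\mu - \kappa + \sum_S \mathrm{dim}(S)\,\Psi(S)$ is not the equivocation loss. Since $I(M;Z)=\lvert r\rvert - \mathrm{rank}(G_r)$ in expectation, the loss is $\mu - \sum_S \mathrm{dim}(S)\,\Psi(S)$; the version with the $-\kappa$ offset is valid only after replacing $\mathrm{dim}(S)$ by the codimension $\kappa - \mathrm{dim}(S)$, using $\sum_S \Psi(S)=1$ --- which is evidently what you intend, since you index the target coefficients by $\delta = \kappa - \mathrm{dim}(S)$. Carried through literally, your formula would yield $\mu - \sum_{\delta}\bigl( K_\delta \sum_{S \in \Xi(W,\kappa-\delta)} \Phi(S)\bigr)$ instead of \eqref{eqn:expectedEquivocationMu}; with the dimension/codimension correction (or, as the paper does it, by splitting out the $d=\kappa$ term $C(\kappa,\kappa)\Phi(W)=\kappa$ and setting $K_\delta = -C(\kappa,\kappa-\delta)$), the rest of your outline goes through.
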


\begin{proof}
Begin by observing that the equivocation loss for a given erasure pattern $r(z)$ may be derived from \eqref{eqn:EquivocationPfister} as 
\begin{equation}
    \label{eqn:equivocationLossMuPfister}
    \begin{split}
        L(n,\mu,q) &= I(M;Z) = H(M) - H(M \lvert Z = z)
        \\
        &\!\!\!\!\!\!\!\!\!\!\!\!\!\!\!\!\!\! = \mathbb{E}\left[\lvert r(z) \rvert \!-\! \mathrm{rank}(G_{r(z)})\right] = \mu \!-\! \mathbb{E}\left[ \mathrm{rank}(G_{r(z)}) \right]\mathrm{,}
    \end{split}
\end{equation}
or for a given $\epsilon$ as 
\begin{equation}
    \label{eqn:equivocationLossEpsilonPfister}
    \begin{split}
        l(n,\epsilon,q) &=  \mathbb{E}\left[\lvert r(z) \rvert - \mathrm{rank}(G_{r(z)})\right]
        \\
        &= n(1-\epsilon) - \mathbb{E}\left[ \mathrm{rank}(G_{r(z)}) \right]\mathrm{,}
    \end{split}
\end{equation}
Note that the rank of $G_{r(z)}$ is equal to the dimension of the subspace of $W$ exactly spanned by $r(z)$. Because the possibility of $r(z)$ exactly spanning a subspace $S$ is disjoint for each $S$, the expected rank of $G_{r(z)}$ may be calculated by accumulating the dimension of each $S \subseteq W$ times the probability $\Psi(S)$ or $\psi(S)$ of that subspace being exactly spanned by the revealed bits. Symbolically, 
\begin{equation}
    \label{eqn:equivocationLossMu1}
    \begin{split}
        \mathbb{E}\left[ \mathrm{rank}(G_{r(z)}) \mid \mu  \right] = \sum_{d=0}^{\kappa}{\left( \!\!\!\!\!\!\!\!\sum_{\;\;\;\;\;\;S^{\{d\}} \in \Xi(W,d)}{\!\!\!\!\!\!\!\!d \cdot \Psi(S)} \right)} \mathrm{,}
    \end{split}
\end{equation}
or 
\begin{equation}
    \label{eqn:equivocationLossEpsilon1}
    \begin{split}
        \mathbb{E}\left[ \mathrm{rank}(G_{r(z)}) \mid \epsilon \right] = \sum_{d=0}^{\kappa}{\left( \!\!\!\!\!\!\!\!\sum_{\;\;\;\;\;\;S^{\{d\}} \in \Xi(W,d)}{\!\!\!\!\!\!\!\!d \cdot \psi(S)} \right)} \mathrm{.}
    \end{split}
\end{equation}

Next, consider the expansion of the recursive formulas \eqref{eqn:defPsi} and \eqref{eqn:def_psi}. In this analysis, we will refer specifically to the case of \eqref{eqn:defPsi} which involves $\Psi(T)$ and $\Phi(T)$, but an identical analysis holds for \eqref{eqn:def_psi} using $\psi(T)$ and $\phi(T)$. This expansion begins with dimension $d$ and terminates with dimension zero, in which no further calls to $\Psi(T)$ are made. With each instance of a call to $\Psi(T)$, the expanded expression accumulates another $\Phi(T)$ term, so that the final expression consists entirely of $\Phi(T)$ terms for the various subspaces $T$ of $S$, some of which are repeated multiple times, and which may have coefficients 1 or -1. Because of the symmetry of the subspaces of $S$, all subspaces $T^{\{d'\}}$ of a particular dimension $d'$ appear the same number of times in the expanded expression. Thus, the final expansion may be expressed as 
\begin{equation}
    \label{eqn:PsiExpanded1}
    \begin{split}
    \Psi(S^{\{d\}}) = \sum_{d' = 0 }^{d}{ \left(\!\!\!\!\!\!\!\!\!\!\!\!\!\!\!\!\sum_{\;\;\;\;\;\;\;\;\;\;\;\;T^{\{d'\}} \in \Xi(S,d')} \!\!\!\!\!\!\!\!\!\!\!\!\!\! {c(d,d') \cdot \Phi(T^{\{d'\}})}\right)} \mathrm{,} 
    \end{split}
\end{equation}
where $c(d,d')$ defines a series of constants used for summing subspaces of a $d$-dimensional space. 

Calculating the $c(d,d')$ of \eqref{eqn:PsiExpanded1} involves tracing the expansion of \eqref{eqn:defPsi} to determine how many times $\Psi(T^{\{d'\}})$ is invoked for a given $T^{\{d'\}} \subseteq S^{\{d\}}$ with sign 1 and with sign -1. Equivalently, we may determine how many times $\Psi(T^{\{d'\}})$ is invoked for any $d'$-dimensional $T^{\{d'\}} \subset S^{\{d\}}$, then divide by the number $\lvert \Xi(S,d') \rvert = \binom{d}{d'}_2$ of such $d'$-dimensional subspaces. To find how many times $\Psi(T^{\{d'\}})$ is invoked for any $T^{\{d'\}} \subset S^{\{d\}}$, first observe that \eqref{eqn:defPsi} will directly (that is, without any recursive expansion) invoke $\Psi(T^{\{d'\}})$ once for each $T^{\{d'\}} \subset S^{\{d\}}$. Then, if $d > d' + 1$, $\Psi(T^{\{d'\}})$ will also be invoked indirectly at least once via a call to $\Psi(U^{\{d''\}})$ with $d > d'' > d'$ for some subspace $U^{\{d''\}} \subset S^{\{d\}}$ of which  $T^{\{d'\}}$ is a subspace. The concepts of direct and indirect invocations, along with the remainder of the calculation of $c(d,d')$, are illustrated in Fig. \ref{fig:psiFunctionRecursion}. 

\begin{figure*}
    \centering
    \begin{tikzpicture}
        [phiNode/.style={text=phiGreen, font=\fontsize{10}{10}\selectfont}, psiNode/.style={text=blue, anchor=north west, inner sep=0, font=\fontsize{10}{10}\selectfont}, pathNode/.style={anchor=base,text=orange,font=\fontsize{8}{8}\selectfont}, multNode/.style={anchor=base,text=red, font=\fontsize{8}{8}\selectfont}, subNode/.style={anchor=northeast,text=brown,font=\fontsize{7}{7}\selectfont}]
        
        \node[psiNode] (parent) at (0,0) {$\Psi(S^{\{4\}})$};
        \node[phiNode] (parentPhi) [below right=0.1cm and -0.7cm of parent] {$\Phi(S^{\{4\}})$};
        \draw [->] (parent.south)+(-0.4cm,0) [rounded corners, phiGreen] |- (parentPhi.west);
        \node[subNode] (parentSub) [below right = -0.15cm and 0.0cm of parentPhi.south east,anchor=north east] {$\theta=\{4\}\;\;\,$(1$\times$)};
        
        \node[psiNode] (2_1) at (5.6cm,-2cm) {$\Psi(S^{\{3\}})$};
        \node[phiNode] (2_1_Phi) [below right=0.1cm and -0.7cm of 2_1] {$\Phi(S^{\{3\}})$};
        \draw [->] (2_1.south)+(-0.4cm,0) [rounded corners, phiGreen] |- (2_1_Phi.west);
        \node[subNode] (2_1_Sub) [below right = -0.15cm and 0.5cm of 2_1_Phi.south east,anchor=north east] {$\theta=\{4,3\}\;\;\,$(-15$\times$)};

        \draw [->] (parent.south)+(-0.55cm,0) [blue] |- (2_1.west) node[pos=0.8, blue, font=\fontsize{8}{8}\selectfont, anchor=south]{(-)15};

        \node[psiNode] (3_1) at (2.8cm,-4cm) {$\Psi(S^{\{2\}})$};
        \node[phiNode] (3_1_Phi) [below right=0.1cm and -0.7cm of 3_1] {$\Phi(S^{\{2\}})$};
        \draw [->] (3_1.south)+(-0.4cm,0) [rounded corners, phiGreen] |- (3_1_Phi.west);
        \node[subNode] (3_1_Sub) [below right = -0.15cm and 0.5cm of 3_1_Phi.south east,anchor=north east] {$\theta=\{4,2\}\;\;\,$(-35$\times$)};

        \draw [->] (parent.south)+(-0.55cm,0) [blue] |- (3_1.west) node[pos=0.8, blue, font=\fontsize{8}{8}\selectfont, anchor=south]{(-)35};

        \node[psiNode] (3_2) at (8.4cm,-4cm) {$\Psi(S^{\{2\}})$};
        \node[phiNode] (3_2_Phi) [below right=0.1cm and -0.7cm of 3_2] {$\Phi(S^{\{2\}})$};
        \draw [->] (3_2.south)+(-0.4cm,0) [rounded corners, phiGreen] |- (3_2_Phi.west);
        \node[subNode] (3_2_Sub) [below right = -0.15cm and 0.8cm of 3_2_Phi.south east,anchor=north east] {$\theta=\{4,3,2\}\;\;\,$(105$\times$)};

        \draw [->] (2_1.south)+(-0.55cm,0) [blue] |- (3_2.west) node[pos=0.8, blue, font=\fontsize{8}{8}\selectfont, anchor=south]{(-)7};

        \node[psiNode] (4_1) at (1.1cm,-6cm) {$\Psi(S^{\{1\}})$};
        \node[phiNode] (4_1_Phi) [below right=0.1cm and -0.7cm of 4_1] {$\Phi(S^{\{1\}})$};
        \draw [->] (4_1.south)+(-0.4cm,0) [rounded corners, phiGreen] |- (4_1_Phi.west);
        \node[subNode] (4_1_Sub) [below right = -0.15cm and 0.3cm of 4_1_Phi.south east,anchor=north east] {$\theta=\{4,1\}\;\;\,$(-15$\times$)};

        \draw [->] (parent.south)+(-0.55cm,0) [blue] |- (4_1.west) node[pos=0.8, blue, font=\fontsize{8}{8}\selectfont, anchor=south]{(-)15};

        \node[psiNode] (4_2) at (3.9cm,-6cm) {$\Psi(S^{\{1\}})$};
        \node[phiNode] (4_2_Phi) [below right=0.1cm and -0.7cm of 4_2] {$\Phi(S^{\{1\}})$};
        \draw [->] (4_2.south)+(-0.4cm,0) [rounded corners, phiGreen] |- (4_2_Phi.west);
        \node[subNode] (4_2_Sub) [below right = -0.15cm and 0.6cm of 4_2_Phi.south east,anchor=north east] {$\theta=\{4,2,1\}\;\;\,$(105$\times$)};

        \draw [->] (3_1.south)+(-0.55cm,0) [blue] |- (4_2.west) node[pos=0.8, blue, font=\fontsize{8}{8}\selectfont, anchor=south]{(-)3};

        \node[psiNode] (4_3) at (6.7cm,-6cm) {$\Psi(S^{\{1\}})$};
        \node[phiNode] (4_3_Phi) [below right=0.1cm and -0.7cm of 4_3] {$\Phi(S^{\{1\}})$};
        \draw [->] (4_3.south)+(-0.4cm,0) [rounded corners, phiGreen] |- (4_3_Phi.west);
        \node[subNode] (4_3_Sub) [below right = -0.15cm and 0.6cm of 4_3_Phi.south east,anchor=north east] {$\theta=\{4,3,1\}\;\;\,$(105$\times$)};

        \draw [->] (2_1.south)+(-0.55cm,0) [blue] |- (4_3.west) node[pos=0.8, blue, font=\fontsize{8}{8}\selectfont, anchor=south]{(-)7};

        \node[psiNode] (4_4) at (9.5cm,-6cm) {$\Psi(S^{\{1\}})$};
        \node[phiNode] (4_4_Phi) [below right=0.1cm and -0.7cm of 4_4] {$\Phi(S^{\{1\}})$};
        \draw [->] (4_4.south)+(-0.4cm,0) [rounded corners, phiGreen] |- (4_4_Phi.west);
        \node[subNode] (4_4_Sub) [below right = -0.15cm and 0.9cm of 4_4_Phi.south east,anchor=north east] {$\theta=\{4,3,2,1\}\;\;\,$(-315$\times$)};

        \draw [->] (3_2.south)+(-0.55cm,0) [blue] |- (4_4.west) node[pos=0.8, blue, font=\fontsize{8}{8}\selectfont, anchor=south]{(-)3};

        \draw (-0.2,1.2) to (-0.2,-7.5);
        \draw (-1.6,0.5) to (-1.6,-7.5);
        \draw (12.5,1.2) to (12.5,-7.5);
        \draw (13.9,1.2) to (13.9,-7.5);
        \draw (15.3,1.2) to (15.3,-7.5);

        \draw (-0.2,1.2) to (15.3,1.2);
        \draw (-1.6,0.5) to (15.3,0.5);
        \draw (-1.6,-1.5) to (15.3,-1.5);
        \draw (-1.6,-3.5) to (15.3,-3.5);
        \draw (-1.6,-5.5) to (15.3,-5.5);
        \draw (-1.6,-7.5) to (15.3,-7.5);

        \draw [dashed] (-1.6,-0.5) to (12.5,-0.5);
        \draw [dashed] (-1.6,-2.5) to (12.5,-2.5);
        \draw [dashed] (-1.6,-4.5) to (12.5,-4.5);
        \draw [dashed] (-1.6,-6.5) to (12.5,-6.5);

        \node at (6.2,0.7) [anchor=base] {Function Evaluation Path};

        \node at(-0.9,0.1) [anchor=base] {$d'\!\!=\!4$};
        \node at(-0.9,-0.3) [anchor=base,color=blue, font=\fontsize{10}{10}\selectfont] {$\Psi(\cdot)$};
        \node at(-0.9,-0.9) [anchor=base,color=phiGreen, font=\fontsize{10}{10}\selectfont] {$\Phi(\cdot)$};
        \node at(-0.9,-1.3) [anchor=base,color=brown, font=\fontsize{7}{7}\selectfont] {path  (mult.)};

        \node at(-0.9,-1.9) [anchor=base] {$d'\!\!=\!3$};
        \node at(-0.9,-2.3) [anchor=base,color=blue, font=\fontsize{10}{10}\selectfont] {$\Psi(\cdot)$};
        \node at(-0.9,-2.9) [anchor=base,color=phiGreen, font=\fontsize{10}{10}\selectfont] {$\Phi(\cdot)$};
        \node at(-0.9,-3.3) [anchor=base,color=brown, font=\fontsize{7}{7}\selectfont] {path  (mult.)};

        \node at(-0.9,-3.9) [anchor=base] {$d'\!\!=\!2$};
        \node at(-0.9,-4.3) [anchor=base,color=blue, font=\fontsize{10}{10}\selectfont] {$\Psi(\cdot)$};
        \node at(-0.9,-4.9) [anchor=base,color=phiGreen, font=\fontsize{10}{10}\selectfont] {$\Phi(\cdot)$};
        \node at(-0.9,-5.3) [anchor=base,color=brown, font=\fontsize{7}{7}\selectfont] {path  (mult.)};

        \node at(-0.9,-5.9) [anchor=base] {$d'\!\!=\!1$};
        \node at(-0.9,-6.3) [anchor=base,color=blue, font=\fontsize{10}{10}\selectfont] {$\Psi(\cdot)$};
        \node at(-0.9,-6.9) [anchor=base,color=phiGreen, font=\fontsize{10}{10}\selectfont] {$\Phi(\cdot)$};
        \node at(-0.9,-7.3) [anchor=base,color=brown, font=\fontsize{7}{7}\selectfont] {path  (mult.)};

        \node at(13.2,0.7) [anchor=base] {$\eta(4,d')$};
        \node at(14.6,0.7) [anchor=base] {$c(4,d')$};

        \node at(13.2,-0.5) [anchor=mid] {1};
        \node at(14.6,-0.5) [anchor=mid] {1};
        \node at(13.2,-2.5) [anchor=mid] {-15};
        \node at(14.6,-2.5) [anchor=mid] {-1};
        \node at(13.2,-4.5) [anchor=mid] {70};
        \node at(14.6,-4.5) [anchor=mid] {2};
        \node at(13.2,-6.5) [anchor=mid] {-120};
        \node at(14.6,-6.5) [anchor=mid] {-8};


    \end{tikzpicture}
    \caption{Calculation of $c(d,d')$ for $d=4$ and $d' \in [\![ 0,4]\!]$ by computing the function $\eta(d,d')$ via expansion of $\Psi(S)$ through the recursion paths $\theta \in \Theta_{d,d'}$.}
    \label{fig:psiFunctionRecursion}
\end{figure*}

For each instance of a call to $\Phi(T^{\{d'\}})$ for a $d'$-dimensional subspace $T^{\{d'\}}$ in the final expansion of \eqref{eqn:defPsi}, a ``path'' $\theta$ may be specified to indicate how that instance was obtained in the expansion of \eqref{eqn:defPsi}. The path $\theta$ is represented as an ordered (decreasing) set of integers $\theta_i \in [\![d' , d]\!]$, $\theta_{i+1} < \theta_i$, which indicates the dimension $\theta_i$ of each call to $\Psi(T^{\{\theta_i\}})$, starting with $\theta_1 = d$ and ending with $\theta_{|\theta|} = d'$. Next, let $\Theta_{d,d'}$ represent the set of all valid paths which start with $d$ and end with $d'$. (Note that if $d=d'$, there is exactly one path $\theta$, of size one, in $\Theta_{d,d'}$. Also note that if $d=d'+1$, there is exactly one path $\theta$, of size two, in $\Theta_{d,d'}$.) For each path $\theta \in \Theta_{d,d'}$, the number of times $\Phi(T^{\{d'\}})$ is invoked for any subspace $T^{\{d'\}} \in \Xi(S^{\{d\}},d')$ is equal to the product of $\binom{\theta_{i-1}}{\theta_i}_2$ for each $1 < i \leq |\theta|$. Because the sign of summation alternates with each successive recursive call to $\Psi(U)$, the sign of any given $\Phi(T^{\{d'\}})$ encountered via path $\theta$ during the expansion of \eqref{eqn:defPsi} is positive if $|\theta|$ is even and negative if $|\theta|$ is odd. It is useful to represent the product over such a path set by a function $\eta(d,d')$ defined as 
\begin{equation}
    \label{eqn:defEta}
    \eta(d,d') = \sum_{\theta \in \Theta_{d,d'}}(-1)^{|\theta|+1} \cdot \left(\prod_{j=2}^{|\theta|}{\binom{\theta_{j-1}}{\theta_j}_2}\right)    \mathrm{.} 
\end{equation}
This function $\eta(d,d')$ has some interesting properties the follow from its definition and which are useful to our analysis. First note that if $d \neq d'$, then there are multiple elements in each path in $\Theta(d,d')$, and therefore $\eta(d,d')$ may be expressed as a summation over the possible values of the second path element as  
\begin{equation}
    \label{eqn:def_eta}
    \begin{split}
    \eta(d,d') = 
    \begin{cases}
        1 & \text{if } d = d' \\
         -\sum\limits_{i=d'}^{d-1} {\binom{d}{i}_2 \eta(i,d')} \mathrm{,} & \text{otherwise }     \mathrm{.} 
    \end{cases}
    \end{split}
\end{equation}
Indeed, this property is sufficient to fully define $\eta(d,d')$, as for any $d > d'$, $\eta(d,d')$ may be determined from the values of $\eta(d-1,d'), \eta(d-2,d'), \dots ,\eta(d',d')$. This property may also be restated by subtracting $\eta(d,d')$ as 
\begin{equation}
    \label{eqn:etaSplitProperty}
    \begin{split}
    \sum\limits_{i=d'}^{d}{\binom{d}{i}_2 \eta(i,d')} = 
    \begin{cases}
        1 & \text{if } d = d' \\
        0 & \text{otherwise} 
        \mathrm{.} 
    \end{cases}
    \end{split}
\end{equation}
This property is identical to the property \eqref{eqn:etaPrimeWeightedSum} of the $\eta'(d,d')$ function defined in \eqref{eqn:defEtaPrime} of Lemma \ref{EtaPrimeSumLemma}. Because these functions have the same fully-defining property, the functions must be identical.  

Then, because $\eta(d,d')$ represents the number of times $\Psi(T^{\{d'\}})$ is invoked for any $d'$-dimensional subspace $T^{\{d'\}} \subset S^{\{d\}}$, the $c(d,d')$ of \eqref{eqn:PsiExpanded1} are given by 
\begin{equation}
    \label{eqn:def_c_constants}
    c(d,d') = \eta(d,d')/\binom{d}{d'}_2 = (-1)^{d-d'}2^{(d-d')(d-d'-1)/2} \mathrm{.}
\end{equation}

Next, we return to the expression for the expected rank of $G_{r(z)}$. Combining \eqref{eqn:equivocationLossMu1} with \eqref{eqn:PsiExpanded1}, it is clear that the expected rank of $G_{r(z)}$ will also be expressible as a sum of $\Phi(T)$ terms for every subspace $T$ of $W$, with the coefficient of each term depending only on the dimension of $T$. Then (exchanging the subspace variable $T$ in favor of $S$, as the subspace previously represented by $S$ is no longer used) we have  
\begin{equation}
    \label{eqn:equivocationLossMu2}
    \begin{split}
        \mathbb{E}\left[ \mathrm{rank}(G_{r(z)}) \mid \mu  \right] = \sum_{d=0}^{\kappa}{\left( \!\!\!\!\!\!\!\!\sum_{\;\;\;\;\;\;S^{\{d\}} \in \Xi(W,d)}{\!\!\!\!\!\!\!\!C(\kappa,d) \cdot \Phi(S)} \right)} \mathrm{,}
    \end{split}
\end{equation}
with $C(\kappa,d)$ being a series of constants used for summing subspaces of a $\kappa$-dimensional coset code. 

To find the values of the $C(\kappa,d)$, consider the summations in \eqref{eqn:equivocationLossMu1} and \eqref{eqn:PsiExpanded1} as applied to a given subspace $T^{\{d'\}}$. For a given $d \geq d'$ in \eqref{eqn:equivocationLossMu1}, a term of $c(d,d') \cdot \Phi(T^{\{d'\}})$ will be accumulated for every $d$-dimensional space $S^{\{d\}}$ of which $T^{\{d'\}}$ is a subspace. The number of $d$-dimensional subspaces $S^{\{d\}}$ of $W^{\{\kappa\}}$ which are superspaces of a $d'$-dimensional space $T^{\{d'\}}$ is given by 
\begin{equation}
    \label{eqn:numberOfSuperspaces}
    \lvert \{S: S^{\{d\}} \!\! \subseteq \! W^{\{\kappa\}} , T^{\{d'\}} \!\! \subseteq \! S^{\{d\}}\} \rvert = \frac{\binom{\kappa}{d}_2 \binom{d}{d'}_2}{\binom{\kappa}{d'}_2} = \!\binom{\kappa\!-\!d'}{d\!-\!d'}_2 \!\mathrm{.}
\end{equation}
Using this value with \eqref{eqn:PsiExpanded1} and summing over $d = [\![d',\kappa]\!]$ from \eqref{eqn:equivocationLossMu1} (and substituting the summation variables $d \to i$ and $d' \to d$) yields  
\begin{equation}
    \label{eqn:defCPreliminary1}
    \begin{split}
        C(\kappa,d) = \sum_{i=d}^{\kappa}{\left( i \binom{\kappa-d}{i-d}_2 c(i,d) \right)}\mathrm{,}
    \end{split}
\end{equation}
and substituting in \eqref{eqn:def_c_constants} yields 
\begin{equation}
    \label{eqn:defCPreliminary2}
    \begin{split}
        C(\kappa,d) &= \sum_{i=d}^{\kappa}{\left( i \binom{\kappa-d}{i-d}_2 (-1)^{i-d} 2^{(i-d)(i-d-1)/2} \right)} \mathrm{.}
    \end{split}
\end{equation}

Considering the case of $d = \kappa$, three observations can aid the analysis. First, from this equation \eqref{eqn:defCPreliminary2}, it is easy to show that $C(\kappa,\kappa) = \kappa$. Second, the only dimension-$\kappa$ subspace of the global space $W$ is $W$ itself. That is, $\Xi(W^{\{\kappa\}},d = \kappa) = \{W\}$. Third, it is clear that $\Phi(W) = 1$. Combining these observations, the $d = \kappa$ case may be split out of \eqref{eqn:equivocationLossMu1} to yield 
\begin{equation}
    \label{eqn:equivocationLossMu3}
    \begin{split}
        \mathbb{E}\left[ \mathrm{rank}(G_{r(z)}) \mid \mu  \right] = \kappa  + \! \sum_{d=0}^{\kappa-1}{\!\left( \!\!\!\!\!\!\!\!\!\!\sum_{\;\;\;\;\;\;\;\;S^{\{d\}} \in \Xi(W,d)}{\!\!\!\!\!\!\!\!\!\!\!\!C(\kappa,d) \! \cdot \! \Phi(S)} \right)} \mathrm{.}
    \end{split}
\end{equation}

Next, we define the constants $K_{\delta}$ for $\delta \geq 1$ based on the $C(\kappa,d)$ as follows: 
\begin{equation}
    \label{eqn:defKBasedOnC}
    K_{\delta} = -C(\kappa,\kappa - \delta) \mathrm{.} 
\end{equation}
Then using \eqref{eqn:defCPreliminary2}, substituting $j=\kappa-i$, and utilizing \eqref{eqn:etaPrimeSum} from Lemma \ref{EtaPrimeSumLemma}, we obtain \begin{equation}
    \label{eqn:defKPreliminary1}
    \begin{split}
        K_{\delta} &= - \!\!\!\! \sum_{i=\kappa - \delta}^{\kappa}{\!\! \left( i \binom{\delta}{i \!-\! \kappa \!+\! \delta}_{\!2} \!(-1)^{i - \kappa + \delta} 2^{(i - \kappa + \delta)(i - \kappa + \delta-1)/2} \right)} \\
        &= - \sum_{j=0}^{\delta}{ \left( (\kappa -  j) \binom{\delta}{j}_{\!2} \!(-1)^{\delta - j} 2^{(\delta - j)(\delta - j-1)/2} \right)} \\
        &= - \sum_{j=0}^{\delta}{ \left( \kappa \cdot \eta(\delta,j) \right)} + \sum_{j=0}^{\delta}{ \left( j \cdot \eta(\delta,j) \right)} \\
        &= \sum_{j=0}^{\delta}{ \left( j \cdot \eta(\delta,j) \right)}
        \mathrm{.}
    \end{split}
\end{equation}

At this point, the expression for $K_\delta$ has no dependence on $\kappa$- that is, it is a universal sequence of constants. Additionally, substituting the relation \eqref{eqn:defKBasedOnC} into \eqref{eqn:equivocationLossMu3} with the additional substitution $d = \kappa - \delta$ yields 
\begin{equation}
    \label{eqn:equivocationLossMu4}
    \begin{split}
        \mathbb{E}\left[ \mathrm{rank}(G_{r(z)}) \mid \mu  \right] = \kappa -\sum_{\delta=1}^{\kappa}{\left(K_{\delta} \!\!\!\!\!\!\!\! \sum_{\;\;\;\;\;\;S \in \Xi(W,\kappa - \delta)}{\!\!\!\!\!\!\!\! \Phi(S)} \right)} \mathrm{,}
    \end{split}
\end{equation}
and substituting this into \eqref{eqn:equivocationLossMuPfister} yields exactly \eqref{eqn:expectedEquivocationMu}, as required. As mentioned previously, an identical analysis may be performed using $\psi(S)$ and $\phi(S)$ to yield \eqref{eqn:expectedEquivocationEpsilon}. All that remains, then, is to show that \eqref{eqn:expectedEquivocationConstantsPattern} gives correct values for the $K_\delta$ in \eqref{eqn:defKPreliminary1}, which is established in \eqref{eqn:KConstantsLemmaStatement} of Lemma \ref{EtaPrimeSumLemma}. 
\end{proof}

\subsection{Complexity of Equivocation Computation}
\label{sec:ComplexityAnalysis1}
The computation of $l(\epsilon,n,q)$ requires the evaluation of $\phi(S)$ for each subspace $S$ of $W$. The evaluation of $\phi(S^{\{d\}})$ requires $\mathcal{O}(1)$ floating point operations plus the evaluation of $\zeta(S^{\{d\}})$, which in turn requires $2^d$ floating point addition operations. Then for each dimension $d \in [\![0 , \kappa]\!]$, a total of $2^d \binom{\kappa}{d}_2$ operations are required. A well-known result of Euler~\cite{Bell2005EulerTheoremSummary} states that a product of consecutive Mersenne numbers approximates a corresponding product of consecutive powers of two to within a constant factor. Thus, using Bachmann-Landau notation, 
\begin{equation}
    \label{eqn:MersenneConstantBound}
    \prod_{i=a>0}^{b}{(2^i-1)} = \Theta\left(\prod_{i=a}^{b}{2^i}\right) = \Theta \left( 2^{(a+b)(a-b+1)/2} \right) \mathrm{.} 
\end{equation}
Because a Gaussian binomial is a quotient of two such Mersenne products, its order is given by 
\begin{equation}
    \label{eqn:GaussianBinomailOrder}
    \begin{split}
    \binom{\kappa}{d}_2 &= \frac{\prod_{i=\kappa-d+1}^{\kappa}{(2^i-1)}}{\prod_{i=1}^{d}{(2^i-1)}} = \Theta\left(\frac{2^{(2\kappa-d+1)(d)/2}}{2^{(d+1)(d)/2}} \right) = \Theta \left( 2^{(\kappa-d)(d)} \right) \mathrm{.} 
    \end{split}
\end{equation}
The number of operations required for a given dimension $d$ is then $\Theta(2^{(\kappa-d+1)d})$, which is maximized for odd $\kappa$ at a value of $d_{\mathrm{max}}=\frac{\kappa+1}{2}$ and for even $\kappa$ at $d_{\mathrm{max}} = \frac{\kappa}{2}$. In both cases, this gives a complexity of $\Theta(2^{(\kappa^2 + 2\kappa)/4})$. For any $d \neq d_{\mathrm{max}}$, the complexity falls exponentially in $(d-d_{\mathrm{max}})^2$, so the non-maximal $d$ terms do not contribute to the asymptotic complexity. Thus, the total complexity of the calculation of $l(\epsilon,n,q)$ is $\mathcal{O}(2^{(\kappa^2 + 2\kappa)/4})$ floating point operations. 

Recall that the complexity of calculating equivocation loss by enumerating all possible revealed bit patterns as described in Section \ref{sec:messageEquivocationDefinition} is $\mathcal{O}(n^2 2^n)$ operations in $\mathds{F}_2^{\kappa}$. Then if we assume that one operation in $\mathds{F}_2^{\kappa}$ is $\mathcal{O}(1)$ regardless of $\kappa$ (which is a reasonable assumption within the limits of practical computation on modern hardware), the subspace decomposition method provides an improvement provided that $n \gtrapprox (\kappa^2 + 2\kappa)/4$. 

\subsection{All-Zero Column Exclusion}
We now present several results relating to code optimality which are made possible by the continuous
nature of the functions $L(n,\mu,q)$ and $l(n,\epsilon,q)$ given by \eqref{eqn:expectedEquivocationMu} and \eqref{eqn:expectedEquivocationEpsilon}, respectively. The first result relates to the presence of the all-zero column in a coset code's generator matrix. 

\begin{theorem}
    \label{thm:zeroColumn}
    Any code definition $q$ satisfying the nonnegativity \eqref{eqn:QConstraintPositive} and unit sum \eqref{eqn:QConstraintTotal} constraints can be locally optimal in terms of expected equivocation loss $l(n,\epsilon,q)$ only if $q_0 = 0$. 
\end{theorem}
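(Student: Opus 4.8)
The plan is to prove the contrapositive: if $q_0 > 0$, then $q$ is not a local minimizer of the leakage $l(n,\epsilon,q)$ (minimizing leakage being the sense in which a secrecy code is optimal). Since $q_0>0$, the direction $v = e_j - e_0$, which lowers the all-zero-column weight $q_0$ and raises the weight $q_j$ of some fixed $\nu(j)\neq 0$, is feasible for small $t>0$: it preserves the unit-sum constraint \eqref{eqn:QConstraintTotal} and respects nonnegativity \eqref{eqn:QConstraintPositive}. I would show that the one-sided directional derivative of $l$ along $v$ is strictly negative, which exhibits a feasible descent direction and hence rules out local optimality. Throughout I take $0<\epsilon<1$; the boundary case $\epsilon=0$ is degenerate and handled separately.

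The first key observation is that $\nu(0)=0$ lies in every subspace $S$, so under $v$ the overlap $\zeta(S)$ of \eqref{eqn:defZeta} is unchanged whenever $\nu(j)\in S$ and decreases at unit rate whenever $\nu(j)\notin S$. Consequently, differentiating $\phi(S)=\epsilon^{n(1-\zeta(S))}$ from \eqref{eqn:def_phi}, only the subspaces missing $\nu(j)$ contribute, and each such $\phi(S)$ strictly decreases. Applying this to \eqref{eqn:expectedEquivocationEpsilon} gives $\left.\tfrac{d}{dt}l\right|_{t=0^+} = n\ln\epsilon\cdot A$, where $A=\sum_{\delta=1}^{\kappa}K_\delta\sum_{S\in\Xi(W,\kappa-\delta),\,\nu(j)\notin S}\phi(S)$. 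Because $n\ln\epsilon<0$, it remains only to prove the single inequality $A>0$. It is worth cautioning that the naive ``converting a zero column to a nonzero one can only raise the rank'' intuition does correctly motivate the sign, but the discrete slope it produces is \emph{not} the derivative of the continuous extension, so the honest analytic computation above is what must be controlled.

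The obstacle is that the constants $K_\delta$ of \eqref{eqn:expectedEquivocationConstantsPattern} alternate in sign, so $A$ is not termwise positive. To resolve this I would pass from $\phi$ to $\psi$ using the lattice relation $\phi(S)=\sum_{T\subseteq S}\psi(T)$ implicit in \eqref{eqn:def_psi}, interchange the order of summation, and count, for a fixed $T$ with $\nu(j)\notin T$, the $(\kappa-\delta)$-dimensional superspaces of $T$ that still avoid $\nu(j)$; by \eqref{eqn:numberOfSuperspaces} this count is $\binom{\kappa-t}{\kappa-\delta-t}_2-\binom{\kappa-t-1}{\kappa-\delta-t-1}_2$, where $t=\dim T$. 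This rewrites $A=\sum_{T:\,\nu(j)\notin T}\psi(T)\,I(\dim T)$, in which $I(t)$ is a purely combinatorial sum in the $K_\delta$ and Gaussian binomials that no longer depends on $q$.

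Everything then hinges on showing $I(t)>0$; small cases suggest the clean evaluation $I(t)=\prod_{i=1}^{\kappa-t-1}(2^i-1)$, which I would establish from the Gaussian-binomial identities of Lemma \ref{EtaPrimeSumLemma} (with the Mersenne positivity of Lemma \ref{thm:exponentialMersenneLemma} as a fallback if only the sign is needed). With $I>0$ in hand, the nonnegativity of every $\psi(T)$ from Lemma \ref{PhiNonnegativityLemma}, together with the strictly positive trivial-subspace term $\psi(\{\nu(0)\})=\epsilon^{n(1-q_0)}>0$ (which is counted because $\nu(j)\neq0$ keeps $\{\nu(0)\}$ in the sum), forces $A>0$ and therefore $\left.\tfrac{d}{dt}l\right|_{t=0^+}<0$. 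This strict decrease contradicts local optimality whenever $q_0>0$, proving the claim. I expect the combinatorial evaluation of $I(t)$ to be the main difficulty, as it is precisely the point at which the alternating signs of the $K_\delta$ must be tamed.
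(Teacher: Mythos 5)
Your route is genuinely different from the paper's, and its skeleton is sound. The paper's proof (Appendix E) perturbs along $\grave{q}=q-e_0$, i.e.\ it reallocates the zero-column mass onto \emph{all} columns proportionally, decomposes the loss into the cumulative rank-distribution functions $\omega(d)=\sum_{d'\le d}\sum_{S\in\Xi(W,d')}\psi(S)$, and shows each derivative satisfies $\grave{\omega}(d)=n\ln(\epsilon)\sum_{T\in\Xi(W,d)}(1-\zeta(T))\psi(T)\le 0$, with strict negativity at $d=0$. You instead perturb along the single swap $e_j-e_0$; your derivative computation $\left.\tfrac{d}{dt}l\right|_{t=0^+}=n\ln\epsilon\cdot A$ with $A=\sum_{\delta}K_\delta\sum_{S\in\Xi(W,\kappa-\delta),\,\nu(j)\notin S}\phi(S)$ is correct, and so is the inversion to the $\psi$-basis with the superspace count $\binom{\kappa-t}{\kappa-\delta-t}_2-\binom{\kappa-t-1}{\kappa-\delta-t-1}_2$ from \eqref{eqn:numberOfSuperspaces}. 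Both proofs bottom out in Lemma~\ref{PhiNonnegativityLemma} and the sign of $\ln\epsilon$; yours has the advantage that the coefficient $I(\dim T)$ is independent of $q$.

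The gap is in the one formula you commit to: the conjectured evaluation $I(t)=\prod_{i=1}^{\kappa-t-1}(2^i-1)$ is false. Writing $m=\kappa-t$ and using symmetry of Gaussian binomials, $I(t)=\sum_{\delta\ge1}K_\delta\bigl[\binom{m}{\delta}_2-\binom{m-1}{\delta}_2\bigr]$; for $m=3$ this is $(7-3)-(7-1)+3\cdot(1-0)=1$, not $3$. The correct closed form is $I(t)\equiv 1$ for every $t$, and your formula agrees with it only for $m\le 2$, which is presumably why the small cases misled you. The suggested fallback through Lemma~\ref{thm:exponentialMersenneLemma} also does not apply: that lemma concerns sums weighted by the upper-tail products $\prod_{j=i}^{n}(1-2^j)$, whereas $I(t)$ involves the lower products $K_\delta=\prod_{i=1}^{\delta-1}(1-2^i)$ weighted by Gaussian binomials. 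The repair is short, however: the $q$-Pascal identity $\binom{m}{\delta}_2-\binom{m-1}{\delta}_2=2^{m-\delta}\binom{m-1}{\delta-1}_2$ gives $I(t)=\sum_{j\ge0}\binom{m-1}{j}_2\,2^{m-1-j}K_{j+1}$, and applying the other Pascal identity $\binom{a}{j}_2=\binom{a-1}{j-1}_2+2^j\binom{a-1}{j}_2$ to this sum shows it is independent of $m$ (the cross terms cancel), hence equal to its value $1$ at $m=1$. With $I(t)=1$ in hand, your conclusion $A\ge\psi(\{\nu(0)\})=\epsilon^{n(1-q_0)}>0$ stands, and the resulting expression $\tfrac{d}{dt}l=n\ln\epsilon\sum_{T:\,\nu(j)\notin T}\psi(T)$ is, if anything, cleaner than the paper's.
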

\begin{proof}
See Appendix \ref{Appendix:zcDerivative}. 
\end{proof}

\begin{corollary}
    \label{thm:zeroColumnCorollary}
    If a code defined by $q$ is locally optimal in terms of equivocation loss $l(n,\epsilon,q)$ with respect to all the elements $q_{i}$ of $q$ except $q_0$ and subject to nonnegativity constraint \eqref{eqn:QConstraintPositive} and unit sum constraint \eqref{eqn:QConstraintTotal}, and if $q_0 = 0$, then $q$ is locally optimal with respect to all elements $q_i$ of $q$ subject to \eqref{eqn:QConstraintPositive} and \eqref{eqn:QConstraintTotal}.     
\end{corollary}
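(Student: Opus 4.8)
The plan is to reduce the whole statement to a single monotonicity fact by exploiting that the all-zero vector $\nu(0)$ lies in \emph{every} subspace $S$, so that $q_0$ enters $\zeta(S,q)$ identically across all $S$. Let $q^\star$ denote the candidate optimum of the statement (with $q_0^\star=0$), and let $q$ range over feasible points near $q^\star$. Write $\bar q$ for the renormalization of $q$ obtained by deleting the $q_0$ entry and scaling the rest by $1/(1-q_0)$, so that $\bar q$ is a valid code definition living on the face $q_0=0$. Because $\nu(0)\in S$ for every $S$, \eqref{eqn:defZeta} gives $\zeta(S,q)=q_0+(1-q_0)\,\zeta(S,\bar q)$, and hence by \eqref{eqn:def_phi} each factor becomes $\phi(S,n,\epsilon,q)=\epsilon^{n(1-q_0)(1-\zeta(S,\bar q))}=\phi\big(S,\,n(1-q_0),\epsilon,\bar q\big)$. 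Substituting into \eqref{eqn:expectedEquivocationEpsilon} and collecting the constant terms, I would establish the exact identity
\begin{equation}
l(n,\epsilon,q)=n q_0 (1-\epsilon)+l\big(n(1-q_0),\epsilon,\bar q\big),
\end{equation}
which holds for the continuous functions with no realizability assumption. Intuitively it merely separates the leakage into the contribution of the $n q_0$ revealed all-zero columns (which inflate $|r(z)|$ but never the rank) and the leakage of the residual code $\bar q$ on its $n(1-q_0)$ nonzero columns.

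With this identity the argument closes in two moves. First, for feasible $q$ near $q^\star$ the renormalized $\bar q$ is a face point converging to $q^\star$ as $q\to q^\star$, so the hypothesized local optimality with respect to $\{q_i\}_{i\ge 1}$ under \eqref{eqn:QConstraintPositive}--\eqref{eqn:QConstraintTotal} yields $l(n,\epsilon,\bar q)\ge l(n,\epsilon,q^\star)$ once $q$ is close enough. Second, I would prove the blocklength bound
\begin{equation}
l(n,\epsilon,\bar q)-l\big(n(1-q_0),\epsilon,\bar q\big)\le n q_0 (1-\epsilon),
\end{equation}
and combine the two: $l(n,\epsilon,q)=n q_0(1-\epsilon)+l(n(1-q_0),\epsilon,\bar q)\ge l(n,\epsilon,\bar q)\ge l(n,\epsilon,q^\star)$, which is exactly local optimality over the full simplex (the pure-face case $q_0=0$ being immediate from the hypothesis).

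The main obstacle is the blocklength bound, which is equivalent to the statement that the expected rank $\sum_{d}\sum_{S\in\Xi(W,d)} d\,\psi(S)$ is non-decreasing in the (continuous) blocklength $N$ for fixed $\bar q$; writing $l=N(1-\epsilon)-\kappa+\sum_\delta K_\delta\sum_S\phi(S)$, this is $\tfrac{d}{dN}\sum_\delta K_\delta\sum_S\phi(S)\le 0$. For realizable integer configurations it is transparent by a coupling: enlarging the blocklength appends independently-erased columns, rank cannot decrease when columns are adjoined, and $|r(z)|$ grows by the newly revealed count (expectation $nq_0(1-\epsilon)$), so the leakage rises by at most $nq_0(1-\epsilon)$. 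The delicate point is transferring this to the analytic extension, where $\phi(S)$ and $\psi(S)$ may individually be negative; I would handle it either by continuity/analyticity together with density of realizable parameters, or by directly signing the alternating sum $\sum_\delta K_\delta\sum_S(1-\zeta(S,\bar q))\,\epsilon^{N(1-\zeta(S,\bar q))}\ge 0$ via the Mersenne-product estimates of Lemmas \ref{thm:exponentialMersenneLemma} and \ref{SuperexponentialLemma}. A purely local alternative is to read off from the proof of Theorem \ref{thm:zeroColumn} the strict first-order gap $\partial l/\partial q_0>\partial l/\partial q_j$ at $q^\star$ (for $j$ in the support) and combine it with the face stationarity conditions through a Karush--Kuhn--Tucker argument, noting that every feasible direction leaving the face is then a strict ascent direction, so the face hypothesis suffices to rule out descent along the remaining directions.
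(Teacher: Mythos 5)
Your core identity is correct and is in fact equivalent to the mechanism the paper itself relies on: since $\nu(0)$ lies in every subspace, $\zeta(S,q)=q_0+(1-q_0)\zeta(S,\bar q)$, hence $\phi(S,n,\epsilon,q)=\phi(S,n(1-q_0),\epsilon,\bar q)$, and $l(n,\epsilon,q)=nq_0(1-\epsilon)+l(n(1-q_0),\epsilon,\bar q)$ follows from \eqref{eqn:expectedEquivocationEpsilon}. Note that the paper states this corollary with no separate proof: it is meant to fall out of Appendix \ref{Appendix:zcDerivative}, which proves the \emph{stronger} fact that at every feasible point with $q_0<1$ the derivative of $l$ along the rescaling direction $\grave{q}$ of \eqref{eqn:zc_defQGrave} is strictly negative. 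Your renormalization $\bar q$ is invariant along the ray $q+x\grave{q}$ (the nonzero entries all scale by $1+x$), and the only quantity that changes along the ray is the effective blocklength $n(1-(q+x\grave{q})_0)$; consequently your ``blocklength bound'' is nothing other than the inequality $\grave{l}\le 0$ from that appendix, integrated along the ray from $q$ to $\bar q$. So the step you single out as the main obstacle needs no new analysis at all, and your third, ``purely local,'' alternative---reading the strict gap off the proof of Theorem \ref{thm:zeroColumn} and combining it with face stationarity---is essentially the paper's intended argument (with the caveat that the proof gives the weighted-average gap $\partial l/\partial q_0>\sum_j q^\star_j\,\partial l/\partial q_j$, from which the pairwise gaps follow only after invoking the face first-order conditions, as you indicate).

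The two other routes you offer for the blocklength bound have genuine defects, so they should not carry the proof. The coupling-plus-density route fails as stated: realizability forces the blocklength to be an integer, since summing $N\bar q_i\in\mathbb{N}$ over $i$ gives $N=\sum_i N\bar q_i\in\mathbb{N}$, so realizable pairs $(N,\bar q)$ are \emph{not} dense in the continuum of real blocklengths, while the corollary needs the bound exactly at the non-integer values $N=n(1-q_0)$ for arbitrary small real $q_0$. (It can be repaired by observing that $N$ and $\epsilon$ enter $\phi$ only through $\epsilon^N$, so integer-$N$ monotonicity for all $\epsilon\in(0,1)$ determines monotonicity in the continuous variable $t=\epsilon^N$; but you did not state this, and without it ``density of realizable parameters'' is false.) The direct-signing route cites the wrong tools: Lemmas \ref{thm:exponentialMersenneLemma} and \ref{SuperexponentialLemma} apply only after the sum over subspaces collapses to a sum over dimensions, which happens because $\zeta(S)$ takes the uniform values $(2^d-1)/(2^\kappa-1)$ (or the subspace-exclusion values) in Theorems \ref{thm:uvf} and \ref{thm:sec_localOptimality1}; for general $\bar q$ the $\zeta(S,\bar q)$ are arbitrary and those lemmas give nothing. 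The correct instrument for signing $\sum_\delta K_\delta\sum_S(1-\zeta(S,\bar q))\phi(S)$ is the $\psi$-decomposition culminating in \eqref{zc_omegaFinal} together with the nonnegativity of $\psi$ from Lemma \ref{PhiNonnegativityLemma}---which is, once again, precisely the proof of Theorem \ref{thm:zeroColumn}.
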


Throughout the remainder of this work, Corollary \ref{thm:zeroColumnCorollary} is used to simplify the proofs of results related to code optimality, as the zero element $q_0$ of a code definition vector $q$ may be set to zero and treated as a constant. We use the variable $\qbar$ to represent such a simplified code definition. The vector $\qbar$ has indices in the range $[1 \dots 2^{\kappa}-1]$, and its use to define a code implies that the all-zero column is not used in the code. Thus $q$ may be extracted from $\qbar$ as 
\begin{equation}
     q = \left[ \begin{array}{c} 0 \\ \qbar \end{array} \right] \mathrm{.}
\end{equation}
With a slight abuse of notation, we allow all of the subspace decomposition-related functions which take $q$ as a parameter to use $\qbar$ as a parameter instead.

\subsection{Uniform Vector Fraction Construction}
We now introduce a code construction which we term the \emph{uniform vector fraction} construction. The code is defined by a vector $\bar{\qbar}$ with uniform element values. Because there are $2^\kappa-1$ elements in a $\qbar$ vector for a $\kappa$-dimensional code, the code definition vector is given by 
\begin{equation}
    \label{eqn:def_bar_q}
    \bar{\qbar}_i = \frac{1}{2^\kappa-1} \mathrm{.}
\end{equation}
The uniform vector fraction code definition is realizable for blocklengths which are integer multiples of $2^\kappa-1$, and for $n=2^\kappa-1$, the resulting code is the simplex code. 

\begin{theorem}
    \label{thm:uvf}
    The code defined by $\bar{\qbar}$ is locally optimal in terms of expected equivocation loss $l(n,\epsilon,\qbar)$ subject to the nonnegativity constraint \eqref{eqn:QConstraintPositive} and the unit sum constraint \eqref{eqn:QConstraintTotal}. 
\end{theorem}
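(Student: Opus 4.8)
The plan is to treat the claim as a constrained second-order optimality problem. Since every component of $\bar{\qbar}$ equals $1/(2^\kappa-1)>0$, the point lies in the relative interior of the probability simplex, so the nonnegativity constraints \eqref{eqn:QConstraintPositive} are inactive and only the equality \eqref{eqn:QConstraintTotal} binds. Local optimality then reduces to two conditions on the tangent space $\mathcal T=\{v:\sum_i v_i=0\}$: that the gradient $\nabla l(\bar{\qbar})$ is orthogonal to $\mathcal T$, and that the Hessian of $l$ restricted to $\mathcal T$ is positive definite (which yields a strict local minimizer of the leakage $l$). First I would dispose of the first-order condition by symmetry. The group $GL(\kappa,\mathbb F_2)$ acts linearly on $W$, permuting the nonzero vectors $\nu(1),\dots,\nu(2^\kappa-1)$ transitively; each induced coordinate permutation permutes the subspaces of each fixed dimension among themselves and correspondingly permutes the $\zeta(S)$, leaving the dimension-indexed sums $\sum_{S\in\Xi(W,\kappa-\delta)}\phi(S)$, and hence $l$, invariant. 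Since $\bar{\qbar}$ is the unique fixed point of this action and the action is transitive on coordinates, equivariance of the gradient forces all components of $\nabla l(\bar{\qbar})$ to be equal, i.e. $\nabla l(\bar{\qbar})\propto\mathbf 1\perp\mathcal T$, establishing stationarity.

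The substance is the second-order condition. Differentiating $\phi(S)=\epsilon^{n(1-\zeta(S))}$ twice and using $\partial\zeta(S)/\partial\qbar_i=\mathds 1[\nu(i)\in S]$ gives, with $a=-n\ln\epsilon>0$, a Hessian $\partial^2 l/\partial\qbar_i\partial\qbar_j = a^2\sum_{\delta=1}^\kappa K_\delta\sum_{S\in\Xi(W,\kappa-\delta)}\phi(S)\,\mathds 1[\nu(i)\in S]\,\mathds 1[\nu(j)\in S]$. Evaluated at $\bar{\qbar}$, each $\phi(S)$ depends only on $d=\dim S$ through $\zeta(S)=(2^d-1)/(2^\kappa-1)$, and the inner sums become pure subspace counts: the number $\binom{\kappa-1}{d-1}_2$ of $d$-dimensional subspaces through one fixed nonzero vector on the diagonal, and the number $\binom{\kappa-2}{d-2}_2$ through two on the off-diagonal. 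Hence the Hessian takes the form $(A-B)I+BJ$ with $J$ the all-ones matrix, so on $\mathcal T$ (where $\mathbf 1^\top v=0$) the quadratic form collapses to $(A-B)\|v\|^2$. The entire second-order condition is therefore equivalent to the single scalar inequality $A-B>0$.

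The hard part, and the last step, is to prove $A-B>0$, which I would reduce to Lemma \ref{SuperexponentialLemma}. The Gaussian-binomial Pascal identity $\binom{\kappa-1}{d-1}_2=2^{d-1}\binom{\kappa-2}{d-1}_2+\binom{\kappa-2}{d-2}_2$ collapses the bracketed diagonal/off-diagonal difference to $2^{d-1}\binom{\kappa-2}{d-1}_2$. Writing $K_{\kappa-d}=\prod_{i=1}^{\kappa-d-1}(1-2^i)$ and absorbing it into $\binom{\kappa-2}{d-1}_2$, after the symmetry $\binom{\kappa-2}{d-1}_2=\binom{\kappa-2}{\kappa-d-1}_2$, telescopes the coefficient into the signed product $\prod_{i=d}^{\kappa-2}(1-2^i)$, so that $A-B=\tfrac{a^2}{2}\sum_{d=1}^{\kappa-1}2^d\phi_d\prod_{i=d}^{\kappa-2}(1-2^i)$. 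Finally I would factor $\phi_d=\epsilon^{\,n+c}\beta^{2^d}$ with $c=n/(2^\kappa-1)$ and $\beta=\epsilon^{-c}\ge1$, reducing the claim to $\sum_{d=1}^{\kappa-1}2^d\beta^{2^d}\prod_{i=d}^{\kappa-2}(1-2^i)>0$. This follows from Lemma \ref{SuperexponentialLemma} with its upper index set to $\kappa-1$: multiplying that negative sum by the negative factor $(1-2^{\kappa-1})$ flips the inequality, while the would-be $i=0$ term vanishes because $\prod_{i=0}^{\kappa-2}(1-2^i)$ contains the zero factor $1-2^0$, so the two sums coincide and the desired positivity is obtained. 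The degenerate case $\kappa=1$ is immediate, since the feasible simplex is then a single point.
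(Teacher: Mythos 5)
Your proposal is correct and follows essentially the same route as the paper's proof in Appendix G: explicit gradient and Hessian at $\bar{\qbar}$, the $(A-B)\mathbf{I}+B\mathbf{J}$ Hessian structure whose quadratic form collapses to $(A-B)\|v\|^2$ on the zero-sum tangent space, the Gaussian-binomial Pascal identity and telescoping of $K_{\kappa-d}\binom{\kappa-2}{d-1}_2$ into $\prod_{i=d}^{\kappa-2}(1-2^i)$, and the final reduction to Lemma \ref{SuperexponentialLemma} via the factorization $\phi_d=\epsilon^{n+c}\beta^{2^d}$. The only minor variations are that you establish stationarity through a $GL(\kappa,\mathbb{F}_2)$-equivariance argument where the paper just computes that all gradient components are equal, and that you explicitly dispose of the degenerate case $\kappa=1$, which the paper leaves implicit.
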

\begin{proof}
See Appendix \ref{Appendix:UVF_Proof}
\end{proof}

\subsection{Subspace Exclusion Code}
In the previous section, we have shown that a locally optimal code may be constructed by forming a generator matrix which includes all possible nonzero columns in equal proportion. In most practical situations, however, the blocklength is less than $2^\kappa-1$, and it is not possible to include all nonzero columns in equal proportion. In this situation, it is natural to ask whether there are other values of $\qbar$ with similar optimality properties, but which satisfy the realizability constraint \eqref{eqn:QConstraintRealizable} for $n < 2^\kappa-1$. To answer this question, we introduce the \emph{subspace exclusion code} construction. 

A subspace exclusion code is a linear code defined by a generator matrix which contains all possible columns, except those which lie within a chosen subspace $U^{\{u\}}$ of the global space $W^{\{\kappa\}}$, in equal proportions. We denote the notation $\check{\qbar}^{[U]}$ for a subspace exclusion code vector which excludes all elements within subspace $U$. For purposes of coset code performance, any two such codes which exclude subspaces of the same dimension are equivalent, so it is convenient to define a canonical subspace exclusion code definition vector $\check{\qbar}^{\{u\}}$ for each dimension $u$. For simplicity, we choose the subspace $U$ containing the first $2^u$ column vectors, so the canonical subspace exclusion code definition vector is given by
\begin{equation}
    \label{eqn:sec_defSEC}
    \check{\qbar}^{\{u\}}_i = \begin{cases}
        0 &\text{if } i < 2^{u} \\
        \frac{1}{2^\kappa-2^u} &\text{otherwise.}
    \end{cases}
\end{equation}
Note that this code construction is realizable for $n=2^\kappa-2^u$. Note also that for $u=0$, this equates to the uniform vector fraction code. That is, $\check{\qbar}^{\{0\}} = \bar{\qbar}$. It is also worth noting that the first subspace exclusion code (with $u=\kappa-1$) with $n=2^{\kappa-1}$ defines the augmented Hadamard code. 

For $u>0$, $\check{\qbar}^{\{u\}}$ is not locally optimal because movement toward $\bar{\qbar}$ always reduces equivocation loss. Such movement is not conducive to the problem at hand, however, because all $q$ that satisfy the realizability constraint \eqref{eqn:QConstraintRealizable} for $n=2^\kappa-2^u$ have the same distance $|\qbar-\bar{\qbar}|$ from the uniform fraction code. Therefore it is natural to add the constraint that $q$ must maintain a constant radius from $\bar{\qbar}$. That is, we require that 
\begin{equation}
    \label{eqn:sec_radiusConstraint}
    |\qbar-\bar{\qbar}| = |\rho(u)| \mathrm{,}
\end{equation}
where $\rho(u)$ is the difference vector between $\bar{\qbar}$ and $\check{\qbar}^{\{u\}}$, and the elements of $\rho(u)$ are given by 
\begin{equation}
    \label{eqn:sec_defRadius}
    \begin{split}
    \rho(u)_i &= \begin{cases}
        -\frac{1}{2^\kappa-1} \text{ if } 1 \leq i < 2^u \\
        \frac{2^u-1}{(2^\kappa-1)(2^\kappa-2^u)} \text{ if } i \geq 2^u \mathrm{.}
    \end{cases} 
    \end{split}
\end{equation}
The magnitude of this vector is then given by 
\begin{equation}
    \label{eqn:sec_radiusMag}
    \begin{split}
    |\rho(u)| &= \! \sqrt{(2^u \! - \! 1) \! \left(\frac{1}{2^\kappa \! - \! 1}\right)^{\!2} \!\! \!+\! (2^\kappa \! - \! 2^u) \! \left(\frac{1}{2^\kappa \! - \! 2^u}\!-\!\frac{1}{2^\kappa \! - \! 1}\right)^{\!2}} = \sqrt{\frac{2^u-1}{(2^\kappa \! - \! 2^u) (2^\kappa-1)}}\mathrm{.} 
    \end{split}
\end{equation}

Using the radius constraint \eqref{eqn:sec_radiusConstraint}, we may make the following statement about the optimality of the first ($u=\kappa-1$) subspace exclusion code: 

\begin{theorem}
    \label{thm:sec_localOptimality1}
    The code definition vector $\check{\qbar}^{\{\kappa-1\}}$ is locally optimal in terms of equivocation loss $l(n,\epsilon,\qbar)$ subject to the nonnegativity constraint \eqref{eqn:QConstraintPositive}, the unit-sum constraint \eqref{eqn:QConstraintTotal}, and the radius constraint \eqref{eqn:sec_radiusConstraint}. 
\end{theorem}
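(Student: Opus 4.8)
The plan is to treat the claim as a constrained local-minimization statement: $l(n,\epsilon,\qbar)$ is to be minimized subject to \eqref{eqn:QConstraintPositive}, \eqref{eqn:QConstraintTotal}, and the radius constraint \eqref{eqn:sec_radiusConstraint}, and I must show $\check{\qbar}^{\{\kappa-1\}}$ is a local minimizer. By Corollary \ref{thm:zeroColumnCorollary} I may fix $q_0=0$ and work entirely in $\qbar$. The candidate is clearly feasible, and the only \emph{active} inequality constraints are the nonnegativity constraints on the columns lying in the excluded hyperplane $U^{\{\kappa-1\}}$, since $\check{\qbar}^{\{\kappa-1\}}_i=0$ exactly for $\nu(i)\in U^{\{\kappa-1\}}$. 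The decisive structural fact I would exploit is symmetry: the subgroup of $\mathrm{GL}(\kappa,\mathbb{F}_2)$ stabilizing $U^{\{\kappa-1\}}$ permutes the columns of any code, leaves $l$, all three constraints, and the candidate invariant, and acts transitively both on $U^{\{\kappa-1\}}\setminus\{0\}$ and on $W\setminus U^{\{\kappa-1\}}$. Consequently the gradient and Hessian of $l$ at $\check{\qbar}^{\{\kappa-1\}}$ have a two-block structure: $\partial l/\partial\qbar_j$ takes a single value $g_{\mathrm{in}}$ for $\nu(j)\in U^{\{\kappa-1\}}$ and a single value $g_{\mathrm{out}}$ for $\nu(j)\notin U^{\{\kappa-1\}}$, and likewise the Hessian reduces to a diagonal value $H_{\mathrm{diag}}$ and an off-diagonal value $H_{\mathrm{off}}$.

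First I would pin down the linearized feasible cone. A feasible direction $v$ satisfies $\sum_i v_i=0$, $\langle v,\rho(\kappa-1)\rangle=0$, and $v_i\ge 0$ for $\nu(i)\in U^{\{\kappa-1\}}$. Writing $A=\sum_{\nu(i)\in U^{\{\kappa-1\}}}v_i$ and using that $\rho(\kappa-1)$ is also two-block with $\rho_{\mathrm{in}}\ne\rho_{\mathrm{out}}$, the unit-sum and radius constraints together force $A=0$, whence $v_i=0$ for every $\nu(i)\in U^{\{\kappa-1\}}$ and $v$ merely redistributes mass among the columns outside $U^{\{\kappa-1\}}$. On this cone the block structure gives directional derivative $g_{\mathrm{out}}\sum_{\nu(i)\notin U^{\{\kappa-1\}}}v_i=0$, so $\check{\qbar}^{\{\kappa-1\}}$ is first-order stationary and local optimality becomes entirely a second-order question.

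The heart of the argument is the second-order test along feasible curves $\qbar(t)=\check{\qbar}^{\{\kappa-1\}}+tv+\tfrac{t^2}{2}w+o(t^2)$ with $v$ in the critical cone. Because the radius constraint is a sphere, second-order feasibility forces $\langle\rho(\kappa-1),w\rangle=-|v|^2$, which (again using the two-block form) pushes a strictly positive amount of second-order mass $W_{\mathrm{in}}=|v|^2/(\rho_{\mathrm{out}}-\rho_{\mathrm{in}})$ into the columns of $U^{\{\kappa-1\}}$. Collecting the $O(t^2)$ terms and using $\rho_{\mathrm{out}}-\rho_{\mathrm{in}}=2^{1-\kappa}$ reduces the whole second-order condition to the single scalar inequality
\begin{equation}
\frac{g_{\mathrm{in}}-g_{\mathrm{out}}}{\rho_{\mathrm{out}}-\rho_{\mathrm{in}}}+\left(H_{\mathrm{diag}}-H_{\mathrm{off}}\right)\ge 0 .
\end{equation}
Each of the four quantities is then evaluated from \eqref{eqn:expectedEquivocationEpsilon} by differentiating $\phi(S)=\epsilon^{\,n(1-\zeta(S))}$ and counting, for each dimension $d$, the subspaces of that dimension through a fixed column (or fixed pair of columns), split according to whether they lie in $U^{\{\kappa-1\}}$. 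The key evaluation is that at the candidate $\zeta(S)=2^{\dim S-\kappa}$ whenever $S\not\subseteq U^{\{\kappa-1\}}$, so $\phi(S)=\epsilon^{n}\beta^{2^{\dim S}}$ with $\beta=\epsilon^{-n/2^{\kappa}}\ge 1$; the subspace counts are Gaussian binomials, and the $q$-Pascal identity $\binom{\kappa-1}{d-1}_2-\binom{\kappa-2}{d-2}_2=2^{d-1}\binom{\kappa-2}{d-1}_2$ collapses $H_{\mathrm{diag}}-H_{\mathrm{off}}$ into a single clean sum.

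The main obstacle, which I expect to absorb most of the work, is the final step: after substitution the scalar inequality becomes, up to a positive factor, the positivity of a superexponential sum of the form $\sum_{e}2^{2e}\beta^{2^{e}}\prod_{i>e}(1-2^{i})$, and one must match this to Lemma \ref{SuperexponentialLemma} (whose sum carries the tail product $\prod_{j\ge i}(1-2^{j})$ including the diagonal factor) and to Lemma \ref{thm:exponentialMersenneLemma}. The reconciliation requires carefully tracking the missing $(1-2^{e})$ diagonal factor and the constant prefactors relating the $g$'s, the $H$'s, and $\beta$, and confirming that the sign comes out so the displayed inequality holds strictly for $\epsilon\in(0,1)$. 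Strictness then upgrades the conclusion to a strict local minimum, establishing the theorem; the degenerate boundary case $\epsilon\to 1$ (that is, $\beta=1$) gives equality, consistent with the vanishing of all leakage.
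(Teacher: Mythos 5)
Your proposal is correct in substance and shares the overall architecture of the paper's proof in Appendix~\ref{Appendix:secLocalOptimalityProof}: gradient and Hessian of $l$ at $\check{\qbar}^{\{\kappa-1\}}$, feasible first- and second-order movement with the spherical curvature correction $\langle\rho,w\rangle=-|v|^2$ (your $W_{\mathrm{in}}=|v|^2/(\rho_{\mathrm{out}}-\rho_{\mathrm{in}})$ is exactly the paper's $\nabla l(\qbar)\ddot{\qbar}$ term with $\ddot{\qbar}=-\rho/|\rho|^2$), reduction to one scalar inequality, and closure via the Mersenne-type lemmas. The genuine difference, and a real simplification, is your treatment of the critical cone: by using the active nonnegativity constraints you force every feasible direction to vanish on the excluded hyperplane, so the quadratic form sees only the outside block of the Hessian and equals exactly $(H_{\mathrm{diag}}-H_{\mathrm{off}})|v|^2$, which is the paper's $D_4$ of \eqref{eqn:sec_D4_def}. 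The paper does not use nonnegativity here; it admits directions with nonzero zero-sum components inside $U^{\{\kappa-1\}}$, and therefore must prove $D_3\geq D_4$ --- the longest computation in the appendix, \eqref{eqn:sec_D4_1}--\eqref{sec_D3-D4_2} --- to lower-bound the quadratic form. Your route eliminates that step entirely; the price is that you obtain only the theorem as stated, whereas the paper's proof, as it remarks, gives local optimality even without the nonnegativity constraint. Two caveats you should repair. First, your symmetry claim that the whole Hessian has a single diagonal and a single off-diagonal value is false globally: by \eqref{eqn:sec_HessianLimited} there are distinct values for inside/outside diagonals and for inside--inside versus other off-diagonal pairs. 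It is true restricted to the outside block, precisely because for $u=\kappa-1$ any two vectors outside a hyperplane sum to a vector inside it, and that restricted statement is all your cone needs --- state it that way. Second, your final sum $\sum_e 2^{2e}\beta^{2^e}\prod_{i>e}(1-2^i)$ is not literally Lemma~\ref{SuperexponentialLemma}; the clean finish is the paper's: expand $2^{-\delta}te^{t2^{-\delta}}+1-e^{t2^{-\delta}}=\sum_{j\geq 2}(j-1)(t2^{-\delta})^j/j!$, exchange sums, and apply Lemma~\ref{thm:exponentialMersenneLemma} to each inner sum $\sum_{d}2^{jd}\prod_{i=d}^{\kappa-1}(1-2^i)$ as in \eqref{eqn:sec_secDeriv_3} (equivalently, a $2^{2e}$-weighted analogue of Lemma~\ref{SuperexponentialLemma} follows by the identical power-series argument). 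With those two repairs your argument goes through and lands on the same final inequality as the paper.
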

\begin{proof}
    See Appendix \ref{Appendix:secLocalOptimalityProof}. 
\end{proof}

\subsection{$\chi^2$ Divergence}
Theorem \ref{thm:ExpectedEquivocationFormula} provides a succinct expression for the equivocation of a coset code. This expression, however, requires a summation over all the subspaces of the global space. As shown in \ref{sec:ComplexityAnalysis1}, for large $n$, this calculation is simpler than the brute force calculation based on all possible erasure patterns. The complexity is, however, still prohibitive for large $\kappa = n-k$. An interesting effect is observed, though, when considering an alternative metric. Specifically, we consider the $\chi^2$ divergence between joint and marginal distributions on $M$ and $Z$. This approach is described in detail below. 

The $\chi^2$ divergence between discrete probability distributions $p_A(a)$ and $p_B(b)$ with alphabets $\mathcal{A}$ and $\mathcal{B}$, respectively, is given by 
\begin{equation}
    \label{eqn:x2_generalFormulation}
    \chi^2(p_A(a),p_B(b)) = \sum_{a \in \mathcal{A}, b \in \mathcal{B}}{\left( \frac{p_A(a)}{p_B(b)} - 1 \right)^2 p_B(b)} \mathrm{.}
\end{equation}
As a metric for secrecy code performance, we desire the $\chi^2$ divergence between the joint distribution $p_{MZ}$ of $M$ and $Z$ and the product $p_Mp_Z$ of the marginal distributions of $M$ and $Z$ (which represents the ideal joint distribution). For a code defined by $q$ of blocklength $n$, We denote this divergence as $\Lambda(n,\mu,q)$ for the case of a fixed number $\mu$ of revealed bits or $\lambda(n,\epsilon,q)$ for the case of fixed erasure probability $\epsilon$. Using \eqref{eqn:x2_generalFormulation}, we can define these functions as 
\begin{equation}
    \label{eqn:x2_Lambda_def}
    \Lambda(n,\mu,q) = \!\!\!\!\!\!\!\!\!\!\sum_{\;\;\;\;\substack{m\in \{0,1\}^{k}, \\ z\in\{0,1,?\}^n}}{\!\! \left( \frac{p_{MZ}(m,z|\mu)}{p_M(m)p_Z(z|\mu)}-1\right)^2p_M(m) p_Z(z|\mu)}
\end{equation}
and 
\begin{equation}
    \label{eqn:x2_lambda_def}
    \lambda(n,\epsilon,q) = \!\!\!\!\!\!\!\!\!\!\sum_{\;\;\;\;\substack{m\in \{0,1\}^{k}, \\ z\in\{0,1,?\}^n}}{\!\! \left( \frac{p_{MZ}(m,z|\epsilon)}{p_M(m)p_Z(z|\epsilon)}-1\right)^2p_M(m) p_Z(z|\epsilon)} \mathrm{.}
\end{equation}


Considering the expression in the sum of \eqref{eqn:x2_Lambda_def}, we note that in the case that there is no equivocation loss, $p_{MZ}(m,z)$ is uniform across $m$, and $p_{MZ}(m,z)-p_M(m)p_Z(z)$=0. In the case of $l$ bits of equivocation loss, $p_{MZ}(m,z)$ is equal to zero for all those $m$ which are inconsistent with the revealed bits and uniform over all those $m$ which are consistent. Conveniently, the number of possible messages consistent with a given codeword is calculated from the equivocation loss as 
\begin{equation}
    \label{eqn:x2_consistentMessageFraction}
    \lvert\{m: m_i = z_i \;\; \forall \; i:z_i \neq ? \}\rvert = 2^{H(M|Z=z)} \mathrm{.}
\end{equation}
Then for a given $m$ and $z$, and again assuming uniformly distributed $M$, the joint distribution is given by 
\begin{equation}
    \label{eqn:x2_jointDistribution}
    p_{MZ}(m,z) = \begin{cases}
        2^{-H(M|Z=z)} \cdot p_Z(z)\!\! &\text{if } m  \text{ consistent with } z \\
        0 &\text{otherwise.} 
    \end{cases} 
\end{equation}
Then the sum required in \eqref{eqn:x2_Lambda_def} and \eqref{eqn:x2_lambda_def} can be computed for all $m$ and for a given $z$ as 
\begin{equation}
    \label{eqn:x2_equivocation1}
    \begin{split}
    &\sum_{\;\;\;\;m\in \{0,1\}^{k}}{\left( \frac{p_{MZ}(m,z)}{p_M(m)p_Z(z)}-1\right)^2p_M(m) p_Z(z)} \\ 
    & \;\;\;\;\;\;\;\;\;\;\;\; = \!\!\!\!\!\!\!\! \sum_{\;\;\;\;m \text{ inconsistent with } z}{\frac{\left( p_{MZ}(m,z) - p_M(m)p_Z(z) \right)^2}{p_M(m) p_Z(z)}} + \!\!\!\!\!\!\!\! \sum_{\;\;\;\;m \text{ consistent with } z}{\frac{\left( p_{MZ}(m,z) - p_M(m)p_Z(z) \right)^2}{p_M(m) p_Z(z)}} \\
    & \;\;\;\;\;\;\;\;\;\;\;\; = (2^k - 2^{H(M|Z=z)}) \frac{(0 - 2^{-k}p_Z(z))^2}{2^{-k}p_Z(z)} + 2^{H(M|Z=z)} \frac{(2^{-H(M|Z=z)}p_Z(z) - 2^{-k}p_Z(z))^2}{2^{-k}p_Z(z)} \\
    & \;\;\;\;\;\;\;\;\;\;\;\; = p_Z(z) \left(  1-2^{H(M|Z=z)-k} \right) + p_Z(z) \left( 2^{k-H(M|Z=z)} - 2 + 2^{H(M|Z=z)-k} \right) \\
    & \;\;\;\;\;\;\;\;\;\;\;\; = p_Z(z) \left( 2^{k-H(M|Z=z)} - 1\right) \mathrm{.}
    \end{split}
\end{equation}
Taking this sum over $z$ as required in \eqref{eqn:x2_Lambda_def} constitutes an expectation: 
\begin{equation}
    \label{eqn:x2_LambdaExpectation}
    \begin{split}
    \Lambda(n,\mu,q) &= \!\! \sum_{\;\; z: \{0,1,?\}^n}{\!\! \left(p_Z(z) \left( 2^{k-H(M\left|Z=z,\mu\right.)} - 1\right) \right)} = \mathds{E}\left( 2^{k-H(M|Z,\mu)}\right)-1 \mathrm{,}    
    \end{split}
\end{equation}
and similarly for fixed $\epsilon$,
\begin{equation}
    \label{eqn:x2_lambdaExpectation}
    \begin{split}
    \lambda(n,\epsilon,q) &= \!\! \sum_{\;\; z: \{0,1,?\}^n}{\!\! \left(p_Z(z) \left( 2^{k-H(M\left|Z=z,\epsilon\right.)} - 1\right) \right)} = \mathds{E}\left( 2^{k-H(M|Z,\epsilon)}\right)-1 \mathrm{.}    
    \end{split}
\end{equation}

To gain some intuitive understanding of the meaning of the $\chi^2$ divergence applied to coset code performance, it is helpful to contrast the $\chi^2$ divergence as expressed in \eqref{eqn:x2_lambdaExpectation} with equivocation loss, which may be expressed as the expectation 
\begin{equation}
    \label{eqn:equivocationLossAsExpectation}
    l(n,\epsilon,q) = \mathds{E}(k-H(M|Z)) \mathrm{.}
\end{equation}
It may be observed that the $\chi^2$ divergence is an exponentially-weighted expectation on the equivocation loss (with a unit offset, which ensures that the $\chi^2$ divergence is zero if no information is leaked). That is, if we consider a penalty function for information revealed to an eavesdropper, the equivocation loss metric imposes a unit penalty for each bit of information leaked. On the other hand, the $\chi^2$ divergence metric imposes a unit penalty for the first bit leaked, with the penalty doubling on each subsequent bit leaked. Such a penalty function provides a penalty value inversely proportional to number of messages consistent with the eavesdropper's observation, or directly proportional to the probability of successful maximum likelihood (ML) decoding by the eavesdropper. This penalty function thus has practical application in situations such as, e.g., communication of a passcode in which the adversary is allowed only a single attempt to produce the passcode (or in which the number of allowed attempts is small compared to the number of possibilities).

\subsection{$\chi^2$ Divergence Via Subspace Decomposition}
We next show how subspace decomposition may be used to establish a remarkable expression for the $\chi^2$ divergence for a given code over the BEC. Specifically, we derive an expression which gives the $\chi^2$ divergence as a sum of functions only of the dimension-$(\kappa-1)$ subspaces of $W$. 
\begin{theorem}
    \label{thm:x2_Divergence}
For a coset code defined by $q$ of blocklength $n$, the $\chi^2$ divergence between the joint distribution $p_{MZ}(m,z)$ and the product $p_M(m)p_Z(z)$ of the marginal distributions of $m$ and $z$ is equal to 
\begin{equation}
    \label{eqn:x2_LambdaFinal}
    \Lambda(n,\mu,q) = 2^{\mu - \kappa} \left( 1+ \!\! \sum_{S \in \Xi(W,\kappa-1)} \!\!\!\!\!\!  \Phi(S,n,\mu,q) \right) -1 \mathrm{,} 
\end{equation}
for a fixed number $\mu$ of revealed bits and is equal to 
\begin{equation}
    \label{eqn:x2_lambdaFinal}
    \lambda(n,\epsilon,q) = (2-\epsilon)^{n}2^{-\kappa} \left( 1+ \!\! \sum_{S \in \Xi(W,\kappa-1)} \!\!\!\!\!\!  \varphi(S,n,\epsilon,q) \right) -1 \mathrm{,} 
\end{equation}
for a fixed erasure probability $\epsilon$, where
\begin{equation}
    \label{eqn:x2_def_varphi}
    \varphi(S,n,\epsilon,q) = \left(\frac{\epsilon}{2-\epsilon}\right)^{n(1-\zeta(S,q))} \mathrm{.} 
\end{equation}

\end{theorem}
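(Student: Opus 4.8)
The plan is to start from the expectation form of the divergence, \eqref{eqn:x2_LambdaExpectation}, namely $\Lambda(n,\mu,q)+1 = \mathds{E}\!\left(2^{k-H(M|Z,\mu)}\right)$, and reduce it to a sum of $\Phi$ terms whose coefficients collapse to zero on all but the top two dimensions. First I would invoke the Pfister formula \eqref{eqn:EquivocationPfister} with $H(M)=k$ to write $k-H(M|Z=z) = |r(z)| - \mathrm{rank}(G_{r(z)})$. For a fixed count $\mu$ of revealed bits we have $|r(z)|=\mu$, and $\mathrm{rank}(G_{r(z)})$ equals the dimension of the subspace the revealed columns exactly span, so $2^{k-H(M|Z=z)} = 2^{\mu-d}$ whenever the revealed set exactly spans a $d$-dimensional subspace. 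Since every revealed set spans exactly one subspace and $\Psi(S)$ is the probability of exactly spanning $S$, partitioning the expectation by the spanned subspace gives
\[ \Lambda(n,\mu,q)+1 = 2^{\mu}\sum_{d=0}^{\kappa} 2^{-d}\!\!\!\sum_{S\in\Xi(W,d)}\!\!\!\Psi(S). \]

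Next I would expand each $\Psi(S)$ into $\Phi$ terms using \eqref{eqn:PsiExpanded1} and the coefficients $c(d,d')$ of \eqref{eqn:def_c_constants}, then exchange the order of summation to collect, for each subspace $T$ of dimension $d'$, every contribution in which $T$ appears. The superspace count \eqref{eqn:numberOfSuperspaces} shows that $T^{\{d'\}}$ lies in exactly $\binom{\kappa-d'}{d-d'}_2$ subspaces of dimension $d$, so $\Lambda(n,\mu,q)+1 = \sum_{d'=0}^{\kappa} B(\kappa,d')\sum_{T\in\Xi(W,d')}\Phi(T)$ with $B(\kappa,d') = 2^{\mu}\sum_{d=d'}^{\kappa} 2^{-d}\binom{\kappa-d'}{d-d'}_2 c(d,d')$. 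This mirrors the coefficient-collecting step of Theorem \ref{thm:ExpectedEquivocationFormula} exactly, the only new ingredient being the weight $2^{-d}$ inherited from the $2^{k-H}$ factor.

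The crux of the argument, and the step I expect to be the main obstacle, is showing that $B(\kappa,d')$ vanishes for every $d'<\kappa-1$ and equals $2^{\mu-\kappa}$ for $d'\in\{\kappa-1,\kappa\}$. Substituting the closed form of $c(d,d')$ from \eqref{eqn:def_c_constants}, reindexing with $a=\kappa-d'$, and recognizing the summand as a value of the $\eta'$ function \eqref{eqn:defEtaPrime} (via $\binom{a}{j}_2(-1)^{j}2^{j(j-1)/2}=\eta'(a,a-j)$) reduces $B$ to $B(\kappa,d') = 2^{\mu-\kappa}\sum_{i=0}^{a}2^{i}\eta'(a,i)$, so that everything hinges on the single identity
\[ \sum_{i=0}^{a} 2^{i}\,\eta'(a,i) = \begin{cases} 1 & a\in\{0,1\} \\ 0 & a\geq 2. \end{cases} \]
I would prove this by Abel summation: writing $\eta'(a,i)$ as a difference of partial tails $T_i-T_{i+1}$, a telescoping argument leaves $T_0 + \sum_{i\geq 1} 2^{i-1}T_i$, and applying the tail formula \eqref{eqn:etaPrimePartialSum} together with the full-sum formula \eqref{eqn:etaPrimeSum} of Lemma \ref{EtaPrimeSumLemma} collapses this to the boundary contributions $[a=0] + 2^{a-1}[a=1]$. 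With the coefficient vanishing established, only the $d'=\kappa$ term ($W$ itself, with $\Phi(W)=1$) and the $d'=\kappa-1$ terms survive, each carrying weight $2^{\mu-\kappa}$, which is precisely \eqref{eqn:x2_LambdaFinal}.

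Finally, I would deduce the fixed-$\epsilon$ statement \eqref{eqn:x2_lambdaFinal} from the fixed-$\mu$ result by the law of total expectation over the binomially distributed number of revealed bits,
\[ \lambda(n,\epsilon,q)+1 = \sum_{\mu=0}^{n}\binom{n}{\mu}(1-\epsilon)^{\mu}\epsilon^{n-\mu}\bigl(\Lambda(n,\mu,q)+1\bigr). \]
Using the hypergeometric form $\Phi(S,n,\mu,q)=\binom{n\zeta(S)}{\mu}/\binom{n}{\mu}$ implied by \eqref{eqn:defPhi}, the factor $2^{\mu}$ inside $\Lambda+1$ merges with $(1-\epsilon)^{\mu}$ and the $\mu$-sum collapses by the ordinary binomial theorem to $\epsilon^{n(1-\zeta(S))}(2-\epsilon)^{n\zeta(S)} = (2-\epsilon)^{n}\varphi(S)$, while the constant ``$1$'' term contributes $(2-\epsilon)^{n}$. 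Factoring out $(2-\epsilon)^{n}2^{-\kappa}$ then yields \eqref{eqn:x2_lambdaFinal} with $\varphi$ as defined in \eqref{eqn:x2_def_varphi}. I anticipate the $\epsilon$ case to be routine once the $\mu$ case is in hand, so essentially all the difficulty is concentrated in the vanishing identity above.
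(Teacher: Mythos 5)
Your proposal is correct, and it splits naturally into two parts relative to the paper. For fixed $\mu$, your route is essentially the paper's: the same partition of the expectation by the exactly-spanned subspace, the same collection of coefficients via \eqref{eqn:PsiExpanded1} and the superspace count \eqref{eqn:numberOfSuperspaces} (your $B(\kappa,d')$ is the paper's $2^{\mu}\Gamma(\kappa,d')$), and the same punchline that the coefficient vanishes below dimension $\kappa-1$. The only variation is how you kill the coefficient sum: you use Abel summation with the tail identity \eqref{eqn:etaPrimePartialSum} and the full-sum identity \eqref{eqn:etaPrimeSum}, whereas the paper splits out the $j=0$ term via \eqref{eqn:etaSplitProperty} and recombines with Gaussian-binomial identities; both reduce to the vanishing of $\sum_i \eta(\delta-1,i)$, and I checked your telescoping --- it yields $T_0 + 2^{a-1}\sum_{j=0}^{a-1}\eta'(a-1,j)$, which gives exactly the claimed $[a=0]+[a=1]$ dichotomy. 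For fixed $\epsilon$, however, your argument is genuinely different: you deduce \eqref{eqn:x2_lambdaFinal} from \eqref{eqn:x2_LambdaFinal} by mixing over the binomially distributed number of revealed bits, using the hypergeometric form $\Phi(S,n,\mu,q)=\binom{n\zeta(S)}{\mu}\big/\binom{n}{\mu}$ of \eqref{eqn:defPhi} so that the $\mu$-sum closes by the binomial theorem into $\epsilon^{n(1-\zeta(S))}(2-\epsilon)^{n\zeta(S)}$. The paper instead re-runs the subspace decomposition with conditional expectations $\mathds{E}\left(2^{|r(z)|}\,\middle|\,S\right)$ and a modified recursion, then reuses the same $\Gamma$ constants. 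Your mixing argument is more elementary and makes the appearance of $(2-\epsilon)^n$ and $\varphi$ transparent as a generating-function evaluation; its one dependency worth stating explicitly is that the truncated sum $\sum_{\mu\le n}\binom{n\zeta(S)}{\mu}(2(1-\epsilon))^{\mu}\epsilon^{n-\mu}$ closes exactly because $n\zeta(S)$ is an integer and $\binom{n\zeta(S)}{\mu}=0$ beyond it, i.e., it invokes the realizability constraint \eqref{eqn:QConstraintRealizable} --- harmless here since the theorem concerns actual coset codes (the paper's conditional-expectation route has the same implicit requirement), but the formula's later use as a continuous function of $q$ is by definition rather than by your derivation. The paper's route, by staying inside the $\psi/\phi$ recursion machinery, keeps both cases structurally parallel at the cost of the extra bookkeeping in \eqref{x2_expectedValuePsi}--\eqref{x2_expectedValuePsi_varphi}.
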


\begin{proof}
We first consider the case of a fixed number $\mu$ of revealed bits. The proof begins in the same manner as the proof of Theorem \ref{thm:ExpectedEquivocationFormula}, with the calculation of $\Psi(S,n,\mu)$ for each subspace $S$ of $W$. Recall that the value of $\Psi(S,n,\mu)$ represents the probability of $\mu$ revealed bits exactly spanning $S$. Recall further that if the revealed bits $r(z)$ of a received codeword $z$ exactly span a space of dimension $d$, then $\mathrm{rank}(G_{r(z)}) = d$. Thus combining \eqref{eqn:x2_LambdaExpectation} and \eqref{eqn:equivocationLossMuPfister}, the term in the expectation of \eqref{eqn:x2_LambdaExpectation} for a revealed bit pattern of size $\mu$ that exactly spans a space of dimension $d$ is given by 
\begin{equation}
    \label{eqn:x2_rankByDimension}
    2^{k-H(M|Z=z)} = 2^{\mu - d} \mathrm{.} 
\end{equation}
Using this expression, $\Lambda(n,\mu,q)$ may be expressed as  
\begin{equation}
    \label{eqn:x2_Lambda1}
    \begin{split}
    \Lambda(n,\mu,q) &= \mathds{E} \left( 2^{k-H(M|Z)} | \;\lvert r(Z) \rvert = \mu \right)-1 \\
    &= \sum_{d=0}^{\kappa}{\left( \sum_{S \in \Xi(W,d)}{ \left( 2^{\mu-d} \Psi(S,n,\mu,q) \right) } \right) }-1 \\
    &= 2^{\mu} \sum_{d=0}^{\kappa}{\left( \sum_{S \in \Xi(W,d)}{ \left( 2^{-d} \Psi(S,n,\mu,q) \right) } \right) }-1 \mathrm{.}
    \end{split}
\end{equation}
Then, because each $\Psi(S,n,\mu,q)$ consists of a weighted sum of $\Phi(T,n,\mu,q)$ terms for the subspaces $T$ of $S$, $\Lambda(n,\mu,q)$ may be expressed as 
\begin{equation}
    \label{eqn:x2_Lambda_phi}
    \Lambda(\mu,n,q) = 2^{\mu} \sum_{d=0}^{\kappa}{\left( \sum_{S \in \Xi(W,d)}{ \left( \Gamma(\kappa,d) \Phi(S,n,\mu,q)\right) } \right) }-1 \mathrm{.}
\end{equation}
where the $\Gamma(\kappa,d)$ are a series of weighting constants. Making use of \eqref{eqn:numberOfSuperspaces} as in Theorem \ref{thm:ExpectedEquivocationFormula}, $\Gamma(\kappa,d)$ may be expressed as a weighted sum of the $c(d,d')$ defined in \eqref{eqn:def_c_constants}. Similarly to the expression for the $C(\kappa,d)$ constants in \eqref{eqn:defCPreliminary1}, the $\Gamma(\kappa,d)$ constants are equal to the $c(d,d')$ constants multiplied by the number of superspaces of each $d$-dimensional subspace, but instead of weighting by the intermediate dimension $i$, the terms are weighted by $2^{-i}$. This yields 
\begin{equation}
    \label{eqn:x2_Gamma_c_1}
    \begin{split}
    \Gamma(\kappa,d) &= \sum_{i=d}^{\kappa}{\left( 2^{-i} \binom{\kappa-d}{i-d}_2 c(i,d) \right)} \\
    &= \sum_{i=d}^{\kappa}{\left( 2^{-i} \binom{\kappa-d}{i-d}_2 \eta(i-d,0) \right)} \mathrm{,} 
    \end{split}
\end{equation}
or in terms of $\delta=\kappa-d$, 
\begin{equation}
    \label{eqn:x2_Gamma_delta}
    \begin{split}
    \Gamma(\kappa,\kappa-\delta) &= \sum_{i=\kappa-\delta}^{\kappa}{\left( 2^{-i} \binom{\delta}{i-\kappa+\delta}_2 \eta(i-\kappa+\delta,0) \right)} \\
    &= \sum_{i=\kappa-\delta}^{\kappa}{\left( 2^{-i} \eta(\delta,\kappa-i) \right)}  \\
    &= 2^{-\kappa} \sum_{j=0}^{\delta}{\left( 2^{j} \eta(\delta,j) \right)} \mathrm{.} 
    \end{split}
\end{equation}
It is clear that for $d=\kappa$ (or $\delta=0$), we have 
\begin{equation}
    \label{eqn:x2_GammaOf0}
    \Gamma(\kappa,\kappa) = 2^{-\kappa}  \eta(0,0) = 2^{-\kappa} \mathrm{.}
\end{equation}
For $d < \kappa$, we may begin by splitting out the first ($j = 0$) term and transforming it using \eqref{eqn:etaSplitProperty} to yield
\begin{equation}
    \label{eqn:x2_Gamma_eta_1}
    \begin{split}
    \Gamma(\kappa,\kappa-(\delta>0)) &= 2^{-\kappa} \sum_{j=1}^{\delta}{\left( 2^{j} \eta(\delta,j) \right)} + 2^{-\kappa} \eta(\delta,0) \\
    &\!\!\!\!\!\!\!\!\!\!\!\!\!\!\!\!\!\!\! \!\!\!\!\!\!\!\!\!\!\!\!\!\!\!\!\!\!\! = 2^{-\kappa} \sum_{j=1}^{\delta}{\left( 2^{j} \eta(\delta,j) \right)} - 2^{-\kappa} \sum_{i=0}^{\delta-1}{\left( \binom{\delta}{i}_2 \eta(i,0) \right)} \mathrm{.}
    \end{split}
\end{equation}
Next, by substituting $j=\delta-i$ in the second sum, the sums may be combined, giving
\begin{equation}
    \label{eqn:x2_Gamma_eta_2} 
    \begin{split}
    \Gamma(\kappa,\kappa-(\delta>0)) &= \\
    &\!\!\!\!\!\!\!\!\!\!\!\!\!\!\!\!\!\!\! \!\!\!\!\!\!\!\!\!\!\!\!\!\!\!\!\!\!\! = 2^{-\kappa} \sum_{j=1}^{\delta}{\left( 2^{j} \eta(\delta,j) \right)} - 2^{-\kappa} \sum_{j=1}^{\delta}{\left( \binom{\delta}{\delta-j}_2 \eta(\delta-j,0) \right)} \\
    &\!\!\!\!\!\!\!\!\!\!\!\!\!\!\!\!\!\!\! \!\!\!\!\!\!\!\!\!\!\!\!\!\!\!\!\!\!\! = 2^{-\kappa} \sum_{j=1}^{\delta}{\left( 2^{j} \eta(\delta,j) \right)} - 2^{-\kappa} \sum_{j=1}^{\delta}{\left( \eta(\delta,j) \right)}  \\
    &\!\!\!\!\!\!\!\!\!\!\!\!\!\!\!\!\!\!\! \!\!\!\!\!\!\!\!\!\!\!\!\!\!\!\!\!\!\! = 2^{-\kappa} \sum_{j=1}^{\delta}{\left( (2^{j}-1) \eta(\delta,j) \right)} \mathrm{.}
    \end{split}    
\end{equation}
Then using the definition \eqref{eqn:defEtaPrime} of $\eta'(\cdot) = \eta(\cdot)$ and properties of Gaussian binomials, we obtain 
\begin{equation}
    \label{eqn:x2_Gamma_eta_3} 
    \begin{split}
    \Gamma(\kappa,\kappa-(\delta>0)) &= \\
    &\!\!\!\!\!\!\!\!\!\!\!\!\!\!\!\!\!\!\! \!\!\!\!\!\!\!\!\!\!\!\!\!\!\!\!\!\!\! = 2^{-\kappa} \sum_{j=1}^{\delta}{\left( (2^{j}-1) \binom{\delta}{j}_2 c(\delta,j) \right)} \\
    &\!\!\!\!\!\!\!\!\!\!\!\!\!\!\!\!\!\!\! \!\!\!\!\!\!\!\!\!\!\!\!\!\!\!\!\!\!\! = 2^{-\kappa} \sum_{j=1}^{\delta}{\left( (2^{\delta}-1) \binom{\delta-1}{j-1}_2 c(\delta-1,j-1) \right)} \\
    &\!\!\!\!\!\!\!\!\!\!\!\!\!\!\!\!\!\!\! \!\!\!\!\!\!\!\!\!\!\!\!\!\!\!\!\!\!\! = 2^{-\kappa} \sum_{j=1}^{\delta}{\left( (2^{\delta}-1) \eta(\delta-1,j-1) \right)}  \\
    &\!\!\!\!\!\!\!\!\!\!\!\!\!\!\!\!\!\!\! \!\!\!\!\!\!\!\!\!\!\!\!\!\!\!\!\!\!\! = 2^{-\kappa} (2^{\delta}-1) \sum_{i=0}^{\delta-1}{\left(  \eta(\delta-1,i) \right)} \mathrm{.}
    \end{split}
\end{equation}
Now, by \eqref{eqn:etaPrimeSum}, \eqref{eqn:x2_Gamma_eta_3} equals zero for all values of $\delta$ except $\delta=1$, at which we have 
\begin{equation}
    \label{eqn:x2_GammaOf0}
    \Gamma(\kappa,\kappa-1) = 2^{-\kappa}  (2^{1}-1) = 2^{-\kappa} \mathrm{.}
\end{equation}
Thus, the closed-form solution to \eqref{eqn:x2_Gamma_c_1} is 
\begin{equation}
    \label{eqn:x2_Gamma_Final}
    \Gamma(\kappa,d) = \begin{cases}
        2^{\mu-\kappa} &\text{if } d = \kappa \text{ or } d=\kappa-1 \\
        0 &\text{otherwise} \mathrm{.} 
    \end{cases}
\end{equation}
Substituting this into \eqref{eqn:x2_Lambda_phi} yields \eqref{eqn:x2_LambdaFinal} as required. 

Next, we consider the case of fixed $\epsilon$. Here, the value of $2^{k-H(M|Z)}$ is not fixed even if it is known that the revealed bits exactly span a particular space $S$. It is therefore necessary to leave the $2^{k-H(M|Z=z)}$ term in an expectation, yielding 
\begin{equation}
    \label{eqn:x2_lambda1}
    \begin{split}
    \lambda(n,\epsilon,q) &= \mathds{E} \left( 2^{k-H(M|Z)} \right)-1 \\
    &\!\!\!\!\!\!\!\!\!\!\!\!\!\!\!\!= \sum_{d=0}^{\kappa}{\left( \sum_{S \in \Xi(W,d)}{ \psi(S) \mathds{E} \left( 2^{k-H(M|Z)} | S \right) } \right) }-1  \\
    &\!\!\!\!\!\!\!\!\!\!\!\!\!\!\!\!= \sum_{d=0}^{\kappa}{\left( \sum_{S \in \Xi(W,d)}{ 2^{-d} \psi(S) \mathds{E} \left( 2^{|r(z)|} | S \right) } \right) }-1 \mathrm{.}
    \end{split}
\end{equation}
(The notation $\mathds{E}(X|S)$ where $S$ is a space indicates the expectation conditioned on the set of revealed bits exactly spanning $S$.) 

Considering the expectation $\mathds{E} \left( 2^{|r(z)|} | S \right)$, using the properties of conditional expectation and the facts that the events that $r(z)$ exactly spans any given $T \subseteq S$ are disjoint and that their union is the event that $r(z)$ spans $S$, we can write
\begin{equation}
    \label{x2_expectedValuePsi}
    \begin{split}
    \psi(S^{\{d\}})\mathds{E}\left( 2^{|r(z)|} | S \right) &= \phi(S)\mathds{E}\left( 2^{|r(z)|} | r(z) \subset S \right) - \sum_{d'=0}^{d-1}{\left(\sum_{T \in \Xi(S,d')}{\psi(T)\mathds{E}\left( 2^{|r(z)|} | T \right)}\right)} \\
    &= \epsilon^{n(1-\zeta(S))}\left( (2-\epsilon)^{n\zeta(S))} \right) - \sum_{d'=0}^{d-1}{\left(\sum_{T \in \Xi(S,d')}{\psi(T)\mathds{E}\left( 2^{|r(z)|} | T \right)}\right)} \mathrm{.}
    \end{split}
\end{equation}
Now using $\varphi(S,n,\epsilon,q)$ as defined in \eqref{eqn:x2_def_varphi}, we obtain 
\begin{equation}
    \label{x2_expectedValuePsi2}
    \begin{split}
    &\psi(S^{\{d\}})\mathds{E}\left( 2^{|r(z)|} | S \right) = (2-\epsilon)^n \varphi(S,n,\epsilon,q) - \!\!\sum_{d'=0}^{d-1}{\!\! \left( \!\!\!\!\!\! \sum_{\;\;\;\;T \in \Xi(S,d')}{\!\!\!\!\!\! \psi(T)\mathds{E}\left( 2^{|r(z)|} | T \right)}\!\! \right)} \mathrm{.}
    \end{split}
\end{equation}
At this point, \eqref{x2_expectedValuePsi2} may be expanded recursively in a similar fashion to \eqref{eqn:PsiExpanded1} to yield 
\begin{equation}
    \label{x2_expectedValuePsi_varphi}
    \begin{split}
    &\psi(S^{\{d\}})\mathds{E}\left( 2^{|r(z)|} | S \right) = \sum_{d'=0}^{d-1}{\left( 
 \!\!\!\!\!\sum_{\;\;\;\;T \in \Xi(S,d')}{\!\!\!\!\!\!c(d,d') (2\!-\!\epsilon)^n \varphi(T)} \!\right)} \mathrm{.}
    \end{split}
\end{equation}
Then substituting this expression into the expression \eqref{eqn:x2_lambda1} for $\lambda(n,\epsilon,q)$, it becomes clear that the sum may be expressed as a weighted sum of $(2\!-\!\epsilon)^n \varphi(S)$ terms for the subspaces $S$ of $W$. It is also clear upon inspection that the weighting constants are given precisely by the same expression \eqref{eqn:x2_Gamma_c_1} used for weighting the $\Phi(S)$ terms when calculating $\Lambda(\mu,n,q)$. The resulting expression for $\lambda(n,\epsilon,q)$ is 
\begin{equation}
    \label{eqn:x2_lambda_Gamma_1}
    \lambda(n,\epsilon,q) = \sum_{d=0}^{\kappa}{\left(\sum_{S \in \Xi(W,d)}{\left( \Gamma(\kappa,d) (2-\epsilon)^n \varphi(S) \right)}  \right)}-1 \mathrm{,}
\end{equation}
or using the values of $\Gamma(\kappa,d)$ given in \eqref{eqn:x2_Gamma_Final}, 
\begin{equation}
    \label{eqn:x2_lambda_Gamma_1}
    \begin{split}
    \lambda(n,\epsilon,q) &= \sum_{S \in \Xi(W,\kappa)}{\left( 2^{-\kappa} (2-\epsilon)^n \varphi(S) \right)} + \!\!\!\!\sum_{S \in \Xi(W,\kappa-1)}{\left( 2^{-\kappa} (2-\epsilon)^n \varphi(S) \right)}-1 \\
    &\;\;\;\;= 2^{-\kappa} (2-\epsilon)^n + \!\!\!\!\sum_{S \in \Xi(W,\kappa-1)}{\left( 2^{-\kappa} (2-\epsilon)^n \varphi(S) \right)}-1 \\
    &\;\;\;\;= 2^{-\kappa} (2-\epsilon)^n \left( 1 + \!\!\!\!\sum_{S \in \Xi(W,\kappa-1)}{ \varphi(S) } \right)-1 \mathrm{,}
    \end{split}
\end{equation}
as required by \eqref{eqn:x2_lambdaFinal}. 
\end{proof}

\subsection{$\chi^2$ Divergence Complexity Analysis}
Calculation of the $\chi^2$ divergence via subspace decomposition is significantly simpler than the calculation of equivocation loss by subspace decomposition. Examining \eqref{eqn:x2_LambdaFinal} and \eqref{eqn:x2_lambdaFinal}, we observe that it is necessary to calculate $\Phi(S,n,\mu,q)$ or $\varphi(S,n,\epsilon,q)$ for each of the $(2^{\kappa}-1)$ dimension-$(\kappa-1)$ subspaces of $W$. Calculating $\Phi(S^{\{\kappa-1\}},n,\mu,q)$ or $\varphi(S^{\{\kappa-1\}},n,\epsilon,q)$ requires calculating $\zeta(S,q)$, which in turn requires a summation over all of the $2^{\kappa-1}$ elements of $S^{\{\kappa-1\}}$. The total complexity to calculate $\Lambda(n,\mu,q)$ or $\lambda(n,\epsilon,q)$ is therefore $\mathcal{O}(2^{2\kappa})$. 

The complexity of calculating the $\chi^2$ divergence via enumeration of all possible revealed bit patterns is the same as that, $\mathcal{O}(n^2 2^n)$, of calculating equivocation loss by the same method (because the quantity in the expectation used for the $\chi^2$ divergence is obtained from the equivocation loss by an $\mathcal{O}(1)$ operation). When comparing this complexity to the $\mathcal{O}(2^{2\kappa})$ required to compute $\lambda(n,\epsilon,q)$ by subspace decomposition, we observe that subspace decomposition has an advantage provided that $n \gtrapprox 2\kappa$, i.e., for a code of rate greater than one half. 

\subsection{All-Zero Column Exclusion for $\chi^2$ Divergence}
In seeking codes that are optimal in terms of the $\chi^2$ divergence, it is helpful to first establish a result analogous to Theorem \ref{thm:zeroColumn} relating to the all-zero column. 
\begin{theorem}
    \label{thm:x2_zc}
    Any code definition $q$ satisfying the nonnegativity \eqref{eqn:QConstraintPositive} and unit sum \eqref{eqn:QConstraintTotal} constraints can be locally optimal in terms of $\chi^2$ divergence $\lambda(n,\epsilon,q)$ only if $q_0 = 0$. 
\end{theorem}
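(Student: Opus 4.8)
The plan is to argue by contraposition, paralleling the structure behind Theorem \ref{thm:zeroColumn}: I will show that whenever $q_0 > 0$ there is a feasible perturbation of $q$ that strictly decreases $\lambda(n,\epsilon,q)$, so $q$ cannot be a local minimum (lower $\chi^2$ divergence being better). Since the closed form \eqref{eqn:x2_lambdaFinal} depends on $q$ only through the terms $\varphi(S,n,\epsilon,q)=(\epsilon/(2-\epsilon))^{n(1-\zeta(S,q))}$ ranging over the hyperplanes $S\in\Xi(W,\kappa-1)$, and since $\zeta(S,q)$ is linear in $q$, the whole analysis reduces to tracking how each $\zeta(S)$ responds to the perturbation.

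The crucial observation is that the all-zero vector $\nu(0)$ is contained in \emph{every} subspace of $W$, so $q_0$ appears in $\zeta(S)$ for every hyperplane $S$. I would therefore take the feasible direction $v$ that moves an infinitesimal amount of mass off $q_0$ onto some fixed nonzero column index $j$: set $v_0=-1$, $v_j=+1$, and $v_i=0$ otherwise. This respects the unit-sum constraint \eqref{eqn:QConstraintTotal} (as $\sum_i v_i=0$) and, because $q_0>0$ provides slack in the nonnegativity constraint \eqref{eqn:QConstraintPositive}, remains feasible for small positive step sizes. Differentiating $\zeta(S)$ along $v$ gives $\frac{d}{dt}\zeta(S)=[\nu(j)\in S]-1$, which vanishes when $\nu(j)\in S$ and equals $-1$ otherwise, since $\nu(0)$ always contributes its $-1$.

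Next I would exploit monotonicity of $\varphi$. For $0<\epsilon<1$ the base $\epsilon/(2-\epsilon)$ lies strictly in $(0,1)$, so $\varphi(S)$ is strictly positive and, as a function of $\zeta(S)$, strictly increasing, with $\partial\varphi/\partial\zeta=-n\ln(\epsilon/(2-\epsilon))\,\varphi(S)>0$. Consequently the directional derivative of $\lambda$ collapses to a sum over exactly those hyperplanes that do \emph{not} contain $\nu(j)$, each term being negative. A count via \eqref{eqn:numberOfSuperspaces} applied to the one-dimensional span of the nonzero vector $\nu(j)$ confirms this set is nonempty: the number of hyperplanes through $\nu(j)$ is $\binom{\kappa-1}{\kappa-2}_2=2^{\kappa-1}-1$, strictly fewer than the total $2^\kappa-1$, leaving $2^{\kappa-1}$ hyperplanes that exclude $\nu(j)$. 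Hence $\left.\frac{d}{dt}\lambda(q+tv)\right|_{t=0}=(2-\epsilon)^n 2^{-\kappa}\,n\ln\!\left(\frac{\epsilon}{2-\epsilon}\right)\!\sum_{S:\,\nu(j)\notin S}\varphi(S)<0$, exhibiting a strict descent direction and contradicting local optimality, so $q_0=0$.

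I expect the delicate points to be bookkeeping rather than conceptual. The main thing to get right is the feasibility and strictness of the perturbation: one must confirm the direction respects both simplex constraints and that the excluded-hyperplane sum is genuinely nonzero, which the count above secures. A secondary caveat is the degenerate boundary $\epsilon=0$, where every $\varphi(S)$ collapses and strictness is lost; the statement is understood for the operative regime $0<\epsilon<1$, precisely where $\ln(\epsilon/(2-\epsilon))<0$ makes the descent strict.
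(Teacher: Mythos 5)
Your proof is correct, and it differs from the paper's in one concrete way: the choice of descent direction. The paper (Appendix \ref{apx:x2_zc_proof}) reuses the direction $\grave{q}$ from the equivocation analogue (Theorem \ref{thm:zeroColumn}, Appendix \ref{Appendix:zcDerivative}), namely $\grave{q} = q - e_0$, which drains $q_0$ while scaling the remaining mass proportionally; the resulting directional derivative is
\begin{equation*}
    \grave{\lambda}(n,\epsilon,q) = (2-\epsilon)^n 2^{-\kappa} \, n\ln\!\left(\tfrac{\epsilon}{2-\epsilon}\right) \sum_{S \in \Xi(W,\kappa-1)} \bigl(1-\zeta(S,q)\bigr)\varphi(S,n,\epsilon,q) \,\mathrm{,}
\end{equation*}
a sum of nonpositive terms, strictly negative for at least one $S$ \emph{except} when $q_0=1$ (there $\zeta(S)=1$ for every hyperplane and the derivative vanishes), so the paper must dismiss $q_0=1$ by a separate remark that it is ``clearly not locally optimal.'' Your direction $v = e_j - e_0$, transferring mass onto a single fixed nonzero column, gives instead a strictly negative derivative proportional to $\sum_{S:\,\nu(j)\notin S}\varphi(S)$, and since $\varphi(S)>0$ for all $S$ whenever $0<\epsilon<1$ and your superspace count shows $2^{\kappa-1}$ hyperplanes omit $\nu(j)$, this is strictly negative \emph{uniformly in} $q_0 \in (0,1]$ --- your argument absorbs the $q_0=1$ edge case that the paper handles ad hoc. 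Both proofs otherwise rest on the same two pillars: the closed form \eqref{eqn:x2_lambdaFinal}--\eqref{eqn:x2_def_varphi} of Theorem \ref{thm:x2_Divergence}, which reduces everything to the hyperplane values $\zeta(S)$, and the fact that $\nu(0)$ lies in every subspace, so removing weight from $q_0$ can only lower (never raise) each $\zeta(S)$. Your caveat about $\epsilon=0$ applies equally to the paper's proof, which likewise needs $\ln(\epsilon/(2-\epsilon))$ finite and negative.
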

\begin{proof}
See Appendix \ref{apx:x2_zc_proof}. 
\end{proof}

\begin{corollary}
    \label{thm:x2_zcCorollary}
    If a code defined by $q$ is locally optimal in terms of $\chi^2$ divergence $\lambda(n,\epsilon,q)$ with respect to all the elements $q_{i}$ of $q$ except $q_0$ and subject to nonnegativity constraint \eqref{eqn:QConstraintPositive} and unit sum constraint \eqref{eqn:QConstraintTotal}, and if $q_0 = 0$, then $q$ is locally optimal with respect to all elements $q_i$ of $q$ subject to \eqref{eqn:QConstraintPositive} and \eqref{eqn:QConstraintTotal}.     
\end{corollary}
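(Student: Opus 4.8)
The plan is to mirror the argument behind Corollary~\ref{thm:zeroColumnCorollary}, but to exploit the especially clean closed form for the $\chi^2$ divergence furnished by Theorem~\ref{thm:x2_Divergence}. Writing $a = \epsilon/(2-\epsilon)$, equation~\eqref{eqn:x2_lambdaFinal} expresses $\lambda(n,\epsilon,q)$ as the constant $2^{-\kappa}(2-\epsilon)^{n}$ times $1+\sum_{S\in\Xi(W,\kappa-1)}\varphi(S)$, minus one, where by \eqref{eqn:x2_def_varphi} each $\varphi(S)=a^{\,n(1-\zeta(S))}$. Since $0<a<1$ for $0<\epsilon<1$, every $\varphi(S)$ is a strictly increasing function of $\zeta(S)$, and all of these terms enter $\lambda$ with the same positive weight. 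Because $\nu(0)$ is the all-zero vector, it lies in every subspace $S$, so by \eqref{eqn:defZeta} the coordinate $q_0$ contributes to $\zeta(S)$ for every $S\in\Xi(W,\kappa-1)$. This is the structural fact that drives the proof.

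First I would establish a monotonicity reduction. Fix any index $j\ge 1$. Given a feasible $q'$, let $q''$ be obtained from $q'$ by transferring the entire mass $q'_0$ out of coordinate $0$ into coordinate $j$; that is, $q''_0=0$, $q''_j=q'_j+q'_0$, and $q''_i=q'_i$ otherwise. Then $q''$ satisfies nonnegativity \eqref{eqn:QConstraintPositive} and unit-sum \eqref{eqn:QConstraintTotal}, has $q''_0=0$, and (for $q'$ close to $q$, since $q_0=0$ forces $q'_0$ small) lies close to $q$. Comparing $\zeta(S)$ under $q'$ and $q''$: the two vectors differ only by $+q'_0$ in the always-present coordinate $0$ and $-q'_0$ in coordinate $j$, so $\zeta(S)$ is identical under $q'$ and $q''$ when $\nu(j)\in S$ and is larger by $q'_0$ under $q'$ when $\nu(j)\notin S$. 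Monotonicity of $\varphi$ in $\zeta$ then makes every $\varphi(S)$ at least as large under $q'$ as under $q''$, whence $\lambda(n,\epsilon,q')\ge\lambda(n,\epsilon,q'')$. This is precisely the mechanism underlying Theorem~\ref{thm:x2_zc}: pushing mass into $q_0$ can only raise the divergence.

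With this in hand the corollary follows immediately. Take any feasible $q'$ in a neighborhood of $q$; since $q_0=0$, the nonnegativity constraint forces $q'_0\ge 0$, so the transfer above is well defined. Form $q''$ as described. Then $q''$ has $q''_0=0$ and lies near $q$, so the hypothesis that $q$ is locally optimal with respect to all elements other than $q_0$ gives $\lambda(n,\epsilon,q'')\ge\lambda(n,\epsilon,q)$. Chaining this with the monotonicity reduction yields $\lambda(n,\epsilon,q')\ge\lambda(n,\epsilon,q'')\ge\lambda(n,\epsilon,q)$, which is exactly local optimality of $q$ with respect to all elements subject to \eqref{eqn:QConstraintPositive} and \eqref{eqn:QConstraintTotal}.

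I expect the only real obstacle to be bookkeeping rather than insight: one must verify that the transferred vector $q''$ remains both feasible and inside the neighborhood on which the restricted local optimality is assumed, and take care with coordinates where $q_i=0$ so that the redistribution respects \eqref{eqn:QConstraintPositive}. The conceptual crux---that relocating mass into the all-zero column weakly increases every $\varphi(S)$ and hence $\lambda$---is immediate here because, unlike the signed constants $K_\delta$ appearing in the equivocation expression \eqref{eqn:expectedEquivocationEpsilon}, the subspace terms in \eqref{eqn:x2_lambdaFinal} all carry a single positive coefficient, so no delicate cancellation analysis is required.
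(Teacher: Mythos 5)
Your proposal is correct, and it actually does more than the paper does: the paper states this corollary without any proof, treating it as immediate from the mechanism of Theorem~\ref{thm:x2_zc}, whose proof in Appendix~\ref{apx:x2_zc_proof} is infinitesimal --- it computes the directional derivative $\grave{\lambda}(n,\epsilon,q)$ along the single direction $\grave{q}$ of \eqref{eqn:zc_defQGrave} (drain $q_0$ while proportionally scaling the other coordinates) and shows it is negative whenever $q_0>0$. That derivative statement rules out local optimality of points with $q_0>0$, but it does not by itself deliver the corollary's implication; your finite mass-transfer argument does. You rely on the same two structural facts as the paper ($\nu(0)$ lies in every $S\in\Xi(W,\kappa-1)$, and each $\varphi(S)$ of \eqref{eqn:x2_def_varphi} is increasing in $\zeta(S)$ and enters \eqref{eqn:x2_lambdaFinal} with one uniform positive weight), but you deploy them as a termwise comparison between an arbitrary nearby feasible $q'$ and its projection $q''$ onto the face $q_0=0$, yielding $\lambda(n,\epsilon,q')\ge\lambda(n,\epsilon,q'')\ge\lambda(n,\epsilon,q)$ with the second inequality supplied by the restricted-optimality hypothesis. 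The neighborhood bookkeeping you flag is harmless: $q''$ differs from $q'$ by $q'_0$ in exactly two coordinates and $q'_0\le\|q'-q\|$ because $q_0=0$, so $\|q''-q\|\le(1+\sqrt{2})\,\|q'-q\|$, and one simply shrinks the neighborhood allowed for $q'$; nonnegativity of $q''$ is automatic since mass is only added to coordinate $j$. Your closing remark is also accurate: this clean termwise monotonicity is precisely what is unavailable for equivocation loss, where the signed constants $K_\delta$ in \eqref{eqn:expectedEquivocationEpsilon} force the much heavier cancellation analysis of Appendix~\ref{Appendix:zcDerivative}.
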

Using these results, we may perform optimality calculations for the $\chi^2$ divergence using the simplified code definition vector $\qbar$ instead of $q$. 

\subsection{$\chi^2$ Divergence of the Uniform Vector Fraction Code}
In Theorem \ref{thm:uvf}, it was shown that the uniform vector fraction code is locally optimal for its size in terms of expected equivocation loss. A similar analysis may be performed using $\chi^2$ divergence instead of equivocation loss, and because of the simplicity of the expression \eqref{eqn:x2_lambdaFinal} for $\lambda(n,\epsilon,q)$, this analysis reaches a stronger conclusion than it does for equivocation loss. Specifically, the uniform vector fraction code is shown to be globally, rather than locally, optimal. 
\begin{theorem}
    \label{thm:x2_uvf}
    The code defined by $\bar{\qbar}$ is globally optimal in terms of $\chi^2$ divergence $\lambda(n,\epsilon,q)$, subject to the nonnegativity constraint \eqref{eqn:QConstraintPositive} and the unit sum constraint \eqref{eqn:QConstraintTotal}. 
\end{theorem}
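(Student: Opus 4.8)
The plan is to exploit the extreme simplicity of the closed form \eqref{eqn:x2_lambdaFinal} for $\lambda(n,\epsilon,q)$. Invoking Theorem \ref{thm:x2_zc} and Corollary \ref{thm:x2_zcCorollary}, I would first set $q_0 = 0$ and work with $\qbar$, so that optimality need only be checked over the nonzero columns. Since the prefactor $(2-\epsilon)^n 2^{-\kappa}$ and the trailing $-1$ in \eqref{eqn:x2_lambdaFinal} are positive constants independent of $\qbar$, minimizing $\lambda$ is equivalent to minimizing the single sum $F(\qbar) = \sum_{S \in \Xi(W,\kappa-1)} \varphi(S)$ over all $\qbar$ obeying the nonnegativity constraint \eqref{eqn:QConstraintPositive} and the unit-sum constraint \eqref{eqn:QConstraintTotal}. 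The whole problem thus reduces to a clean minimization over the $(\kappa-1)$-dimensional subspaces (hyperplanes) of $W$.

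Next I would analyze the structure of $F$. Writing $\alpha = \epsilon/(2-\epsilon)$, note that $0 \le \alpha < 1$ whenever $0 \le \epsilon < 1$, and from \eqref{eqn:x2_def_varphi} each term is $\varphi(S) = \alpha^{n(1-\zeta(S))}$. Viewed as a function of the scalar $\zeta(S)$, this is $\alpha^{n}\,(\alpha^{-n})^{\zeta(S)}$, an increasing exponential with base $\alpha^{-n} > 1$, hence strictly convex in $\zeta(S)$; and since each $\zeta(S)$ is linear in $\qbar$ by \eqref{eqn:defZeta}, $F$ is a convex function of $\qbar$ on the convex feasible set. The crucial structural fact is that the total $\sum_{S \in \Xi(W,\kappa-1)} \zeta(S)$ is the \emph{same} for every admissible $\qbar$. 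I would establish this by an incidence count: there are $\binom{\kappa}{\kappa-1}_2 = 2^\kappa - 1$ hyperplanes, and each fixed nonzero vector $\nu(i)$ lies in exactly $\binom{\kappa-1}{\kappa-2}_2 = 2^{\kappa-1} - 1$ of them (the hyperplanes through a fixed line correspond bijectively to the codimension-one subspaces of the $(\kappa-1)$-dimensional quotient $W/\langle \nu(i)\rangle$). Interchanging the order of summation and using $q_0 = 0$ together with \eqref{eqn:QConstraintTotal} then gives $\sum_S \zeta(S) = (2^{\kappa-1}-1)\sum_{i\ge 1}\qbar_i = 2^{\kappa-1}-1$, a constant.

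With these two ingredients the conclusion follows from Jensen's inequality. Because $\varphi$ is convex in $\zeta(S)$ and $\sum_S \zeta(S)$ is pinned at $2^{\kappa-1}-1$ across the $m = 2^\kappa-1$ hyperplanes, we obtain $F(\qbar) \ge (2^\kappa-1)\,\alpha^{n(1-\bar\zeta)}$, where $\bar\zeta = (2^{\kappa-1}-1)/(2^\kappa-1)$ is the common mean, with equality if and only if $\zeta(S) = \bar\zeta$ for every hyperplane $S$. It then remains to verify that the uniform vector fraction code attains this lower bound: for $\bar{\qbar}$ defined by \eqref{eqn:def_bar_q} every nonzero column carries weight $1/(2^\kappa-1)$, and since each hyperplane contains exactly $2^{\kappa-1}-1$ nonzero vectors, $\zeta(S) = (2^{\kappa-1}-1)/(2^\kappa-1) = \bar\zeta$ for all $S$ simultaneously. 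Hence $\bar{\qbar}$ equalizes all the $\zeta(S)$, achieves the Jensen bound, and is therefore a global minimizer of $\lambda(n,\epsilon,q)$.

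The main obstacle I anticipate is not the convexity step, which is routine, but making the incidence count airtight and dealing with degenerate parameter ranges. In particular I would double-check the count $2^{\kappa-1}-1$ of hyperplanes through a fixed nonzero vector, confirm that the constant-sum identity survives the reduction to $\qbar$ (it relies essentially on discarding the all-zero column, which lies in every hyperplane), and treat the boundary case $\epsilon = 0$ separately or by continuity, since there $\alpha = 0$, the exponential degenerates, and the strict-convexity argument underpinning the equality condition must be interpreted with care.
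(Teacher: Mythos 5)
Your proof is correct, but it resolves the key optimization step by a genuinely different argument than the paper. Both proofs perform the same reduction: after discarding the all-zero column, minimizing $\lambda(n,\epsilon,\qbar)$ amounts to minimizing $\sum_{S\in\Xi(W,\kappa-1)}\varphi(S)$ over hyperplane loads $\zeta(S)$ that are nonnegative and have fixed total $2^{\kappa-1}-1$; your incidence count (each nonzero vector lies in exactly $2^{\kappa-1}-1$ hyperplanes) is precisely the paper's constraint \eqref{eqn:x2_ZetaConstraintTotal}, derived there through the map $\mathcal{Z}$. From that point the paper argues by perturbation: it notes the feasible $\xi$-set is compact so a minimizer exists, shows that any $\xi$ whose minimum and maximum entries differ admits the strictly improving exchange direction \eqref{eqn:x2_zetaDotUniform}, concludes the minimizer must be the uniform vector $\bar{\xi}$, and finally verifies $\bar{\xi}=\mathcal{Z}(\bar{\qbar})$. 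You instead apply Jensen's inequality to the convex, increasing function $\zeta\mapsto\alpha^{n(1-\zeta)}$, obtaining the explicit bound $F(\qbar)\geq(2^\kappa-1)\,\alpha^{n(1-\bar\zeta)}$, which $\bar{\qbar}$ attains. Your route is shorter and buys several things: no existence-of-optimum argument, no need for invertibility of $\mathcal{Z}$ or the final correspondence check, and a closed-form optimal value. What the paper's heavier machinery buys is reusability: the $\xi$-domain perturbation framework (movement vectors $\dot{\xi}$, $\ddot{\xi}$, and the distance relation \eqref{eqn:x2_ZetaRelationDistance}) is exactly what is deployed again in Theorems \ref{thm:x2_sec1} and \ref{thm:x2_sec2}, where the added radius constraint makes the feasible set non-convex and a pure Jensen argument no longer suffices.

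Two small points to tighten. First, Theorem \ref{thm:x2_zc} and Corollary \ref{thm:x2_zcCorollary} speak only to local optimality, so to justify restricting to $q_0=0$ in a global claim you need one extra line: the feasible set is compact, so a global minimizer exists, and by Theorem \ref{thm:x2_zc} it cannot have $q_0>0$. Alternatively, your own incidence count disposes of $q_0$ without invoking that theorem at all: the zero column lies in every hyperplane, so $\sum_S\zeta(S)=2^{\kappa-1}-1+2^{\kappa-1}q_0$, and since the per-term exponential is increasing, Jensen gives a lower bound for $q_0>0$ that is strictly larger than the bound at $q_0=0$. Second, your worry about $\epsilon=0$ is harmless: there $\alpha=0$, the Jensen bound degenerates to $F\geq 0$, and $\bar{\qbar}$ attains $F=0$ because $\zeta(S)<1$ for every hyperplane, so global optimality (though not the uniqueness supplied by strict convexity) still follows.
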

\begin{proof}
    See Appendix \ref{apx:x2_uvf_proof}. 
\end{proof}
Because the uniform vector fraction code is both globally optimal and realizable as the simplex code for $n=2^{\kappa}-1$, we may also make a conclusion about the performance of the simplex code. 
\begin{corollary}
    \label{thm:x2_simplexCorollary}
     Simplex codes are optimal for their size as secrecy codes over the BEWC in terms of ML eavesdropper decoding.
\end{corollary}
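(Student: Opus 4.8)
The plan is to derive this corollary directly from Theorem~\ref{thm:x2_uvf} together with the operational interpretation of the $\chi^2$ divergence developed around \eqref{eqn:x2_lambdaExpectation}. First I would make the notion of ``optimality in terms of ML eavesdropper decoding'' precise. Given an observation $z$, there are $2^{H(M|Z=z)}$ messages consistent with the revealed bits, and under the uniform-$M$ assumption these are equally likely a posteriori; hence an eavesdropper restricted to a single maximum-likelihood guess succeeds with probability $2^{-H(M|Z=z)}$. Averaging over $Z$, the eavesdropper's ML success probability is $P_{\mathrm{ML}} = \mathds{E}\!\left(2^{-H(M|Z)}\right)$, and this is the quantity we wish to minimize over codes of a fixed size.

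Next I would establish the monotone link between $P_{\mathrm{ML}}$ and the $\chi^2$ divergence. From \eqref{eqn:x2_lambdaExpectation} we have $\lambda(n,\epsilon,q) = \mathds{E}\!\left(2^{k-H(M|Z)}\right) - 1 = 2^{k} P_{\mathrm{ML}} - 1$. Since every code under comparison shares the same message length $k$, the map $P_{\mathrm{ML}} \mapsto \lambda$ is an increasing affine function, so minimizing the $\chi^2$ divergence over codes of a given size is exactly equivalent to minimizing the eavesdropper's ML success probability.

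Then I would invoke global optimality. By Theorem~\ref{thm:x2_uvf}, the uniform vector fraction code $\bar{\qbar}$ minimizes $\lambda(n,\epsilon,q)$ over all $q$ obeying the nonnegativity \eqref{eqn:QConstraintPositive} and unit-sum \eqref{eqn:QConstraintTotal} constraints. Any realizable coset code of blocklength $n$ corresponds to such a $q$ (one additionally satisfying the realizability constraint \eqref{eqn:QConstraintRealizable}), so its $\chi^2$ divergence is bounded below by $\lambda(n,\epsilon,\bar{\qbar})$. Finally, for $n = 2^{\kappa}-1$ the vector $\bar{\qbar}$ satisfies \eqref{eqn:QConstraintRealizable} and is realized precisely by the simplex code, as noted following \eqref{eqn:def_bar_q}. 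Hence the simplex code attains the global minimum of $\lambda$, and by the equivalence above it attains the minimum of $P_{\mathrm{ML}}$ among all codes of its size, which proves the claim.

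I expect the only real subtlety to be the first step: pinning down the operational definition of ML eavesdropper optimality and confirming that the single-guess success probability is exactly $2^{-H(M|Z=z)}$ under the uniform-message assumption (so that the ML decoder, facing a tie among equally likely consistent messages, succeeds with that probability). The remaining steps---the affine relationship and the transfer of continuous global optimality to the realizable simplex code as a subset of the feasible region---are immediate once that interpretation is fixed.
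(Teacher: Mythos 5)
Your proposal is correct and follows essentially the same route as the paper: the paper's (implicit) justification is precisely the chain you give, namely the operational identification of $\lambda(n,\epsilon,q) = \mathds{E}\left(2^{k-H(M|Z)}\right)-1$ with an affine function of the eavesdropper's single-guess ML success probability, followed by the global optimality of $\bar{\qbar}$ from Theorem~\ref{thm:x2_uvf} and the realizability of $\bar{\qbar}$ as the simplex code at $n=2^{\kappa}-1$. Your write-up is in fact more careful than the paper's, which states the corollary with only a one-sentence appeal to these facts.
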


\subsection{$\chi^2$ Divergence of the First Subspace Exclusion Code}
In similar fashion to the expanded conclusion reached in Theorem \ref{thm:x2_uvf}, the simplicity of the expression \eqref{eqn:x2_lambdaFinal} for $\lambda(n,\epsilon,q)$ also provides for stronger conclusions regarding the $\chi^2$ divergence performance of subspace exclusion codes. In the case of the first subspace exclusion code, we impose the same constraints under the $\chi^2$ divergence as under equivocation loss and show the code to be globally, rather than locally, optimal. 
\begin{theorem}
    \label{thm:x2_sec1}
    The code definition vector $\check{\qbar}^{\{\kappa-1\}}$ is globally optimal in terms of $\chi^2$ divergence $\lambda(n,\epsilon,\qbar)$ subject to the nonnegativity constraint \eqref{eqn:QConstraintPositive}, the unit-sum constraint \eqref{eqn:QConstraintTotal}, and the radius constraint \eqref{eqn:sec_radiusConstraint}. 
\end{theorem}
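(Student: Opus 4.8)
The plan is to use Theorem~\ref{thm:x2_Divergence} to turn the global optimality claim into a single one-variable convexity inequality. Since $(2-\epsilon)^n 2^{-\kappa}>0$ for $0<\epsilon<1$, equation~\eqref{eqn:x2_lambdaFinal} shows that minimizing $\lambda(n,\epsilon,\qbar)$ is equivalent to minimizing $\sum_{S\in\Xi(W,\kappa-1)}\varphi(S)$. Setting $\gamma=\big(\tfrac{2-\epsilon}{\epsilon}\big)^{n}>1$, definition~\eqref{eqn:x2_def_varphi} gives $\varphi(S)=\gamma^{\zeta(S)-1}$, so the problem reduces to minimizing $\sum_{S}\gamma^{\zeta(S)}$ over the $N=2^{\kappa}-1$ hyperplanes $S\in\Xi(W,\kappa-1)$, where each $\zeta(S)\in[0,1]$.

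First I would record that the three constraints fix the first two moments of the multiset $\{\zeta(S)\}$. By \eqref{eqn:numberOfSuperspaces}, each nonzero column lies in $\binom{\kappa-1}{\kappa-2}_2=2^{\kappa-1}-1$ hyperplanes, so the unit-sum constraint together with $q_0=0$ gives $M_1:=\sum_S\zeta(S)=2^{\kappa-1}-1$ for \emph{every} feasible code. Counting incidences of pairs of columns---two distinct nonzero columns are linearly independent over $\mathbb{F}_2$ and hence lie together in $\binom{\kappa-2}{\kappa-3}_2=2^{\kappa-2}-1$ hyperplanes---yields $\sum_S\zeta(S)^2=(2^{\kappa-2}-1)+2^{\kappa-2}\sum_i q_i^2$. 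Expanding \eqref{eqn:sec_radiusConstraint} and using $\qbar\cdot\bar{\qbar}=|\bar{\qbar}|^2=1/(2^{\kappa}-1)$ shows the radius constraint is equivalent to fixing $\sum_i q_i^2=1/2^{\kappa-1}$, hence to fixing $M_2:=\sum_S\zeta(S)^2=2^{\kappa-2}-\tfrac12$. The profile of $\check{\qbar}^{\{\kappa-1\}}$ is $\zeta(U)=0$ on the excluded hyperplane and $\zeta(S)=\tfrac12$ on the remaining $N-1$ hyperplanes, which realizes exactly these moments.

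The key step is to bound $\sum_S\gamma^{\zeta(S)}$ below by a linear functional of $(M_1,M_2)$. I would construct the quadratic $\tilde Q(\zeta)=a\zeta^2+b\zeta+c$ that is tangent to $\gamma^{\zeta}$ at $\zeta=\tfrac12$ and matches it at $\zeta=0$; the conditions $\tilde Q(0)=1$, $\tilde Q(\tfrac12)=\gamma^{1/2}$, $\tilde Q'(\tfrac12)=\gamma^{1/2}\ln\gamma$ determine $a=2\gamma^{1/2}\ln\gamma-4\gamma^{1/2}+4>0$, $b=\gamma^{1/2}\ln\gamma-a$, and $c=1$. \emph{Provided} $\tilde Q(\zeta)\le\gamma^{\zeta}$ for all $\zeta\in[0,1]$, the fixed moments give, for every feasible $\qbar$, $\sum_S\gamma^{\zeta(S)}\ge\sum_S\tilde Q(\zeta(S))=aM_2+bM_1+cN$. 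Because $\tilde Q$ agrees with $\gamma^{\zeta}$ at the two values $0$ and $\tfrac12$ taken by $\check{\qbar}^{\{\kappa-1\}}$, this constant lower bound equals $\sum_S\gamma^{\zeta(S)}$ evaluated at $\check{\qbar}^{\{\kappa-1\}}$, which establishes global optimality. Equality requires every $\zeta(S)\in\{0,\tfrac12\}$, and the moment values then force exactly one hyperplane with $\zeta=0$; since the $\zeta$-profile determines $\qbar$, the optimum is a subspace exclusion code, unique up to the choice of excluded hyperplane.

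The main obstacle is the pointwise inequality $\tilde Q(\zeta)\le\gamma^{\zeta}$ on $[0,1]$ for every $\gamma>1$. I would prove it by analyzing $g(\zeta)=\gamma^{\zeta}-\tilde Q(\zeta)$, which by construction satisfies $g(0)=0$ and $g(\tfrac12)=g'(\tfrac12)=0$, and also $g(1)=\gamma-\gamma^{1/2}\ln\gamma-1>0$. Since $g''(\zeta)=\gamma^{\zeta}(\ln\gamma)^2-2a$ is strictly increasing, $g$ has a single inflection point and is concave-then-convex on $[0,1]$; the double zero at $\tfrac12$ is therefore a local minimum with value $0$, and a short case analysis on the two subintervals $[0,\tfrac12]$ and $[\tfrac12,1]$ (using the simple zero at $0$ and $g(1)>0$) shows $g\ge0$ throughout. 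The remaining pieces---the two Gaussian-binomial incidence counts, the equivalence of \eqref{eqn:sec_radiusConstraint} to fixing $\sum_i q_i^2$, and the verification that $aM_2+bM_1+cN$ reproduces the value $1+(2^{\kappa}-2)\gamma^{1/2}$ attained by $\check{\qbar}^{\{\kappa-1\}}$---are routine.
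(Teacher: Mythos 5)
Your proposal is correct, and it takes a genuinely different route from the paper. The paper's proof works in the $\xi=\mathcal{Z}(\qbar)$ domain and is variational: it invokes compactness to guarantee an optimum exists, uses a three-coordinate exchange direction $\dot{\xi}$ (smallest, second-smallest, largest elements) to show any profile with $\xi_a<\xi_b<\xi_c$ admits a strict descent direction, then uses a second-order analysis (with the curvature correction $\ddot{\xi}$ forced by the radius constraint) to rule out the case $\xi_a=\xi_b$ and accept $\xi_b=\xi_c$, finally pinning the values to $0$ and $\tfrac12$. You instead observe that the unit-sum and radius constraints \emph{fix the first two moments} of the hyperplane profile $\{\zeta(S)\}_{S\in\Xi(W,\kappa-1)}$ --- your incidence counts ($2^{\kappa-1}-1$ hyperplanes through a nonzero vector, $2^{\kappa-2}-1$ through a pair, giving $M_1=2^{\kappa-1}-1$ and $M_2=2^{\kappa-2}-\tfrac12$) are correct and match \eqref{eqn:ufc_nablaSuperSpaceCount}--\eqref{eqn:ufc_nablaSuperSpaceCount2} --- and then produce a quadratic minorant of $\gamma^{\zeta}$ tangent at $\zeta=\tfrac12$ and matching at $\zeta=0$, so that the resulting lower bound $aM_2+bM_1+cN$ is a constant over the feasible set and is attained exactly by $\check{\qbar}^{\{\kappa-1\}}$. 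This is a direct global certificate: it needs no existence argument, no case analysis on candidate optima, and it hands you the equality characterization (profile supported on $\{0,\tfrac12\}$) for free; it also sidesteps the paper's delicate point that not every constrained $\xi$ corresponds to a nonnegative $\qbar$, since you only ever use the forward map. The one step you gloss over is placing the inflection point of $g(\zeta)=\gamma^{\zeta}-\tilde Q(\zeta)$ strictly left of $\tfrac12$ (your ``therefore a local minimum'' needs this); it does check out, e.g.\ via $g''(\tfrac12)=e^{x/2}(x^2-4x+8)-8>0$ for $x=\ln\gamma>0$, or by noting that concavity of $g$ on all of $[0,\tfrac12]$ together with $g(0)=g(\tfrac12)=g'(\tfrac12)=0$ would force $g\equiv 0$ there, contradicting strict monotonicity of $g''$; the same kind of elementary estimate confirms $a>0$ and $g(1)>0$. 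What the paper's heavier machinery buys in exchange is reusability: its exchange-direction lemma and second-order framework are carried over verbatim into the proof of Theorem~\ref{thm:x2_sec2} for the lower-dimensional subspace exclusion codes, where the moment constraints alone no longer determine the optimum.
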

\begin{proof}
    See Appendix \ref{apx:x2_sec1Proof}. 
\end{proof}
Because the augmented Hadamard code is equivalent to the first subspace exclusion code (with $n=2^{\kappa-1}$), we may also make a conclusion about the optimality of these codes. 
\begin{corollary}
    Augmented Hadamard codes are optimal for their size as secrecy codes over the BEWC in terms of ML eavesdropper decoding.
\end{corollary}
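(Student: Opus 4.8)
The plan is to reduce the claim to a one-dimensional convexity estimate on the values $\zeta(S)$ taken over the dimension-$(\kappa-1)$ subspaces (hyperplanes) of $W$. Since Corollary \ref{thm:x2_zcCorollary} lets me fix $q_0=0$ and work with $\qbar$, and since in the closed form \eqref{eqn:x2_lambdaFinal} the prefactor $(2-\epsilon)^n 2^{-\kappa}$ and the offset are independent of $\qbar$, minimizing $\lambda$ is equivalent to minimizing $\sum_{S\in\Xi(W,\kappa-1)}\varphi(S)$. Writing $\gamma=\epsilon/(2-\epsilon)\in[0,1)$ and $\beta=-n\ln\gamma>0$, I have $\varphi(S)=\gamma^{n}e^{\beta\zeta(S)}$, so the problem becomes: minimize $\sum_S g(\zeta(S))$, with $g(x)=e^{\beta x}$ strictly convex, over the $N=2^\kappa-1$ hyperplanes.

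Next I would convert the two active constraints into fixed moments of the multiset $\{\zeta(S)\}_S$. Counting incidences (each nonzero column lies in exactly $2^{\kappa-1}-1$ hyperplanes), the unit-sum constraint \eqref{eqn:QConstraintTotal} gives $\sum_S\zeta(S)=2^{\kappa-1}-1$. Counting pairs (two distinct nonzero columns share exactly $2^{\kappa-2}-1$ hyperplanes) yields $\sum_S\zeta(S)^2=2^{\kappa-2}\sum_i\qbar_i^2+(2^{\kappa-2}-1)$; together with $|\qbar-\bar{\qbar}|^2=\sum_i\qbar_i^2-1/(2^\kappa-1)$, the radius constraint \eqref{eqn:sec_radiusConstraint} therefore fixes $\sum_S\zeta(S)^2$ as well. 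A direct check shows that $\check{\qbar}^{\{\kappa-1\}}$ produces exactly one hyperplane (the excluded $U$) with $\zeta=0$ and $2^\kappa-2$ hyperplanes with $\zeta=1/2$, and that this profile meets both moment values.

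With the two moments pinned down, global optimality follows from a \emph{supporting-quadratic} argument. I would construct the quadratic $Q(x)=p+qx+rx^2$ determined by $Q(0)=g(0)$, $Q(1/2)=g(1/2)$, $Q'(1/2)=g'(1/2)$ (a simple contact at the boundary value $0$ and a tangency at the interior value $1/2$, the two values used by the SEC), and show $g(x)\ge Q(x)$ for all $x\in[0,1]$. Granting this, for every feasible $\qbar$ one has $\sum_S g(\zeta(S))\ge \sum_S Q(\zeta(S))=pN+q\sum_S\zeta(S)+r\sum_S\zeta(S)^2$, a quantity determined entirely by the two fixed moments; since equality holds for $\check{\qbar}^{\{\kappa-1\}}$ (all of whose $\zeta$-values are $0$ or $1/2$, where $g$ and $Q$ agree), this establishes $\lambda(n,\epsilon,\qbar)\ge\lambda(n,\epsilon,\check{\qbar}^{\{\kappa-1\}})$ globally.

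The main obstacle is proving $g-Q\ge 0$ on $[0,1]$ \emph{uniformly in} $\beta$. The function $g-Q=e^{\beta x}-(p+qx+rx^2)$ has $(g-Q)''=\beta^2 e^{\beta x}-2r$, so it is concave then convex with a single inflection at $x_0=\beta^{-1}\ln(2r/\beta^2)$; because such a shape admits at most three real zeros and the construction already forces a simple zero at $0$ and a double zero at $1/2$, the sign is controlled once I verify $x_0\in(0,1/2)$. Establishing $x_0\in(0,1/2)$ for every $\beta>0$, equivalently bounding $r(\beta)=4+e^{\beta/2}(2\beta-4)$ between $\beta^2/2$ and $\tfrac12\beta^2 e^{\beta/2}$, is the delicate step; I expect it to require a dedicated monotonicity estimate (of the same flavor as the inequalities in Lemmas \ref{thm:exponentialMersenneLemma} and \ref{SuperexponentialLemma}), checking the limits $x_0\to 1/3$ as $\beta\to 0$ and $x_0\to 1/2^-$ as $\beta\to\infty$ and ruling out any excursion outside $(0,1/2)$ in between.
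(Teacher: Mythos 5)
Your proposal is correct in substance, and it proves the underlying result---Theorem \ref{thm:x2_sec1}, from which the corollary follows once the augmented Hadamard code is identified with $\check{\qbar}^{\{\kappa-1\}}$ at $n=2^{\kappa-1}$ and $\chi^2$ divergence is recognized as a monotone function of the eavesdropper's expected ML success probability---by a genuinely different route than the paper. The paper's proof (Appendix \ref{apx:x2_sec1Proof}) works in the same hyperplane domain $\xi=\mathcal{Z}(\qbar)$, but proceeds variationally: it builds a three-coordinate perturbation $\dot{\xi}$ (smallest, second-smallest, largest elements) that preserves the sum and radius constraints, shows via a series expansion that $\dot{\lambda}<0$ whenever $\xi_a<\xi_b<\xi_c$, then uses second-derivative computations to reject the case $\xi_a=\xi_b$ and accept $\xi_b=\xi_c$, pinning the optimum to one zero entry and all others equal to $\tfrac12$; it also leans on compactness for existence of a minimizer. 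You instead convert the unit-sum and radius constraints into fixed first and second moments of the multiset $\{\zeta(S)\}$ (your incidence counts $2^{\kappa-1}-1$ per column and $2^{\kappa-2}-1$ per pair are correct, and they match the paper's \eqref{eqn:x2_zetaSumSquares}), and then certify global optimality with a supporting quadratic: $e^{\beta x}\ge Q(x)$ on $[0,1]$ with equality exactly at the values $0$ and $\tfrac12$ realized by the subspace exclusion code. This is an LP-bound-style argument that yields the global statement in one stroke, with no existence argument and no case analysis; the paper's variational machinery, by contrast, is reused almost verbatim for the deeper Theorem \ref{thm:x2_sec2}, which your quadratic (having only three free parameters) would not handle without modification.

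The one step you leave open---the sign of $h=g-Q$ on $[0,1]$---is true, and easier than you anticipate. Your own zero-counting remark already closes it without localizing the inflection point: $h(0)=h(\tfrac12)=h'(\tfrac12)=0$, and since $h''(x)=\beta^2e^{\beta x}-2r$ has exactly one real zero (as $r>0$), repeated Rolle gives $h$ at most three real zeros counted with multiplicity; hence $0$ is simple, $\tfrac12$ is exactly double, and there are no others. Since $h(x)>0$ for large $x$ and the sign of $h$ is preserved through a zero of even multiplicity, $h>0$ on $(0,\tfrac12)\cup(\tfrac12,\infty)$, which is all you need. If you prefer the route through $x_0\in(0,\tfrac12)$, both bounds are one-line monotonicity facts: with $u=\beta/2$, the lower bound $r>\beta^2/2$ reads $f(u)=2+2(u-1)e^u-u^2>0$, which holds since $f(0)=0$ and $f'(u)=2u(e^u-1)>0$; the upper bound $r<\tfrac12\beta^2e^{\beta/2}$ reads $e^u\left((u-1)^2+1\right)>2$, which holds since that function equals $2$ at $u=0$ and has derivative $u^2e^u>0$. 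With either completion, your proof plan goes through.
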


\subsection{$\chi^2$ Divergence of Other Subspace Exclusion Codes}
When considering subspace exclusion codes other than the first (that is, with $u<\kappa-1$), we may evaluate optimality using a technique similar to that used in Theorem \ref{thm:x2_sec1}. When the applicable constraints (\eqref{eqn:QConstraintPositive}, \eqref{eqn:QConstraintTotal}, and \eqref{eqn:sec_radiusConstraint}) are applied, however, the code definition vector $\check{\qbar}^{\{u\}}$ is in general not even locally optimal, as $\chi^2$ divergence is minimized for a given radius by excluding as large a subspace as possible. At the radius $|\rho(u)|$ required for the dimension-$u$ subspace exclusion code, however, excluding a subspace with dimension higher than $u$ results in violation of the realizability constraint \eqref{eqn:QConstraintRealizable} as well as the nonnegativity constraint \eqref{eqn:QConstraintPositive}, preventing the resulting $\qbar$ from defining a realizable code. To overcome this limitation, the natural solution is to add the constraint that $\qbar$ must maintain at least a minimum distance from any previous (higher-$u$) subspace exclusion code. Such a limitation is indeed sufficient to ensure that $\check{\qbar}^{\{u\}}$ is optimal. In our analysis, however, we show that it is in fact only necessary to require that $\qbar$ maintain a minimum distance from each of the first ($u=\kappa-1$) subspace exclusion codes. The minimum required distance is equal to the distance between $\check{\qbar}^{\{u\}}$ and $\check{\qbar}^{\{\kappa-1\}}$, so the constraint may be expressed as 
\begin{equation}
    \label{eqn:x2_QConstraintSEC1}
    |\qbar-\check{\qbar}^{[S]}| \geq |\check{\qbar}^{\{u\}} - \check{\qbar}^{\{\kappa-1\}}| \;\; \forall \;\;S \in \Xi(W,\kappa-1) \mathrm{.} 
\end{equation}
Using this constraint, all subspace exclusion codes $\check{\qbar}^{\{u\}}$ may be shown to be optimal. 
\begin{theorem}
    \label{thm:x2_sec2}
     The code definition vectors $\check{\qbar}^{\{u\}}$ are globally optimal in terms of $\chi^2$ divergence $\lambda(n,\epsilon,\qbar)$ subject to the nonnegativity constraint \eqref{eqn:QConstraintPositive}, the unit-sum constraint \eqref{eqn:QConstraintTotal}, the radius constraint \eqref{eqn:sec_radiusConstraint}, and the constraint \eqref{eqn:x2_QConstraintSEC1} on the minimum distance from a first subspace exclusion code. 
\end{theorem}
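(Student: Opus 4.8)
The plan is to show that, under the four stated constraints, minimizing $\lambda(n,\epsilon,\qbar)$ collapses to a one-dimensional moment problem in the quantities $\zeta(S)$, which a single interpolating quadratic then resolves. First I would reduce the objective: by \eqref{eqn:x2_lambdaFinal} and the positivity of $(2-\epsilon)^n2^{-\kappa}$, minimizing $\lambda$ is equivalent to minimizing $F(\qbar)=\sum_{S\in\Xi(W,\kappa-1)}\varphi(S)$, and by \eqref{eqn:x2_def_varphi} each term is $\varphi(S)=f(\zeta(S))$ with $f(\zeta)=\left(\epsilon/(2-\epsilon)\right)^{n(1-\zeta)}$. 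For $\epsilon\in(0,1)$ the base lies in $(0,1)$, so every derivative of $f$ is strictly positive; in particular $f$ is increasing and convex with $f'''>0$, a fact needed below. Using Corollary \ref{thm:x2_zcCorollary} I set $q_0=0$ and work throughout with $\qbar$.

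Second, I would record three identities obtained by Gaussian-binomial counting over $\Xi(W,\kappa-1)$. Since each nonzero column vector lies in exactly $2^{\kappa-1}-1$ of the $2^\kappa-1$ hyperplanes, the unit-sum constraint \eqref{eqn:QConstraintTotal} gives $\sum_S\zeta(S)=2^{\kappa-1}-1$, a constant. Expanding $\sum_S\zeta(S)^2$ and using that two distinct nonzero vectors lie together in $2^{\kappa-2}-1$ hyperplanes yields $\sum_S\zeta(S)^2=2^{\kappa-2}\lVert\qbar\rVert^2+(2^{\kappa-2}-1)$; because the radius constraint \eqref{eqn:sec_radiusConstraint} fixes $\lVert\qbar-\bar{\qbar}\rVert$ and hence $\lVert\qbar\rVert^2$, this second moment is also constant across the feasible set. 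Finally, a direct expansion gives $\lvert\qbar-\check{\qbar}^{[S]}\rvert^2=\lVert\qbar\rVert^2+(2\zeta(S)-1)/2^{\kappa-1}$, so under the radius constraint the minimum-distance constraint \eqref{eqn:x2_QConstraintSEC1} is exactly the linear constraint $\zeta(S)\ge\zeta_{\min}$ for every $S$, where $\zeta_{\min}$ equals the smaller of the two $\zeta$-values attained by $\check{\qbar}^{\{u\}}$, namely $(2^{\kappa-1-u}-1)/(2^{\kappa-u}-1)<1/2$.

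Third, the minimization now reads: minimize $\sum_S f(\zeta(S))$ subject to the first two moments of $\{\zeta(S)\}$ being fixed and $\zeta(S)\ge\zeta_{\min}$. Here I would introduce the quadratic $g(\zeta)=\alpha+\beta\zeta+\gamma\zeta^2$ determined by $g(\zeta_{\min})=f(\zeta_{\min})$, $g(1/2)=f(1/2)$, and $g'(1/2)=f'(1/2)$. Because both moments are fixed, $\sum_S g(\zeta(S))=\alpha(2^\kappa-1)+\beta\sum_S\zeta(S)+\gamma\sum_S\zeta(S)^2$ takes the same value for every feasible $\qbar$. I would then show $f-g\ge0$ on $[\zeta_{\min},1]$: since $g$ is quadratic, $(f-g)'''=f'''>0$, so $f-g$ has at most three zeros counted with multiplicity; the construction forces a simple zero at $\zeta_{\min}$ and a double zero at $1/2$, exhausting the three, and since $f-g\to+\infty$ (exponential over quadratic) the difference is nonnegative on $[\zeta_{\min},\infty)$, vanishing only at $\zeta_{\min}$ and $1/2$. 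Consequently $F(\qbar)\ge\sum_S g(\zeta(S))$ is a constant lower bound, met with equality precisely when every $\zeta(S)\in\{\zeta_{\min},1/2\}$; since $\check{\qbar}^{\{u\}}$ attains exactly these two values (the value $\zeta_{\min}$ on the $2^{\kappa-u}-1$ hyperplanes containing the excluded subspace $U$ and $1/2$ on the rest), it is a global minimizer.

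I expect the main obstacle to be the global sign verification $f-g\ge0$: the interpolation pins only three conditions, and concluding nonnegativity over the whole feasible range requires the strict positivity $f'''>0$ together with the root-multiplicity bookkeeping, much as in the supporting estimate of Lemma \ref{SuperexponentialLemma}. A secondary check is that the two $\zeta$-values of $\check{\qbar}^{\{u\}}$ are genuinely $\zeta_{\min}$ and $1/2$ with the correct multiplicities, which fixes the interpolation nodes and makes the bound tight exactly at the claimed code; the case $u=\kappa-1$ then recovers Theorem \ref{thm:x2_sec1} with $\zeta_{\min}=0$ and a vacuous constraint \eqref{eqn:x2_QConstraintSEC1}.
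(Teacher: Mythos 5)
Your proposal is correct, and it reaches the theorem by a genuinely different route than the paper. Both proofs share the same first phase: pass to the hyperplane profile $\zeta(S)$, $S \in \Xi(W,\kappa-1)$, note that the unit-sum constraint fixes $\sum_S \zeta(S) = 2^{\kappa-1}-1$, that the radius constraint (via the fixed $\lvert\qbar\rvert^2$) fixes $\sum_S \zeta(S)^2$, and that under the radius constraint the minimum-distance constraint \eqref{eqn:x2_QConstraintSEC1} collapses to the element-wise floor $\zeta(S) \geq \zeta_{\min}$; your $(2^{\kappa-1-u}-1)/(2^{\kappa-u}-1)$ equals the paper's $(2^{\kappa-1}-2^{u})/(2^{\kappa}-2^{u})$, so this reduction matches the paper's \eqref{eqn:x2_ZetaConstraintMinDistSimplified} exactly. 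Where you diverge is the optimization core. The paper runs a variational, case-elimination argument: it reuses the three-coordinate perturbation $\dot{\xi}$ and the second-derivative analysis from the proof of Theorem \ref{thm:x2_sec1} to show any optimum has every coordinate at either the floor or a common maximum value, parametrizes the candidates by the number $N$ of coordinates at the floor, solves the two moment constraints for the spread $\xi_{\Delta}^2$, and uses monotonicity in $N$ plus feasibility to conclude that the only possible configuration is that of $\check{\qbar}^{\{u\}}$. You instead use a quadratic-minorant (LP-bound style) argument: the secant-and-tangent quadratic $g$ satisfies $f \geq g$ on $[\zeta_{\min},\infty)$ by the $f'''>0$ root-multiplicity count, $\sum_S g(\zeta(S))$ is a universal constant because the first two moments are fixed over the feasible set, and equality forces a two-valued profile on the nodes $\{\zeta_{\min},\tfrac{1}{2}\}$ — which is exactly the profile of $\check{\qbar}^{\{u\}}$ ($\zeta_{\min}$ on the $2^{\kappa-u}-1$ hyperplanes containing $U$, one half elsewhere). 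Your route buys a proof that is global from the outset (no existence-of-optimum or compactness appeal, no second-derivative case analysis, no algebra in $N$), and the equality condition together with the fixed first moment immediately pins the multiplicities of the two node values, so it yields uniqueness of the optimal profile as a byproduct. The paper's route buys economy in context — it recycles machinery already built for Theorem \ref{thm:x2_sec1} — and its explicit $N$-parametrization exposes the feasibility boundary (real solutions only at $N=2^{\kappa-u}-2$ and $N=2^{\kappa-u}-1$, both giving the same profile). One small caveat for your write-up: the argument needs $0<\epsilon<1$ so that $f$ is a genuine increasing exponential; the endpoint cases are degenerate or trivial, and the paper makes the same implicit assumption.
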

\begin{proof}
    See Appendix \ref{apx:x2_sec2Proof}. 
\end{proof}

Because the additional constraint \eqref{eqn:x2_QConstraintSEC1} does not eliminate any values of $\qbar$ which are realizable with $n=2^{\kappa}-2^{u}$, we may also make conclusions regarding realizations of subspace exclusion codes. 

\begin{corollary}
    \label{thm:x2_SEC2_Corollary}
     Subspace exclusion codes are optimal for their size as secrecy codes over the BEWC in terms of ML eavesdropper decoding.
\end{corollary}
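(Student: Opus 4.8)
The plan is to reduce the minimization of $\lambda(n,\epsilon,\qbar)$ to a separable convex problem in the hyperplane occupancies $\zeta(S)$, and then to neutralize the nonconvex radius constraint with a fixed-second-moment, supporting-quadratic argument. First, by \eqref{eqn:x2_lambdaFinal} and Corollary \ref{thm:x2_zcCorollary}, minimizing $\lambda$ over feasible $\qbar$ is equivalent to minimizing $F(\qbar)=\sum_{S\in\Xi(W,\kappa-1)}\varphi(S)$, since the prefactor $(2-\epsilon)^{n}2^{-\kappa}$ is a positive constant. Writing $\gamma=(\epsilon/(2-\epsilon))^{n}\in(0,1)$, each term $\varphi(S)=\gamma^{1-\zeta(S)}$ is a strictly convex, increasing function of the linear quantity $\zeta(S)$, so $F$ is convex. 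The occupancies $(\zeta(S))_{S}$, indexed by the $2^{\kappa}-1$ hyperplanes of $W$, are the image of $\qbar$ under the point--hyperplane incidence map $M$.

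Second, I would show that all four constraints, restricted to the zero-sum subspace on which $\qbar-\bar{\qbar}$ lives, become simple moment and bound conditions on $(\zeta(S))_{S}$. A direct Gram computation gives $M^{\top}M=2^{\kappa-2}I+(2^{\kappa-2}-1)J$, so on zero-sum vectors $M$ acts as a similarity, yielding $\sum_{S}(\zeta(S)-\bar{\zeta})^{2}=2^{\kappa-2}\lvert\qbar-\bar{\qbar}\rvert^{2}$. Hence the unit-sum constraint \eqref{eqn:QConstraintTotal} fixes $\sum_{S}\zeta(S)=2^{\kappa-1}-1$, and the radius constraint \eqref{eqn:sec_radiusConstraint} fixes $\sum_{S}\zeta(S)^{2}$, so that both the first two empirical moments and the quantity $V=\sum_{S}(\zeta(S)-\tfrac12)^{2}$ are constants over the feasible set. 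Applying the same similarity to the minimum-distance constraint \eqref{eqn:x2_QConstraintSEC1} and using the identity $\lvert\zeta-\zeta^{(S)}\rvert^{2}=\zeta(S)-\tfrac14+V$, where $\zeta^{(S)}$ is the occupancy profile of the first subspace exclusion code excluding $S$, the constraint collapses (because $V$ is fixed) to a uniform lower bound $\zeta(S)\ge\zeta_{1}$, with $\zeta_{1}=(2^{\kappa-1}-2^{u})/(2^{\kappa}-2^{u})$ the small occupancy value attained by $\check{\qbar}^{\{u\}}$ on the hyperplanes containing $U$.

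Third, with the feasible region reduced to $\{\zeta(S)\ge\zeta_{1},\ \sum_{S}\zeta(S)=A,\ \sum_{S}\zeta(S)^{2}=B\}$, I would exploit the arithmetic fact that $\check{\qbar}^{\{u\}}$ attains only the two occupancy values $\zeta_{1}$ and $\tfrac12$ (the latter on every hyperplane not containing $U$, independently of $u$). Because both moments are fixed, $\sum_{S}Q(\zeta(S))$ is the same constant for every feasible profile whenever $Q$ is quadratic. I would therefore construct the unique quadratic $Q$ tangent to $\varphi$ at $\zeta=\tfrac12$ and meeting it at $\zeta=\zeta_{1}$, and prove $\varphi(\zeta)\ge Q(\zeta)$ for all $\zeta\ge\zeta_{1}$. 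This gives $F(\qbar)=\sum_{S}\varphi(\zeta(S))\ge\sum_{S}Q(\zeta(S))$, a constant, with equality exactly when every $\zeta(S)\in\{\zeta_{1},\tfrac12\}$, i.e. at $\check{\qbar}^{\{u\}}$, establishing global optimality. Notably, the bound holds for any profile meeting the moment conditions and $\zeta(S)\ge\zeta_{1}$, so nonnegativity \eqref{eqn:QConstraintPositive} need not be invoked actively beyond guaranteeing feasibility of the comparison point.

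The main obstacle is the inequality $\varphi\ge Q$ on $[\zeta_{1},\infty)$. I expect to settle it through the sign structure of $g:=\varphi-Q$: since $g'''=\varphi'''>0$, the function $g$ has at most three real zeros counted with multiplicity, and the three interpolation conditions place a double zero at $\tfrac12$ and a simple zero at $\zeta_{1}$, exhausting them. A Rolle and strict-convexity argument on $g'$ then forces $\tfrac12$ to be a local minimum with $g(\tfrac12)=0$ and $g$ increasing beyond it, so $g\ge0$ on $[\zeta_{1},\infty)$ with equality only at $\zeta_{1}$ and $\tfrac12$. Two minor technical checks remain: verifying that the constant lower bound produced in the second step equals $\zeta_{1}$, and confirming that the degenerate case $u=0$ recovers Theorem \ref{thm:x2_uvf} while $u=\kappa-1$ recovers Theorem \ref{thm:x2_sec1}; both follow by direct substitution.
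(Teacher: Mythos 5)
Your proposal is correct, and its first half coincides with the paper's own route to Theorem \ref{thm:x2_sec2} (from which the corollary follows once one adds the paper's observation that \eqref{eqn:x2_QConstraintSEC1} excludes no code realizable with $n=2^{\kappa}-2^{u}$, together with the ML-decoding interpretation of $\chi^2$ divergence in \eqref{eqn:x2_lambdaExpectation}): both arguments pass to hyperplane-occupancy coordinates $\xi=\mathcal{Z}(\qbar)$ (your Gram computation for the point--hyperplane incidence matrix is exactly the paper's distance relation \eqref{eqn:x2_ZetaRelationDistance}), note that \eqref{eqn:QConstraintTotal} and \eqref{eqn:sec_radiusConstraint} fix the first two moments of the occupancy profile, and collapse \eqref{eqn:x2_QConstraintSEC1} to the uniform lower bound $\xi_j\geq(2^{\kappa-1}-2^{u})/(2^{\kappa}-2^{u})$, i.e.\ the paper's \eqref{eqn:x2_ZetaConstraintMinDistSimplified}; your identity $|\xi-\check{\xi}^{[j]}|^{2}=\xi_j-\tfrac14+V$, with $V$ the fixed sum of squared deviations from $\tfrac12$, is \eqref{eqn:x2_ZetaMinDistEquivalent1} in disguise. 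The endgame, however, is genuinely different. The paper finishes variationally: it reuses the three-coordinate perturbation of Theorem \ref{thm:x2_sec1} to force all non-minimal occupancies to be equal at an optimum, parametrizes candidates by the number $N$ of occupancies at the lower bound, solves \eqref{eqn:x2_SECN9} for $\xi_{\Delta}^{2}$, and uses monotonicity in $N$ to show only $N=2^{\kappa-u}-1$ and $N=2^{\kappa-u}-2$ survive, both giving $\check{\qbar}^{\{u\}}$. You instead use a supporting-quadratic moment argument: since any quadratic has constant sum over profiles with fixed first and second moments, the Hermite interpolant $Q$ tangent to $\varphi$ at $\tfrac12$ and meeting it at $\zeta_{1}$ yields $\sum_{S}\varphi(\zeta(S))\geq\sum_{S}Q(\zeta(S))=\mathrm{const}$, and your zero-counting step is sound: $g'''>0$ allows $g=\varphi-Q$ at most three zeros with multiplicity, all consumed by the interpolation conditions, and the zero at $\tfrac12$ has even multiplicity (this uses $\zeta_{1}\neq\tfrac12$, which holds for all $u$), so $g$ does not change sign there and $\varphi\geq Q$ on $[\zeta_{1},\infty)$ with equality exactly on $\{\zeta_{1},\tfrac12\}$ --- precisely the values attained by the profile of $\check{\qbar}^{\{u\}}$. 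Your route buys global optimality and the equality characterization in one stroke, with no second-derivative analysis and no parametric computation over $N$; the paper's computation additionally exhibits the multiplicity structure of the optimizer explicitly, but in your setup that is recovered for free from the two fixed moments once the values are pinned to $\{\zeta_{1},\tfrac12\}$. Your two deferred checks are indeed routine; the first is exactly the arithmetic the paper performs in \eqref{eqn:x2_ZetaMinDistEquivalent1}--\eqref{eqn:x2_ZetaConstraintMinDistSimplified}.
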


\section{Conclusion}
Analysis of coset codes by the properties of their subspaces provides an intriguing new approach to understanding and designing short-blocklength secrecy codes. In this work, we have shown the utility of this approach in efficiently calculating equivocation, and we have demonstrated several analytical results relating to code performance. We have also used subspace decomposition to identify a number of desirable properties of the $\chi^2$ divergence as a metric for performance of coset codes over the BEWC. Although this discussion has been limited to secrecy coding over the BEWC, it is possible to apply subspace decomposition to any linear block code over a variety of wiretap channels, or even non-wiretap channels. This work is presented primarily as an initial demonstration of the usefulness of this technique. 

\appendices

\section{Proof of Lemma \ref{PhiNonnegativityLemma}}
\label{Appendix:phiNonnegativityProof}
As defined in \eqref{eqn:def_phi}, $\psi(S,n,\epsilon,q)$ is used to represent the probability of a revealed bit pattern $r(z)$ exactly spanning $S$. It is guaranteed to correctly represent this probability, however, only when the realizability constraint \eqref{eqn:QConstraintRealizable} is satisfied. Nevertheless, even when \eqref{eqn:QConstraintRealizable} is not satisfied, if \eqref{eqn:QConstraintPositive} and \eqref{eqn:QConstraintTotal} are satisfied, a stochastic system may be defined such that $\psi(S,n,\epsilon,q)$ represents the probability of a set of real outcomes of the system. This system, as we define it, consists of a set of $2^{\kappa}$ Bernoulli random variables $B_i$, $0 \leq i \leq 2^{\kappa}$ with success probability 
\begin{equation}
    \label{eqn:pnn_defBeta}
    \mathrm{Pr}[B_i = 1] = e^{n \ln(\epsilon) q_i} \mathrm{.} 
\end{equation} 
Then the probability of all $B_i=0$ for all $\nu(i)$ not lying within a subspace $S$ of $W$ is given by 
\begin{equation}
    \label{eqn:pnn_B_phi}
    \mathrm{Pr}[B_i=0 \forall i:\nu(i) \notin S] = e^{n \ln(\epsilon)(1-\zeta(S))} \mathrm{,}
\end{equation}
which is exactly equal to $\phi(S,n,\epsilon,q)$ as defined in \eqref{eqn:def_phi}. We now define a function $\psi'(S,n,\epsilon,q)$ to represent the probability that all the vectors $\nu(i)$ such that $B_i=1$ exactly span $S$. Because any set of vectors will exactly span one and only one subspace of $W$, the events represented by each $\psi'(S,n,\epsilon,q)$ are disjoint. Furthermore, if $B_i=0$ for all the $\nu(i) \notin S$, then the space exactly spanned by the $B_i=1$ must be a subspace of $S$, so 
\begin{equation}
    \label{eqn:pnn_psiSum}
    \sum_{T \subseteq S}{\psi'(T,n,\epsilon,q)} = \mathrm{Pr}[B_i\!=\!0 \;\forall\; i\!:\!\nu(i) \notin S] = \phi(S,n,\epsilon,q) \mathrm{.}
\end{equation}
We may now subtract from the sum on the left all elements except the $T=S$ element to obtain
\begin{equation}
    \label{eqn:pnn_psiRecursive1}
    \psi'(S,n,\epsilon,q) = \phi(S,n,\epsilon,q) - \sum_{T \subset S}{\psi'(T,n,\epsilon,q)} \mathrm{.}
\end{equation}
Then if $S$ has dimension $d$, the proper subspaces of $S$ have dimension less than $d$, and we may write 
\begin{equation}
    \label{eqn:pnn_psiRecursive2}
    \psi'(S^{\{d\}},n,\epsilon,q) = \phi(S,n,\epsilon,q) - \sum_{i = 0 }^{d - 1}{ \left(\!\!\!\!\!\!\!\!\sum_{\;\;\;\;\;\;T \in \Xi(S,i)}\!\!\!\!\!\!\!\! {\psi'(T,n,\epsilon,q)}\right)}\mathrm{.}
\end{equation}
This recursive relation is identical to that \eqref{eqn:def_psi} of $\psi(S,n,\epsilon,q)$, and thus $\psi(S,n,\epsilon,q) = \psi'(S,n,\epsilon,q)$. Then because $\psi'(S,n,\epsilon,q)$ represents a real probability and cannot be negative, $\psi(S,n,\epsilon,q)$ must also be nonnegative. 

\section{Proof of Lemma \ref{EtaPrimeSumLemma}}
\label{Appendix:etaPrimeProof}
We begin by noting that the $\eta'(a,b)$ satisfies the recursive property 
\begin{equation}
    \label{eqn:ep_etaPrimeRecursive1}
    \begin{split}
    \eta'(a,b) &= \binom{a}{b}_2 (-1)^{a-b} 2^{(a-b)(a-b-1)/2} \\
    &= \! \left( \! \binom{a\!-\!1}{b\!-\!1}_2 \!\!+ 2^{b} \binom{a\!-\!1}{b}_2 \right) (-1)^{a-b} 2^{(a-b)(a-b-1)/2} \\
    &= \! \binom{a\!-\!1}{b\!-\!1}_2  (-1)^{a-b} 2^{(a^2-2ab+b^2-a+b)/2} - \! \binom{a\!-\!1}{b}_2  (-1)^{(a-1)-b} 2^{(a^2-2ab+b^2-a+3b)/2} \\
    &= \eta'(a-1,b-1) - 2^{a-1} \eta'(a-1,b) \mathrm{.}
    \end{split}
\end{equation}

Next, we define a sequence, 
\begin{equation}
    \label{eqn:ep_def_f}
    f_n = \sum_{i=0}^{n}{\eta'(n,i)} \mathrm{,}
\end{equation}
to represent the sum found in \eqref{eqn:etaPrimeSum} for a given $n=a$. Using \eqref{eqn:ep_etaPrimeRecursive1}, $f_n$ may also be expressed recursively as 
\begin{equation}
    \label{eqn:ep_fRecursive1}
    f_n = \sum_{i=0}^{n}{\eta'(n-1,i-1)} - 2^{n-1} \sum_{i=0}^{n}{\eta'(n-1,i)} \mathrm{.} 
\end{equation}
Then noting that $\eta'(n-1,n)=\eta'(n-1,-1)=0$ and shifting the index of summation for the first sum by one, we obtain 
\begin{equation}
    \label{eqn:ep_fRecursive2}
    \begin{split}
    f_n &= \sum_{i=0}^{n-1}{\eta'(n-1,i)} - 2^{n-1} \sum_{i=0}^{n-1}{\eta'(n-1,i)} \\
    &= (1-2^{n-1}) \sum_{i=0}^{n-1}{\eta'(n-1,i)} \\
    &= (1-2^{n-1}) f_{n-1} \mathrm{.}  
    \end{split} 
\end{equation}
Thus for each $n>0$, $f_n$ is a multiple of $f_{n-1}$. It is easy to show that $f_0=1$, and from \eqref{eqn:ep_fRecursive2}, we can see that $f_1=0$, and therefore, $f_n=0$ for all $n>1$, proving \eqref{eqn:etaPrimeSum}. 

To prove \eqref{eqn:etaPrimePartialSum}, we define a function, 
\begin{equation}
    \label{eqn:ep_def_s}
    s(a,b) = \sum_{i=b}^{a}{\eta'(a,i)} \mathrm{,}
\end{equation}
to represent the sum found in \eqref{eqn:etaPrimePartialSum} for a given $a$ and $b$. Then using \eqref{eqn:ep_etaPrimeRecursive1}, $s(a,b)$ may be defined recursively as 
\begin{equation}
    \label{eqn:ep_sRecursive}
    \begin{split}
    s(a,b) &= \sum_{i=b}^{a}{\eta'(a-1,i-1) - 2^{a-1} \eta'(a-1,i)} \\
    &= \sum_{i=b}^{a}{\eta'(a-1,i-1)} - \sum_{i=b}^{a}{2^{a-1} \eta'(a-1,i)} \\
    &= \sum_{i=b-1}^{a-1}{\eta'(a-1,i)} - 2^{a-1} \sum_{i=b}^{a-1}{ \eta'(a-1,i)} \\
    &= s(a-1,b-1) - 2^{a-1} s(a-1,b) \mathrm{.}
    \end{split}
\end{equation}
This is an identical recursion relation to that observed for $\eta'(a,b)$ in \eqref{eqn:ep_etaPrimeRecursive1}, but $s(a,b)$ must satisfy different initial conditions- namely, $s(1,1)=1$ and $s(1,0)=0$, whereas $\eta'(1,1)=1$ and $\eta'(1,0)=-1$. To satisfy the recursion relation as well as these initial conditions, consider the function 
\begin{equation}
    \label{eqn:ep_def_sPrime}
    s'(a,b) = 2^{a-b}\eta'(a-1,b-1) \mathrm{.}
\end{equation}
Now we have $s'(1,1)=1$ and $s'(1,0)=0$, and using properties of Gaussian binomials, we may derive the recursive property 
\begin{equation}
    \label{eqn:ep_sPrimeRecursion}
    \begin{split}
    s'(a,b) &= \binom{a\!-\!1}{b\!-\!1}_{\!\!2} (-1)^{a-b}2^{(a-b)(a-b+1)/2} \\
    &= \left(\!\binom{a\!-\!2}{b\!-\!2}_{\!\!2}\!\!\! +2^{b-1}\binom{a\!-\!2}{b\!-\!1}_{\!\!2}\right) (-1)^{a-b}2^{(a^2-2ab+b^2+a-b)/2} \\
    &= \binom{a\!-\!2}{b\!-\!2}_{\!\!2} (-1)^{a-b}2^{(a^2-2ab+b^2+a-b)/2} - \binom{a\!-\!2}{b\!-\!1}_{\!\!2} (-1)^{a-b}2^{(a^2-2ab+b^2+a+b-2)/2} \\
    &= s'(a-1,b-1) - 2^{a-1}s'(a-1,b) \mathrm{.} 
    \end{split}
\end{equation}
Thus $s(a,b)$ and $s'(a,b)$ satisfy identical recursion relations with identical initial values and are therefore identical, and \eqref{eqn:etaPrimePartialSum} is true. 

Next, we consider \eqref{eqn:etaPrimeWeightedSum}. We expand $\eta'(i,b)$ and use the properties of Gaussian binomials to obtain 
\begin{equation}
    \label{eqn:ep_weighted1}
    \begin{split}
    \sum_{i=d'}^{d}{\binom{d}{i}_2 \eta'(i,d')} &= \sum_{i=d'}^{d}{\binom{d}{i}_2 \binom{i}{d'}_2 (-1)^{i-d'} 2^{(i-d')(i-d'-1)/2} }\\
    &\;\;\;\;= \binom{d}{d'}_2 \sum_{i=d'}^{d}{\binom{d-d'}{i-d'}_2 (-1)^{i-d'} 2^{(i-d')(i-d'-1)/2} }\\
    &\;\;\;\;= \binom{d}{d'}_2 \sum_{j=0}^{d-d'}{\binom{d-d'}{j}_2 (-1)^{j} 2^{(j)(j-1)/2} } \\
    &\;\;\;\;= \binom{d}{d'}_2 \sum_{j=0}^{d-d'}{\eta'(d-d',j)} \mathrm{.} 
    \end{split}
\end{equation}
Then using \eqref{eqn:etaPrimeSum}, we obtain 
\begin{equation}
    \label{eqn:ep_weighted2}
    \begin{split}
    &\sum_{i=d'}^{d}{\binom{d}{i}_2 \eta'(i,d')} = \begin{cases}
        \binom{d}{d'}_2 = 1 &\text{if } d = d'\\
        0 &\text{otherwise}
    \end{cases} \mathrm{,} 
    \end{split}
\end{equation}
proving \eqref{eqn:etaPrimeWeightedSum}. 

Now considering \eqref{eqn:KConstantsLemmaStatement}, we define a sequence,
\begin{equation}
    \label{eqn:ep_def_g}
    g_n = \sum_{j=1}^{n}{\left( j \cdot \eta'(n,j) \right)} \mathrm{,}
\end{equation}
to represent the left-hand side of \eqref{eqn:KConstantsLemmaStatement}. We then split out the last ($j=n$) term of the sum and use \eqref{eqn:etaPrimeWeightedSum} to obtain 
\begin{equation}
    \label{eqn:ep_g1}
    \begin{split}
    g_n &= n + \sum_{j=1}^{n-1}{\left( - j \sum_{i=j}^{n-1}{\left(\binom{n}{i}_2 \eta'(i,j)\right)} \right)} \\
    &= n - \sum_{i=1}^{n-1}{\left( \binom{n}{i}_2 \sum_{j=1}^{i}{\left(j \cdot \eta'(i,j)\right)} \right)} \\
    &= n - \sum_{i=1}^{n-1}{\left(\binom{n}{i}_2 g_i \right)} \mathrm{.} 
    \end{split}
\end{equation}
This recursive relation, when combined with the value $g_1=1$ for the first term of the sequence, is sufficient to fully define the $g_n$ for $n \geq 1$. Additionally, this relation may be stated more simply by combining the left-hand side of \eqref{eqn:ep_g1} with the final sum to obtain 
\begin{equation}
    \label{eqn:ep_g2}
    \begin{split}
    \sum_{i=1}^{n}{\left(\binom{n}{i}_2 g_i \right)} = n \mathrm{.} 
    \end{split}
\end{equation}

Next, we define another sequence, 
\begin{equation}
    \label{eqn:ep_def_gPrime}
    g'_n = \prod_{i=1}^{n-1}{(1-2^i)} \mathrm{,}
\end{equation}
to represent the right-hand side of \eqref{eqn:KConstantsLemmaStatement}, as well as a sequence
\begin{equation}
    \label{eqn:ep_def_h}
    h_n = \sum_{i=1}^{n}{\left(\binom{n}{i}_2 g'_i \right)} 
\end{equation}
and the sequence 
\begin{equation}
    \label{eqn:ep_def_hPrime}
    h'_n = h_n - h_{n-1} 
\end{equation}
to represent the difference between consecutive terms of $h_n$. 

Our approach at this point is to find a recursive definition of $h'_n$ in terms of $h'_{n-1}$. To do this, we use \eqref{eqn:ep_def_h} to express the difference between two consecutive terms, then substitute \eqref{eqn:ep_g2} and use the definition \eqref{eqn:GaussianBinomial} of the Gaussian binomial to yield 
\begin{equation}
    \label{eqn:ep_hPrime1}
    \begin{split}
    h'_n = h_n - h_{n-1} &= \sum_{i=1}^{n}{\left(\binom{n}{i}_2 g'_i \right)} - \sum_{i=1}^{n-1}{\left(\binom{n-1}{i}_2 g'_i \right)} \\
    &= \sum_{i=1}^{n}{\left(\binom{n}{i}_2 \prod_{j=1}^{i-1}{(1-2^j)} \right)} - \sum_{i=1}^{n-1}{\left(\binom{n-1}{i}_2 \prod_{j=1}^{i-1}{(1-2^j)} \right)} \\
    &= \sum_{i=1}^{n}{\left(2^{n-i} \binom{n-1}{i-1}_2 \prod_{j=1}^{i-1}{(1-2^j)} \right)} \\
    &= \sum_{i=1}^{n}{\left(2^{n-i} \prod_{j=1}^{i-1}{(1-2^{n-j})} \right)} \mathrm{.} 
    \end{split}
\end{equation}
With this expression, we can split out the $i=1$ term from the sum and transform the remaining terms to obtain 
\begin{equation}
    \label{eqn:ep_hPrime2}
    \begin{split}
    h'_n &= 2^{n-1} + \sum_{i=2}^{n}{\left(2^{n-i} \prod_{j=1}^{i-1}{(1-2^{n-j})} \right)} \\
    &= 2^{n-1} + \sum_{i=2}^{n}{\left(2^{n-i} (1-2^{n-1}) \prod_{j=2}^{i-1}{(1-2^{n-j})} \right)} \mathrm{.} 
    \end{split}
\end{equation}
Then substituting $i' = i-1$ and $j' = j-1$ gives 
\begin{equation}
    \label{eqn:ep_hPrime2}
    \begin{split}
    h'_n &= 2^{n-1} + \sum_{i'=1}^{n-1}{\left(2^{n-1-i'} (1-2^{n-1}) \prod_{j'=1}^{i'-1}{(1-2^{n-1-j'})} \right)} \\
    &= 2^{n-1} + (1-2^{n-1}) h'_{n-1} \mathrm{.} 
    \end{split}
\end{equation}
From this recursive relation, it is clear that if $h'_{n-1}=1$, then $h'_n=1$ as well. Then because $h'_1=1$, we may conclude that every $h'_n=1$. Because $h'_n$ represents the difference between consecutive $h_n$, and because $h_n=1$, we may then conclude that $h_n=n$, and from \eqref{eqn:ep_def_h}, we find that 
\begin{equation}
    \label{eqn:ep_h1}
    \sum_{i=1}^{n}{\left(\binom{n}{i}_2 g'_i\right)} = n \mathrm{,} 
\end{equation}
a relation identical to that \eqref{eqn:ep_g2} satisfied by $g_n$. Thus because $g_n$ and $g'_n$ satisfy identical recursion relations and have identical first term $g_1 = g'_1 = 1$, $g_n$ and $g'_n$ must be identical, and \eqref{eqn:KConstantsLemmaStatement} is proven. 

\section{Proof of Lemma \ref{thm:exponentialMersenneLemma}}
\label{apx:exponentialMersenneLemmaProof}
Begin by dividing out the top ($j=n$) term from the product in \eqref{eqn:MersenneExponentialProperty} and defining the result as a sequence $a_n$ with terms given by 
\begin{equation}
    \label{eqn:MersenneProof1}
    \begin{split}
    a_n &= \frac{1}{(1-2^n)} \sum_{i=0}^{n}{\left( 2^{bi} \prod_{j=i}^{n}{(1-2^j)} \right)} \\
    &= \sum_{i=1}^{n}{\left( (2^b)^i \prod_{j=i}^{n-1}{(1-2^j)}\right)} \mathrm{.}
    \end{split}
\end{equation}
Using \eqref{eqn:KConstantsLemmaStatement}, \eqref{eqn:MersenneProof1} becomes 
\begin{equation}
    \label{eqn:MersenneProof2}
    a_n \!=\! \sum_{i=1}^{n}{\!\left( \!(2^b)^i \!\sum_{j=1}^{n}{(-1)^{j-i} 2^{\frac{(j-i)(j-i+1)}{2}} \binom{j\!-\!1}{i\!-\!1}_{\!2} \binom{n\!-\!1}{j\!-\!1}_{\!2}} \right)} \mathrm{.}
\end{equation}
Rearranging terms and exchanging the sums yields 
\begin{equation}
    \label{eqn:MersenneProof3}
    a_n \!=\! \sum_{j=1}^{n} {\!\left(\! \binom{n\!-\!1}{j\!-\!1}_{\!2} \sum_{i=1}^{n} (2^b)^i {(-1)^{j-i} 2^{\frac{(j-i)(j-i+1)}{2}} \binom{j\!-\!1}{i\!-\!1}_{\!2} } \right)} \mathrm{.}
\end{equation}

Next, consider the following series of polynomials in $x$: 
\begin{equation}
    \label{eqn:MersenneProof4}
    p_j(x) = x(x-2)(x-4)(x-8) \dots (x-2^{j-1}) \mathrm{.} 
\end{equation}
Fully expanded, this polynomial may be represented in terms of the coefficients $c_{j,i}$ as 
\begin{equation}
    \label{eqn:MersenneProof5}
    p_j(x) = \sum_{i=1}^{n}{c_{j,i}x^i} \mathrm{.} 
\end{equation}
The values of the $c_{j,i}$ are not immediately apparent, but a simple recursion relation may be identified. Based on \eqref{eqn:MersenneProof4}, the $c_{j,i}$ are fully determined by the recursive relation 
\begin{equation}
    \label{eqn:MersenneProof6}
    c_{j,i} = 
    \begin{cases}
        0 & \text{ if } i \leq 0 \text{ or } j < i \\
        1 & \text{ if } j = i = 1 \\
        c_{j-1,i-1} - 2^{j-1}c_{j-1,i} &\text{ otherwise} \mathrm{.}
    \end{cases}  
\end{equation}
Next, define $d_{j,i}$ to match the coefficients of $(2^b)^i$ in \eqref{eqn:MersenneProof3} as
\begin{equation}
    \label{eqn:MersenneProof7}
    d_{j,i} = (-1)^{j-i} 2^{\frac{(j-i)(j-i+1)}{2}} \binom{j-1}{i-1}_2   \mathrm{.}
\end{equation}
Observe that if $i\leq 0$ or $j<i$, $d_{j,i}$ equals zero, and that if $i=j=1$, $d_{j,i}$ equals one. In all other cases, $d_{j,i}$ may be expressed in terms of other $d_{\cdot,\cdot}$ as  
\begin{equation}
    \label{eqn:MersenneProof8}
    \begin{split}
    d_{j,i} &= (-1)^{j-i} 2^{\frac{(j-i)(j-i+1)}{2}} \binom{j-1}{i-1}_2 \\
    &= (-1)^{j-i} 2^{\frac{(j-i)(j-i+1)}{2}} \left( 2^{i-1} \binom{j-2}{i-1}_2 + \binom{j-2}{i-2}_2 \right) \\
    &= (-1)^{j-i} 2^{\frac{((j-1)-(i-1))((j-1)-(i-1)+1)}{2}} \binom{(j-1)-1}{(i-1)-1}_2 - (-1)^{j-i-1} 2^{j-1} 2^{\frac{((j-1)-i)((j-1)-i+1)}{2}} \binom{(j-1)-1}{i-1}_2 \\
    &= d_{j-1,i-1} - 2^{j-1}d_{j-1,i} \mathrm{.}
    \end{split}
\end{equation}
The $d_{j,i}$ may thus be alternatively defined by the recursion relation 
\begin{equation}
    \label{eqn:MersenneProof8_5}
    d_{j,i} = 
    \begin{cases}
        0 & \text{ if } i \leq 0 \text{ or } j < i \\
        1 & \text{ if } j = i = 1 \\
        d_{j-1,i-1} - 2^{j-1}d_{j-1,i} &\text{ otherwise} \mathrm{.}
    \end{cases}  
\end{equation}
Because they are fully defined by identical recursion relations, $c_{j,i}$ and $d_{j,i}$ must be identical. Thus, the second sum in \eqref{eqn:MersenneProof3} may be replaced by the polynomial defined in \eqref{eqn:MersenneProof4} using $2^b$ as the polynomial variable to obtain 
\begin{equation}
    \label{eqn:MersenneProof9}
    a_n = \sum_{j=1}^{n} {\left( \binom{n-1}{j-1}_2 2^b \cdot \prod_{i=1}^{j-1}{(2^b-2^i)} \right)} \mathrm{.}
\end{equation}
Here, the product in \eqref{eqn:MersenneProof9} will either be positive if $b > {j-1}$, or if $b \leq {j-1}$, the product will contain a zero factor and will equal zero. Then each term in the sum in \eqref{eqn:MersenneProof9} consists of all positive or zero factors. Furthermore, at least the first term in the sum is positive, and therefore $a_n$ must be positive. Then because $(1-2^n)$ is always negative for $n>0$, by the definition \eqref{eqn:MersenneProof1} of $a_n$, the expression in \eqref{eqn:MersenneExponentialProperty} must be negative. 

\section{Proof of Lemma \ref{SuperexponentialLemma}}
\label{Appendix:SuperexponentialProof}
We first observe that the term before the product in \eqref{eqn:SuperexponentialProperty} may be rewritten using power series expansion as 
\begin{equation}
    \label{eqn:superExponsntialSeries}
    2^i\beta^{2^i} = 2^i e^{\ln(\beta) 2^i} = \sum_{b=0}^{\infty}{2^i \frac{\ln(\beta)^b 2^{b i}}{b!}} = \sum_{b=1}^{\infty}{\frac{\ln(\beta)^{b-1} }{(b-1)!} 2^{bi}} \mathrm{.} 
\end{equation}
Then substituting this term back into \eqref{eqn:SuperexponentialProperty} gives 
\begin{equation}
    \label{eqn:SuperexponentialProof2}
    \sum_{i=0}^{n}{\left( \left(\sum_{b=1}^{\infty}{\frac{\ln(\beta)^{b-1} }{(b-1)!} 2^{bi}}\right) \prod_{j=i}^{n}{(1-2^j)} \right)} < 0 \mathrm{.}
\end{equation}
The sums may now be rearranged to yield 
\begin{equation}
    \label{eqn:SuperexponentialProof2}
     \sum_{b=1}^{\infty}{\left(\frac{\ln(\beta)^{b-1} }{(b-1)!} \sum_{i=0}^{n}{\left( 2^{bi} \prod_{j=i}^{n}{(1-2^j)} \right)}\right)} < 0 \mathrm{.}
\end{equation}
Now the term in front of the second sum is always positive, and by Lemma \ref{thm:exponentialMersenneLemma}, the second sum is always negative. Thus the final result is an infinite sum of negative terms and is negative. 

\section{Proof of Theorem \ref{thm:zeroColumn}}
\label{Appendix:zcDerivative}
We prove this theorem by showing that for any $q$ satisfying \eqref{eqn:QConstraintPositive} and \eqref{eqn:QConstraintTotal} with $q_0 > 0$ (except the trivial case $q_0 = 1$, which is clearly not locally optimal), we may define a movement vector $\grave{q}$ with $\grave{q}_0 < 0$ such that $l(n,\epsilon,q)$ decreases as $q$ moves in the $\grave{q}$ direction. This $\grave{q}$ is defined as 
\begin{equation}
    \label{eqn:zc_defQGrave}
    \grave{q}_i = \begin{cases}
        q_i-1 & \text{if } i = 0 \\
        q_i & \text{otherwise} \mathrm{.} 
    \end{cases}    
\end{equation}
We now define $\grave{l}(n,\epsilon,q)$ to represent the rate of change of $l(n,\epsilon,q)$ as $q$ moves in the $\grave{q}$ direction. More precisely, 
\begin{equation}
    \label{eqn:zc_defLGrave}
    \grave{l}(n,\epsilon,q) = \frac{\partial l(n,\epsilon,q+x\grave{q})}{\partial x}\mathrm{.}
\end{equation}
To prove that $\grave{l}(n,\epsilon,q)$ is always negative, we further define a function $\grave{\psi}(S)$ to represent the rate of change of $\psi(S)$ with movement in the $\grave{q}$ direction, 
\begin{equation}
    \label{eqn:zc_defPsiGrave}
    \grave{\psi}(S) = \grave{\psi}(S,n,\epsilon,q) = \frac{\partial \psi(S,n,\epsilon,q+x\grave{q})}{\partial x} \mathrm{.} 
\end{equation}
The value of $\grave{\psi}(S)$ may be calculated by finding the rate of change $\grave{\phi}(S)$ of $\phi(S)$, which in turn depends on the rate of change $\grave{\zeta}(S)$ of $\zeta(S)$. Because every subspace $S$ includes the zero vector, $\grave{\zeta}(S)$ is simply given by 
\begin{equation}
    \label{eqn:zc_defZetaGrave}
    \grave{\zeta}(S) = \grave{\zeta}(S,q) = \sum_{i : \nu(i) \in S}{\grave{q}_i} = \zeta(S)-1 \mathrm{,} 
\end{equation}
and $\grave{\psi}(S)$ is given by 
\begin{equation}
    \label{eqn:zc_psiGrave1}
    \begin{split}
    \grave{\psi}(S^{\{d\}}) = \grave{\psi}(S^{\{d\}},n,\epsilon,q) = \grave{\phi}(S) - \sum_{i=0}^{d-1}{\left( \!\!\!\!\!\! \sum_{\;\;\;\;T \in \Xi(S,i)} {\!\!\!\! \grave{\psi}(T)} \right)} \mathrm{,}
    \end{split}
\end{equation}
with 
\begin{equation}
    \label{eqn:zc_defPhiGrave}
    \begin{split}
    \grave{\phi}(S) = \grave{\phi}(S,n,\epsilon,q) &= n \ln(\epsilon) (1-\zeta(S))e^{n \ln(\epsilon) (1-\zeta(S))} \\
    &= n \ln(\epsilon) (1-\zeta(S)) \phi(S) \mathrm{.}
    \end{split}
\end{equation}

Next, we define two functions, $\omega(d)$ and $\grave{\omega}(d)$. The $\omega(d)$ function represents the probability, for a given dimension $d$, of the rank of $G_{r(z)}$ being at most $d$, and the $\grave{\omega}(d)$ function representing the rate of change of $\omega(\cdot)$ with movement in the $\grave{q}$ direction. Recalling that the columns of $r(z)$ exactly span only one subspace $S$ of $W$, that the probability of $r(z)$ exactly spanning $S$ is given by $\psi(S)$, and that the rank of $G_{r(z)}$ is equal to the dimension of $S$, we may express $\omega(d)$ and $\grave{\omega}(d)$ as 
\begin{equation}
    \label{eqn:zc_defOmega}
    \omega(d) = \omega(d,n,\epsilon,q) = \sum_{i=0}^{d}{\left( \sum_{S \in \Xi(W,i)} {\psi(S)}  \right)}
\end{equation}
and 
\begin{equation}
    \label{eqn:zc_defOmegaGrave}
    \grave{\omega}(d) = \grave{\omega}(d,n,\epsilon,q) = \sum_{i=0}^{d}{\left( \sum_{S \in \Xi(W,i)} {\grave{\psi}(S)}  \right)} \mathrm{.} 
\end{equation}

From the relationship between $l(n,\epsilon,q)$ and $\psi(S,n,\epsilon,q)$ expressed in \eqref{eqn:equivocationLossEpsilonPfister} and \eqref{eqn:equivocationLossEpsilon1}, it is clear that the expected equivocation loss may be expressed in terms of the $\omega(d)$ function as 
\begin{equation}
    \label{eqn:zc_LossInTermsOfOmega}
    l(n,\epsilon,q) = n(1-\epsilon) + \sum_{d=0}^{\kappa-1}{\omega(d)} \mathrm{,}
\end{equation}
and 
\begin{equation}
    \label{eqn:zc_LossRateInTermsOfOmegaGrave}
    \grave{l}(n,\epsilon,q) = \sum_{d=0}^{\kappa-1}{\grave{\omega}(d)} \mathrm{.}
\end{equation}
Thus if it can be shown that each $\grave{\omega}(d) \leq 0$ for every $d$ and that $\grave{\omega}(d) < 0$ for at least one $d$, then $\grave{l}(n,\epsilon,q)$ is negative, and $q$ is not optimal. 

It is easy to show that $\grave{\omega}(d)$ is always negative for $d=0$, as \eqref{eqn:zc_defOmegaGrave} reduces to 
\begin{equation}
    \label{eqn:zc_LossRateInTermsOfOmegaGrave}
    \grave{\omega}(0,n,\epsilon,q) = \grave{\psi}(\{0\},n,\epsilon,q) = n \ln(\epsilon)(1-q_0) e^{n \ln(\epsilon)(1-q_0)} \mathrm{,} 
\end{equation}
in which all terms are positive except $\ln(\epsilon)$, which is always negative. 

It remains to show that $\grave{\omega}(d)$ is always negative or zero. To accomplish this, we use a method similar to that used to prove Theorem \ref{thm:ExpectedEquivocationFormula}- that is, we examine the recursive expansion \eqref{eqn:zc_psiGrave1} of the $\grave{\psi}(S)$ function in \eqref{eqn:zc_defOmegaGrave}. As in \eqref{eqn:PsiExpanded1}, the final expanded expression for $\grave{\psi}(S)$, and also therefore of $\grave{\omega}(d)$, is a sum of $\grave{\phi}(T)$ terms. Again, the $\grave{\phi}(T^{\{d\}})$ terms occur in equal frequency for each subspace $T$ of dimension $d$. The final expansion of $\grave{\omega}(d)$ may then be expressed using the $\eta(\cdot)$ function as 
\begin{equation}
    \label{eqn:zc_omegaEta1}
    \begin{split}
    \grave{\omega}(d) &= \sum_{d'=0}^{d}{\left( \!\!\left( \sum_{i=d'}^{d}{\binom{\kappa}{i}_{\!\!2} \eta(i,d') / \binom{\kappa}{d'}_{\!\!2}} \right) \left( \!\!\!\!\!\!\!\! \sum_{\;\;\;\;\;\;S \in \Xi(W,d')} {\!\!\!\!\!\!\!\! \grave{\phi}(S)}  \right) \!\!\right)} \\
    &= \sum_{d'=0}^{d}{\left( \!\!\left( \sum_{i=d'}^{d}{\binom{\kappa-d'}{i-d'}_{\!\!2} \eta(i-d',0)} \right) \left( \!\!\!\!\!\!\!\! \sum_{\;\;\;\;\;\;S \in \Xi(W,d')} {\!\!\!\!\!\!\!\! \grave{\phi}(S)}  \right) \!\!\right)} \\
    &= \sum_{d'=0}^{d}{\left( \!\!\left( \sum_{i=\kappa-d}^{\kappa-d'}{\!\!\binom{\kappa-d'}{\kappa\!-\!d'\!-\!i}_{\!\!2} \eta(\kappa\!-\!d'\!-\!i,0)}\! \right) \!\! \left( \!\!\!\!\!\!\!\! \sum_{\;\;\;\;\;\;S \in \Xi(W,d')} {\!\!\!\!\!\!\!\! \grave{\phi}(S)} \! \right) \!\!\right)} \\
    &= \sum_{d'=0}^{d}{\left( \!\!\left( \sum_{i=\kappa-d}^{\kappa-d'}{\eta(\kappa-d',i)} \right) \left( \!\!\!\!\!\!\!\! \sum_{\;\;\;\;\;\;S \in \Xi(W,d')} {\!\!\!\!\!\!\!\! \grave{\phi}(S)}  \right) \!\!\right)} \mathrm{.}
    \end{split}
\end{equation}
Then using \eqref{eqn:etaPrimePartialSum} from Lemma \ref{EtaPrimeSumLemma} and expanding $\grave{\phi}(S)$ using \eqref{eqn:zc_defPhiGrave}, this becomes 
\begin{equation}
    \label{eqn:zc_omegaEta2}
    \begin{split}
    \grave{\omega}(d) &= \sum_{d'=0}^{d}{\left(  2^{d-d'} \eta(\kappa\!-\!d'\!-\!1,\kappa\!-\!d\!-\!1) \left( \!\!\!\!\!\!\!\! \sum_{\;\;\;\;\;\;S \in \Xi(W,d')} {\!\!\!\!\!\!\!\! \grave{\phi}(S)}  \right) \right)} \\
    &= \sum_{d'=0}^{d}{\left( 2^{d-d'} \binom{\kappa\!-\!d'\!-\!1}{\kappa\!-\!d\!-\!1}_2 c(d,d') \left( \!\!\!\!\!\!\!\! \sum_{\;\;\;\;\;\;S \in \Xi(W,d')} {\!\!\!\!\!\!\!\! \grave{\phi}(S)}  \right) \right)} \\
    &= \! n \ln(\epsilon) \!\! \sum_{d'=0}^{d}{\!\left( \! c(d,d')  \!  \! \!\!\!\!\!\!\!\!\!\!\!\! \sum_{\;\;\;\;\;\;\;\;\;S \in \Xi(W,d')} {\!\!\!\!\!\!\!\!\!\!\!\! 2^{d\!-\!d'} \! \binom{\kappa\!-\!d'\!-\!1}{\kappa\!-\!d\!-\!1}_{\!2}   (1 \! - \! \zeta(S)) \phi(S)}  \!\! \right)} \! \mathrm{,}
    \end{split}
\end{equation}
where $c(\cdot)$ is as defined in \eqref{eqn:def_c_constants}. 
Using the properties of Gaussian binomials, the inner sum of \eqref{eqn:zc_omegaEta2} may be expanded to yield 
\begin{equation}
    \label{eqn:zc_omegaEtaInner1}
    \begin{split}
    &\sum_{S \in \Xi(W,d')} { \!\!\left( 2^{d\!-\!d'} \! \binom{\kappa\!-\!d'\!-\!1}{\kappa\!-\!d\!-\!1}_{\!2}   (1 \! - \! \zeta(S)) \phi(S) \!\right)} = \!\!\!\!\sum_{S \in \Xi(W,d')} { \!\!\left( \! \binom{\kappa\!-\!d'}{\kappa\!-\!d\!}_{\!2}   (1 \! - \! \zeta(S)) \phi(S) \!\right)} - \!\!\!\!\!\sum_{S \in \Xi(W,d')} { \!\!\left( \! \binom{\kappa\!-\!d'\!-\!1}{\kappa\!-\!d\!}_{\!2}   (1 \! - \! \zeta(S)) \phi(S) \!\right)} \mathrm{.}
    \end{split}
\end{equation}
Next, we note that 
\begin{enumerate*}
    \item the number of $d$-dimensional superspaces of a $d'$-dimensional space $S$ is equal to $\binom{\kappa-d'}{\kappa-d}_2$; and 
    \item for a given $d'$-dimensional space $S$, summing the quantity $\zeta(T)-\zeta(S)$ over all $d$-dimensional superspaces $T$ of $S$ yields $\binom{\kappa-d'-1}{\kappa-d}_2 (1-\zeta(S))$. 
\end{enumerate*}
Using these identities, \eqref{eqn:zc_omegaEtaInner1} becomes 
\begin{equation}
    \label{eqn:zc_omegaEtaInner2}
    \begin{split}
    \sum_{S \in \Xi(W,d')} { \!\!\left( 2^{d\!-\!d'} \! \binom{\kappa\!-\!d'\!-\!1}{\kappa\!-\!d\!-\!1}_{\!2}   (1 \! - \! \zeta(S)) \phi(S) \right)} &= \!\!\!\! \sum_{S \in \Xi(W,d')} {\!\! \left( \!\!\!\!\!\!\!\!\!\!\!\! \!\!\!\!\sum_{\;\;\;\;\;\;\;\;\;\;\;\;T^{\{d\}}:S \in \Xi(T,d')} { \!\!\!\!\!\!\!\!\!\!\!\!\!\!\!\!\!\! \left( (1 \! - \! \zeta(S)) \phi(S) \right)} \right) } - \!\!\!\! \sum_{S \in \Xi(W,d')} {\!\! \left( \!\!\!\!\!\!\!\!\!\!\!\!\!\!\!\! \sum_{\;\;\;\;\;\;\;\;\;\;\;\; T^{\{d\}}:S \in \Xi(T,d')} {\!\!\!\!\!\!\!\!\!\!\!\!\!\!\!\!\!\!  (\zeta(T) \! - \! \zeta(S)) \phi(S) } \right) } \\
    & = \!\!\!\! \sum_{S \in \Xi(W,d')} {\!\! \left( \!\!\!\!\!\!\!\!\!\!\!\!\!\!\!\! \sum_{\;\;\;\;\;\;\;\;\;\;\;\; T^{\{d\}}:S \in \Xi(T,d')} { \!\!\!\!\!\!\!\!\!\!\!\!\!\!\!\!\!\! \left( (1 \! - \! \zeta(T)) \phi(S) \right)} \right) } \\
    & = \!\!\!\! \sum_{T \in \Xi(W,d)} {\!\left( (1 \! - \! \zeta(T)) \!\!\!\! \sum_{S \in \Xi(T,d')} { \!\!\!\!\left(  \phi(S) \right)} \right) } \mathrm{.}
    \end{split}
\end{equation}
Substituting this back into \eqref{eqn:zc_omegaEta2} gives 
\begin{equation}
    \label{eqn:zc_omegaEta3}
    \begin{split}
    \grave{\omega}(d) &= \! n \ln(\epsilon) \!\! \sum_{d'=0}^{d}{\!\left( \! c(d,d')  \!\!\!\! \sum_{T \in \Xi(W,d)} {\left( (1 \! - \! \zeta(T)) \!\!\!\! \sum_{S \in \Xi(T,d')} {\!\!\!\! \left(  \phi(S) \right)} \right) }  \!\! \right)} \\
    &= \! n \ln(\epsilon) \!\! \sum_{T \in \Xi(W,d)} {\left( (1 \! - \! \zeta(T)) \sum_{d'=0}^{d}{\!\left( \! c(d,d')  \!\!\!\!  \!\!\!\! \sum_{S \in \Xi(T,d')} {\!\!\!\! \left( \phi(S) \right)} \right) }  \!\! \right)}  \mathrm{.}
    \end{split}
\end{equation}
Then recalling the $\psi(\cdot)$ analog of the expression found in \eqref{eqn:PsiExpanded1}, this becomes simply 
\begin{equation}
    \label{zc_omegaFinal}
    \grave{\omega}(d) = n \ln(\epsilon) \sum_{T \in \Xi(W,d)} {\left( (1 \! - \! \zeta(T)) \psi(T)  \right)} \mathrm{.}
\end{equation}
In this expression, $n$ is always positive, $\ln(\epsilon)$ is always zero or negative, $(1-\zeta(T))$ is zero or positive, and by Lemma \ref{PhiNonnegativityLemma}, $\psi(T)$ is zero or positive. Thus, $\grave{\omega}(d)$ is always zero or negative, completing the proof.

\section{Proof of Theorem \ref{thm:uvf}}
\label{Appendix:UVF_Proof}
The outline for this proof is as follows. We first compute the gradient $\nabla l(\qbar)$ and the Hessian matrix $\mathbf{H}(l)$ of the expected equivocation loss $l(\qbar) = l(n,\epsilon,\qbar)$ We next define the requirements of a unit movement vector $\dot{\qbar}$ which is compliant with \eqref{eqn:QConstraintPositive} and \eqref{eqn:QConstraintTotal}. Using these requirements, we show that such a $\dot{\qbar}$ multiplied by the $\nabla l(\qbar)$ at the point $\qbar=\bar{\qbar}$ is always zero. Finally, we show that the quadratic form $\dot{\qbar}^{\intercal}\mathbf{H}(l)\dot{\qbar}$ at $\qbar=\bar{\qbar}$ is always positive. Thus, at $\qbar=\bar{\qbar}$, movement in any direction compliant with \eqref{eqn:QConstraintPositive} and \eqref{eqn:QConstraintTotal} results in an increase of equivocation loss. 

We begin by computing $\nabla l(\qbar)$ from the expression \eqref{eqn:expectedEquivocationEpsilon} proven in Theorem \ref{thm:ExpectedEquivocationFormula}. From this expression, each element $\nabla l(\qbar)_i$ of $\nabla l(\qbar)$ is given by 
\begin{equation}
    \label{eqn:ufc_defNabla}
    \begin{split}
    \nabla l(\qbar)_{i} = \frac{\partial l(\qbar)}{\partial \qbar_i} &= \frac{\partial}{\partial \qbar_i} \sum_{\delta=1}^{\kappa}{\left( K_{\delta} \!\!\!\!\!\!\!\! \sum_{\;\;\;\; S \in \Xi(W,\kappa-\delta)}{\!\!\!\!\!\!\!\! \phi(S)} \right)} \\
    &=  \sum_{\delta=1}^{\kappa}{\left( K_{\delta} \!\!\!\!\!\!\!\! \sum_{\;\;\;\; S \in \Xi(W,\kappa-\delta)}{\!\!\!\! \frac{\partial \phi(S)}{\partial \qbar_i}} \right)} \mathrm{.} 
    \end{split}
\end{equation}
The derivative of $\phi(S)$ may be calculated using its definition \eqref{eqn:def_phi} and the definition \eqref{eqn:defZeta} of $\zeta(S)$ as 
\begin{equation}
    \label{eqn:ufc_deriv_phi}
    \begin{split}
    \frac{\partial \phi(S)}{\partial \qbar_i} &= -n \ln(\epsilon) e^{n \ln(\epsilon)(1-\zeta(S))} \frac{\partial \zeta(S)}{\partial \qbar_i} \\
    &= \begin{cases}
        -n \ln(\epsilon) e^{n \ln(\epsilon)(1-\zeta(S))} & \text{ if } \nu(i) \in S \\
        0 & \text{ otherwise} \mathrm{.} 
    \end{cases}
    \end{split}
\end{equation}
The second sum in \eqref{eqn:ufc_defNabla} may thus be taken over only those subspaces $S$ of which $\nu(i)$ is an element. The gradient is then given by 
\begin{equation}
    \label{eqn:ufc_nabla1}
    \begin{split}
    \nabla l(\qbar)_{i} &= -n \ln(\epsilon) \sum_{\delta=1}^{\kappa}{\left( K_{\delta} \!\!\!\!\!\!\!\!\!\!\!\!\!\!\!\!\!\!\!\! \sum_{\;\;\;\;\;\;\;\;\;\;\;\; S: S \in \Xi(W,\kappa-\delta), \nu(i) \in S}{\!\!\!\!\!\!\!\!\!\!\!\!\!\!\!\!\!\!\!\! e^{n \ln(\epsilon)(1-\zeta(S))}} \right)} \mathrm{.} 
    \end{split}
\end{equation}
At the uniform vector fraction vector $\qbar = \bar{\qbar}$, the value of $\zeta(S^{\{d\}})$ for any subspace $S$ of dimension $d$ is given by 
\begin{equation}
    \label{eqn:ufc_zeta}
    \zeta(S^{\{d\}}) = \sum_{i:\nu(i) \in S}{\qbar_i} = \frac{2^d-1}{2^\kappa-1} \mathrm{,}
\end{equation}
and the number of subspaces $S$ of dimension $d$ which contain any nonzero vector $\nu(i)$ is
\begin{equation}
    \label{eqn:ufc_nablaSuperSpaceCount}
    \left| \{ S: S \in \Xi(W,d), \nu(i) \in S \} \right| = \binom{\kappa-1}{d-1}_2 = \binom{\kappa-1}{\kappa-d}_2 \mathrm{.} 
\end{equation}
Substituting \eqref{eqn:ufc_zeta} and \eqref{eqn:ufc_nablaSuperSpaceCount} into \eqref{eqn:ufc_nabla1}, we find that the gradient at $\qbar=\bar{\qbar}$ is 
\begin{equation}
    \label{eqn:ufc_nablaCenter}
    \left. \nabla l(\qbar)_{i} \right|_{\qbar=\bar{\qbar}} = -n \ln(\epsilon) \sum_{\delta=1}^{\kappa}{\left( K_{\delta} \binom{\kappa-1}{\delta}_2 e^{n \ln(\epsilon)\frac{2^\kappa-2^{\kappa-\delta}}{2^\kappa-1}} \right)} \mathrm{.} 
\end{equation}

The elements of the Hessian matrix are then given by 
\begin{equation}
    \label{eqn:ufc_defHessian}
    \begin{split}
    \mathbf{H}(l)_{i,j} = \frac{\partial \nabla l(\qbar)_{i}}{\partial \qbar_j} &= \frac{\partial \zeta(S)}{\partial \qbar_j} n^2 \ln(\epsilon)^2 \sum_{\delta=1}^{\kappa}{\left( K_{\delta} \!\!\!\!\!\!\!\!\!\!\!\!\!\!\!\!\!\!\!\! \sum_{\;\;\;\;\;\;\;\;\;\;\;\; S: S \in \Xi(W,\kappa-\delta), \nu(i) \in S}{\!\!\!\!\!\!\!\!\!\!\!\!\!\!\!\!\!\!\!\! e^{n \ln(\epsilon)(1-\zeta(S))}} \right)} \\
    &= n^2 \ln(\epsilon)^2 \sum_{\delta=1}^{\kappa}{\left( K_{\delta} \!\!\!\!\!\!\!\!\!\!\!\!\!\!\!\!\!\!\!\!\!\!\!\!\!\!\!\! \sum_{\;\;\;\;\;\;\;\;\;\;\;\;\;\;\;\; S: S \in \Xi(W,\kappa-\delta), \nu(i) \in S, \nu(j) \in S}{\!\!\!\!\!\!\!\!\!\!\!\!\!\!\!\!\!\!\!\!\!\!\!\!\!\!\!\! e^{n \ln(\epsilon)(1-\zeta(S))}} \right)} \mathrm{.} 
    \end{split}
\end{equation}
To find the values of $\mathbf{H}(l)$ at $\qbar = \bar{\qbar}$, we use \eqref{eqn:ufc_nablaSuperSpaceCount} and note that the number of subspaces $S$ of dimension d which contain both $\nu(i)$ and $\nu(j)$ is given by 
\begin{equation}
    \label{eqn:ufc_nablaSuperSpaceCount2}
    \begin{split}
    &\left| \{ S: S \in \Xi(W,d), \nu(i) \in S, \nu(j) \in S \} \right| = \begin{cases}
        \binom{\kappa-1}{d-1}_2 = \binom{\kappa-1}{\kappa-d}_2 & \text{if } i = j \\
        \binom{\kappa-2}{d-2}_2 = \binom{\kappa-2}{\kappa-d}_2 & \text{if } i \neq j \mathrm{.} 
    \end{cases}  
    \end{split}
\end{equation}
Using these equations, elements of $\mathbf{H}(l)$ may be expressed as 
\begin{equation}
    \label{eqn:ufc_Hessian1}
    \begin{split}
    &\mathbf{H}(l)_{i,j} = \begin{cases}
        n^2 \ln(\epsilon)^2 \sum_{\delta=1}^{\kappa}{\left( K_{\delta} \binom{\kappa-1}{\delta}_2 e^{n \ln(\epsilon)\frac{2^\kappa-2^{\kappa-\delta}}{2^\kappa-1}} \right)} &\text{\!\!\!\!if } i=j \\
        n^2 \ln(\epsilon)^2 \sum_{\delta=1}^{\kappa}{\left( K_{\delta} \binom{\kappa-2}{\delta}_2 e^{n \ln(\epsilon)\frac{2^\kappa-2^{\kappa-\delta}}{2^\kappa-1}} \right)} &\text{\!\!\!\!if } i \neq j \mathrm{,}
    \end{cases}
    \end{split}
\end{equation}
and hence $\mathbf{H}(l)$ may be expressed as a sum of a diagonal matrix and a uniform matrix as 
\begin{equation}
    \label{eqn:ufc_Hessian2}
    \begin{split}
    \mathbf{H}(l) &= \! n^2 \! \ln(\epsilon)^2 \! \sum_{\delta=1}^{\kappa}{\left( \!\! K_{\delta} \! \left( \! \binom{\kappa \! - \! 1}{\delta}_{\!\!2} \!\!\! - \!\! \binom{\kappa \! - \! 2}{\delta}_{\!\!2} \right) \! e^{n \ln(\epsilon)\frac{2^\kappa-2^{\kappa-\delta}}{2^\kappa-1}} \right)}  \mathbf{I} + n^2 \ln(\epsilon)^2 \sum_{\delta=1}^{\kappa}{\left( K_{\delta} \binom{\kappa \! - \! 2}{\delta}_{\!\!2} e^{n \ln(\epsilon)\frac{2^\kappa-2^{\kappa-\delta}}{2^\kappa-1}} \right)} \mathbf{J} \\
    &= \! n^2 \! \ln(\epsilon)^2  \sum_{\delta=1}^{\kappa}{\left( \! K_{\delta}   2^{\kappa - \delta - 1} \binom{\kappa \! - \! 2}{\delta-1}_{\!\!2} e^{n \ln(\epsilon)\frac{2^\kappa-2^{\kappa-\delta}}{2^\kappa-1}} \right)}  \mathbf{I} + n^2 \ln(\epsilon)^2 \sum_{\delta=1}^{\kappa}{\left( K_{\delta} \binom{\kappa \! - \! 2}{\delta}_{\!\!2} e^{n \ln(\epsilon)\frac{2^\kappa-2^{\kappa-\delta}}{2^\kappa-1}} \right)} \mathbf{J} \mathrm{.}
    \end{split}
\end{equation}
where $\mathbf{I}$ is the identity matrix and $\mathbf{J}$ is the all-ones matrix. 

We now consider a valid movement vector $\dot{\qbar}$ for the code definition vector $\qbar$. At $\qbar = \bar{\qbar}$, the nonnegativity constraint \eqref{eqn:QConstraintPositive} does not affect any of the elements of $\qbar$. The unit sum constraint \eqref{eqn:QConstraintTotal} requires the elements of a valid $\dot{\qbar}$ to sum to zero. Additionally, we define $\dot{\qbar}$ to be a unit vector, so we have 
\begin{equation}
    \label{eqn:ufc_qbardotConstraintTotal}
    \sum_{i=1}^{2^\kappa-1}{\dot{\qbar}_i} = 0
\end{equation}
and 
\begin{equation}
    \label{eqn:ufc_qbardotConstraintUnit}
    \sum_{i=1}^{2^\kappa-1}{\dot{\qbar}_i^2} = 1 \mathrm{.} 
\end{equation}

We are now ready to evaluate the effect of movement from $\qbar = \bar{\qbar}$ in the $\dot{\qbar}$ direction. The gradient in the $\dot{\qbar}$ direction is given by 
\begin{equation}
    \label{eqn:ufc_qdotGradient}
    \begin{split}
    \left. \nabla l(\qbar) \cdot \dot{\qbar} \right|_{\qbar=\bar{\qbar}} &= \sum_{i=1}^{2^\kappa-1} {-n \ln(\epsilon) \sum_{\delta=1}^{\kappa}{\left( K_{\delta} \binom{\kappa-1}{\delta}_2 e^{n \ln(\epsilon)\frac{2^\kappa-2^{\kappa-\delta}}{2^\kappa-1}} \right)} \dot{\qbar}_i} \\
    &\;\; = -n \ln(\epsilon) \sum_{\delta=1}^{\kappa}{\left( K_{\delta} \binom{\kappa-1}{\delta}_2 e^{n \ln(\epsilon)\frac{2^\kappa-2^{\kappa-\delta}}{2^\kappa-1}} \right)} \sum_{i=1}^{2^\kappa-1} {\dot{\qbar}_i} \\
    &\;\; = 0 \mathrm{.}
    \end{split}
\end{equation}

Finally, we consider the second derivative of equivocation loss with movement in the $\dot{\qbar}$ direction. This is given by the quadratic form $\dot{\qbar}^{\intercal} \mathbf{H}(l) \dot{\qbar}$ and is equal to 
\begin{equation}
    \label{eqn:ufc_qdotHessian1}
    \begin{split}
    \dot{\qbar}^{\intercal} \mathbf{H}(l) \dot{\qbar} &= \! n^2 \! \ln(\epsilon)^2  \sum_{\delta=1}^{\kappa}{\left( \! K_{\delta}   2^{\kappa - \delta - 1} \binom{\kappa \! - \! 2}{\delta-1}_{\!\!2} e^{n \ln(\epsilon)\frac{2^\kappa-2^{\kappa-\delta}}{2^\kappa-1}} \right)}  \dot{\qbar}^{\intercal} \mathbf{I} \dot{\qbar} + n^2 \ln(\epsilon)^2 \sum_{\delta=1}^{\kappa}{\left( K_{\delta} \binom{\kappa \! - \! 2}{\delta}_{\!\!2} e^{n \ln(\epsilon)\frac{2^\kappa-2^{\kappa-\delta}}{2^\kappa-1}} \right)} \dot{\qbar}^{\intercal} \mathbf{J} \dot{\qbar} \\
    &= n^2 \! \ln(\epsilon)^2  \sum_{\delta=1}^{\kappa}{\left( \! K_{\delta}   2^{\kappa - \delta - 1} \binom{\kappa \! - \! 2}{\delta-1}_{\!\!2} e^{n \ln(\epsilon)\frac{2^\kappa-2^{\kappa-\delta}}{2^\kappa-1}} \right)} \mathrm{.}
    \end{split}
\end{equation}
Using the substitutions $a= \frac{1}{2} n^2\ln(\epsilon)^2 e^{\frac{2^{\kappa} n \ln(\epsilon)}{2^\kappa-1}}$, $b=e^{\frac{-n \ln(\epsilon)}{2^\kappa-1}}$, and $d=\kappa-\delta$, this becomes 
\begin{equation}
    \label{eqn:ufc_qdotHessian2}
    \begin{split}
    &\dot{\qbar}^{\intercal} \mathbf{H}(l) \dot{\qbar} = a \sum_{d=0}^{\kappa-1}{\left( \! K_{\kappa-d} \binom{\kappa \! - \! 2}{\kappa-d-1}_{\!\!2} 2^{d} b^{2^d} \right)} \mathrm{,}
    \end{split}
\end{equation}
and using the definitions \eqref{eqn:GaussianBinomial} of the Gaussian binomial and \eqref{eqn:KConstantsLemmaStatement} of $K_\delta$, we obtain 
\begin{equation}
    \label{eqn:ufc_qdotHessian3}
    \begin{split}
    &\dot{\qbar}^{\intercal} \mathbf{H}(l) \dot{\qbar} = a \sum_{d=0}^{\kappa-1}{\left( \! \prod_{i=1}^{\kappa-d-1}{\!\!\! (1-2^i)} \! \prod_{j=0}^{\kappa-d-2}{\!\! \frac{2^{\kappa-2-j}-1}{2^{\kappa-d-1-j}-1}} 2^{d} b^{2^d} \!\right)} \!\mathrm{.}
    \end{split}
\end{equation}
Then using the substitution $j = \kappa-d-1-i$ in the first product yields 
\begin{equation}
    \label{eqn:ufc_qdotHessian3}
    \begin{split}
    \dot{\qbar}^{\intercal} \mathbf{H}(l) \dot{\qbar} &= a \sum_{d=0}^{\kappa-1}{\left( \! \prod_{j=0}^{\kappa-d-2}{\!\!\!(1\!-\!2^{\kappa-d-1-j})} \!\! \prod_{j=0}^{\kappa-d-2}{\! \frac{1\!-\!2^{\kappa-2-j}}{1\!-\!2^{\kappa-d-1-j}}} 2^{d} b^{2^d} \right)} \\
    &= a \sum_{d=0}^{\kappa-1}{\left( \! 2^{d} b^{2^d} \prod_{j=0}^{\kappa-d-2}{(1-2^{\kappa-2-j})}  \right)} \! \mathrm{,}
    \end{split}
\end{equation}
and substituting $i=\kappa-2-j$ gives
\begin{equation}
    \label{eqn:ufc_qdotHessian4}
    \begin{split}
    \dot{\qbar}^{\intercal} \mathbf{H}(l) \dot{\qbar} &= a \sum_{d=0}^{\kappa-1}{\left( \! 2^{d} b^{2^d} \prod_{i=d}^{\kappa-2}{(1-2^{i})}  \right)} \\
    &= \frac{a}{(1-2^{\kappa-1})} \sum_{d=0}^{\kappa-1}{\left( \! 2^{d} b^{2^d} \prod_{i=d}^{\kappa-1}{(1-2^{i})}  \right)} \mathrm{.}
    \end{split}
\end{equation}
Here, we have $a$ positive, $(1-2^{\kappa-1})$ negative, and by Lemma \ref{SuperexponentialLemma}, the sum in \eqref{eqn:ufc_qdotHessian4} is negative. Thus, $\dot{\qbar}^{\intercal} \mathbf{H}(l) \dot{\qbar}$ is positive, completing the proof.

\section{Proof of Theorem \ref{thm:sec_localOptimality1}}
\label{Appendix:secLocalOptimalityProof}
The proof proceeds as follows. First, an expression for the gradient vector and the Hessian matrix of equivocation with respect to $\qbar$ are derived. Next, the properties of a movement vector $\dot{\qbar}$ and an acceleration vector $\ddot{\qbar}$ are defined such that movement defined by these vectors is compliant with constraints \eqref{eqn:QConstraintPositive}, \eqref{eqn:QConstraintTotal}, and \eqref{eqn:sec_radiusConstraint}. The product of the gradient with any such $\dot{\qbar}$ is then shown to be zero. Finally, the second derivative with movement defined by $\dot{\qbar}$ and $\ddot{\qbar}$ is shown to be always positive. It is worth noting that although the nonnegativity constraint \eqref{eqn:QConstraintPositive} is imposed to invoke Corollary \ref{thm:zeroColumnCorollary}, this proof ultimately demonstrates local optimality of elements of all elements of $\check{\qbar}^{\{\kappa-1\}}$ regardless of this constraint. 

First, as in the case of the uniform vector fraction code, the gradient $\nabla l(\qbar)$ is given by \eqref{eqn:ufc_nabla1}, but in this case, the value of $\zeta(S)$ depends on both the dimension $d$ of $S$ and the dimension $v$ of the overlapping subspace $S \cup U$ of $S$ and the excluded subspace $U$. 
\begin{equation}
    \label{eqn:sec_overlapSize}
    \zeta(S^{\{d\}}:\mathrm{dim}(S^{\{d\}} \cup U^{\{u\}}) = v) = \frac{2^d-2^v}{2^\kappa-2^u} \mathrm{.} 
\end{equation}

Then to calculate the $i^{\text{th}}$ element of $\nabla l(\qbar)$, it is necessary to find the number of subspaces which contain $\nu(i)$ and which have a given dimension $d$ and overlap $U$ at a subspace of dimension $v$. This quantity depends on whether $\nu(i)$ is an element of $U$ (or equivalently whether $i < 2^u$). For the case that $\nu(i) \in U$, it is given by
\begin{equation}
    \label{eqn:sec_overlapInnerCount}
    \begin{split}
    \left|\{S^{\{d\}}:\mathrm{dim}(S^{\{d\}} \cup U^{\{u\}})| = v\text{, }\nu(i < 2^u) \in S\}\right| &=  \binom{u}{v}_2 \frac{2^v-1}{2^u-1} \prod_{i=0}^{d-v-1}{\frac{2^\kappa-2^{u+i}}{2^d-2^{v+i}}} \\
    &=  \binom{u}{v}_2 \frac{2^v-1}{2^u-1} \prod_{i=0}^{d-v-1}{\frac{2^{\kappa-v}-2^{u-v+i}}{2^{d-v}-2^{i}}} \\
    &=  \binom{u-1}{v-1}_2 2^{(u-v)(d-v)}\prod_{i=0}^{d-v-1}{\frac{2^{\kappa-u}-2^{i}}{2^{d-v}-2^{i}}} \\
    &=  \binom{u-1}{u-v}_2 2^{(u-v)(d-v)} \binom{\kappa-u}{d-v}_2 \mathrm{,} 
    \end{split}
\end{equation}
while if $\nu(i) \notin U$, it is given by
\begin{equation}
    \label{eqn:sec_overlapOuterCount}
    \begin{split}
    \left|\{S^{\{d\}}:\mathrm{dim}(S^{\{d\}} \cup U^{\{u\}})| = v\text{, }\nu(i \geq 2^u) \in S\}\right| &=  \binom{u}{v}_2 \prod_{i=0}^{d-v-2}{\frac{2^\kappa-2^{u+1+i}}{2^d-2^{v+1+i}}} \\
    & =  \binom{u}{v}_2 \prod_{i=0}^{d-v-2}{\frac{2^{\kappa-v-1}-2^{u-v+i}}{2^{d-v-1}-2^{i}}} \\
    & =  \binom{u}{v}_2 2^{(u-v)(d-v-1)} \prod_{i=0}^{d-v-2}{\frac{2^{\kappa-u-1}-2^{i}}{2^{d-v-1}-2^{i}}} \\
    & =  \binom{u}{u-v}_2 2^{(u-v)(d-v-1)} \binom{\kappa-u-1}{d-v-1}_2 \mathrm{.} 
    \end{split}
\end{equation}

Using these quantities with \eqref{eqn:ufc_nabla1}, the elements of $\nabla l(\check{\qbar}^{\{u\}}$ are given by 
\begin{equation}
    \label{eqn:sec_nabla2}
    \begin{split}
    \nabla l(\check{\qbar}^{\{u\}})_{i} &= -n\ln(\epsilon) e^{n\ln(\epsilon)} \sum_{\delta=1}^{\kappa}{\left. K_{\delta} \sum_{v=0}^{u}{e^{-n\ln(\epsilon) \frac{2^{\kappa-\delta}-2^v}{2^\kappa-2^u}} \cdot } \right.} \left. \left. \begin{cases} & \left. \binom{u-1}{u-v}_2 2^{(u-v)(\kappa-\delta-v)} \binom{\kappa-u}{\kappa-\delta-v}_2  \right) \\
    &\;\;\;\; \text{ if } i \leq 2^u \\
    & \left. \binom{u}{u-v}_2 2^{(u-v)(\kappa-\delta-v-1)} \binom{\kappa-u-1}{\kappa-\delta-v-1}_2 \right) \\
    &\;\;\;\; \text{ if } i > 2^u \mathrm{.} 
    \end{cases}
     \right. \right.
    \end{split}
\end{equation}
For the remainder of this work, we simplify the notation by representing the quantity $-n\ln(\epsilon)$ by the variable $t$. Using this substitution and limiting the expression to the first ($u=\kappa-1$) subspace exclusion code, this simplifies to 
\begin{equation}
    \label{eqn:sec_nabla3}
    \begin{split}
    &\nabla l(\check{\qbar}^{\{\kappa-1\}})_{i} = \begin{cases} &t e^{-t} \sum_{\delta=1}^{\kappa}{\left( K_{\delta} \left( \binom{\kappa-2}{\delta-1}_2 + \binom{\kappa-2}{\delta}_2 2^{\delta} e^{t2^{-\delta}} \right) \right)}  \\
    & \;\;\;\; \text{ if } i \leq 2^u \\
    & t e^{-t} \sum_{\delta=1}^{\kappa}{\left( K_{\delta} \left( \binom{\kappa-1}{\delta}_2 e^{t2^{-\delta}} \right) \right)} \\ &\;\;\;\; \text{ if } i > 2^u \mathrm{.} 
    \end{cases}
    \end{split}
\end{equation}

To calculate elements of the Hessian matrix for $l(q)$, it is necessary to find the number of subspaces which contain both $\nu(i)$ and $\nu(j)$ for any given $i$ and $j$ and which have a given dimension $d$ and overlap $U$ at a subspace of dimension $v$. If $i=j$, this quantity is simply given by \eqref{eqn:sec_overlapInnerCount} or \eqref{eqn:sec_overlapOuterCount}, depending on whether $\nu(i)$ is within $U$. On the other hand, if $i \neq j$, this quantity is given by 
\begin{equation}
    \label{eqn:sec_overlapNoteqInInCount}
    \begin{split}
    \left|\{S^{\{d\}}:\mathrm{dim}(S \cup U) = v\text{, }\nu(i),\nu(j) \in S, U\}\right| &=  \binom{u-2}{v-2}_2 \prod_{i=0}^{d-v-1}{\frac{2^\kappa-2^{u+i}}{2^d-2^{v+i}}} \\
    &=  \binom{u-2}{v-2}_2 \prod_{i=0}^{d-v-1}{\frac{2^{\kappa-v}-2^{u-v+i}}{2^{d-v}-2^{i}}} \\
    &=  \binom{u-2}{v-2}_2 2^{(u-v)(d-v)} \prod_{i=0}^{d-v-1}{\frac{2^{\kappa-u}-2^{i}}{2^{d-v}-2^{i}}} \\
    &=  \binom{u-2}{u-v}_2 2^{(u-v)(d-v)} \binom{\kappa-u}{d-v}_2 \mathrm{,} 
    \end{split}
\end{equation}
if both $\nu(i)$ and $\nu(j)$ are within $U$. If $\nu(i)$ and $\nu(j)$ are not within $U$, there are three possible scenarios. First, one of $\nu(i)$ or $\nu(j)$ may be within $U$ and the other may be not within $U$. In this case, the number of superspaces is given by
\begin{equation}
    \label{eqn:sec_overlapNoteqInOutCount}
    \begin{split}
    \left|\{S^{\{d\}}:\mathrm{dim}(S \cup U) = v\text{, }\nu(i)\in S,U;\nu(j) \in S, \notin U\}\right| &=  \binom{u}{v}_2 \frac{2^v-1}{2^u-1} \prod_{i=0}^{d-v-2}{\frac{2^\kappa-2^{u+i+1}}{2^d-2^{v+i+1}}} \\
    &= \binom{u-1}{v-1}_2 \prod_{i=0}^{d-v-2}{\frac{2^{\kappa-v-1}-2^{u-v+i}}{2^{d-v-1}-2^{i}}} \\
    &= \binom{u-1}{v-1}_2 2^{(u-v)(d-v-1)} \prod_{i=1}^{d-v-1}{\frac{2^{\kappa-u-1}-2^{i}}{2^{d-v-1}-2^{i}}} \\
    &= \binom{u-1}{u-v}_2 2^{(u-v)(d-v-1)} \binom{\kappa-u-1}{d-v-1}_2 \mathrm{.} 
    \end{split}
\end{equation}
In the second scenario, neither $\nu(i)$ nor $\nu(j)$ are within $U$, but $\nu(i)+\nu(j)$ is within $U$. In this case, the desired number of superspaces is 
\begin{equation}
    \label{eqn:sec_overlapNoteqOutOutCombinCount}
    \begin{split}
    \left|\{S^{\{d\}}\!:\mathrm{dim}(S \cup U) \!=\! v\text{, }\nu(i),\nu(j) \!\in\! S, \!\notin\! U, \nu(i)\!+\!\nu(j) \!\in\! U\}\right| &=  \binom{u}{v}_2 \frac{2^v-1}{2^u-1} \prod_{i=0}^{d-v-2}{\frac{2^\kappa-2^{u+1+i}}{2^d-2^{v+1+i}}} \\
    &=  \binom{u-1}{v-1}_2 \prod_{i=0}^{d-v-2}{\frac{2^{\kappa-v-1}-2^{u-v+i}}{2^{d-v-1}-2^{i}}} \\
    &=  \binom{u-1}{v-1}_2 2^{(u-v)(d-v-1)} \prod_{i=0}^{d-v-2}{\frac{2^{\kappa-u-1}-2^{i}}{2^{d-v-1}-2^{i}}} \\
    &=  \binom{u-1}{u-v}_2 2^{(u-v)(d-v-1)} \binom{\kappa-u-1}{d-v-1}_2 \mathrm{.} 
    \end{split}
\end{equation}
In the third scenario, none of $\nu(i)$, $\nu(j)$, or $\nu(i)+\nu(j)$ are within $U$. In this case, the number of superspaces is given by 
\begin{equation}
    \label{eqn:sec_overlapNoteqOutOutComboutCount}
    \begin{split}
    \left|\{S^{\{d\}}\!:\mathrm{dim}(S \cup U) = v\text{, }\nu(i),\nu(j), \nu(i)+\nu(j) \in S, \notin U\}\right| &=  \binom{u}{v}_2 \prod_{i=0}^{d-v-3}{\frac{2^\kappa-2^{u+2+i}}{2^d-2^{v+2+i}}} \\
    & =  \binom{u}{v}_2 \prod_{i=0}^{d-v-3}{\frac{2^{\kappa-v-2}-2^{u-v+i}}{2^{d-v-2}-2^{i}}} \\
    & =  \binom{u}{v}_2 2^{(u-v)(d-v-2)} \prod_{i=0}^{d-v-3}{\frac{2^{\kappa-u-2}-2^{i}}{2^{d-v-2}-2^{i}}} \\
    & =  \binom{u}{u-v}_2 2^{(u-v)(d-v-2)} \binom{\kappa-u-2}{d-v-2}_2 \mathrm{.} 
    \end{split}
\end{equation}

Using these quantities, the elements of the Hessian matrix $\mathbf{H}(l)$ of $l(q)$ may be expressed as 
\begin{equation}
    \label{eqn:sec_HessianDef}
    \begin{split}
    &\mathbf{H}(l)_{i,j} = t^2 e^{-t} \sum_{\delta=1}^{\kappa}{\left. K_{\delta} \sum_{v=0}^{u}{e^{t \frac{2^{\kappa-\delta}-2^v}{2^{\kappa}-2^{u}}}} \cdot \right.} \begin{cases}
    & \left. \binom{u-1}{u-v}_2 2^{(u-v)(\kappa-\delta-v)} \binom{\kappa-u}{\kappa-\delta-v}_2  \right. \\
    &\;\;\;\; \text{ if } i=j \leq 2^u \\
    & \left. \binom{u}{u-v}_2 2^{(u-v)(\kappa-\delta-v-1)} \binom{\kappa-u-1}{\kappa-\delta-v-1}_2 \right. \\
    &\;\;\;\; \text{ if } i=j > 2^u \\
    & \left. \binom{u-2}{u-v}_2 2^{(u-v)(\kappa-\delta-v)} \binom{\kappa-u}{\kappa-\delta-v}_2  \right. \\
    &\;\;\;\; \text{ if } (i \leq 2^u) \neq (j \leq 2^u) \\
    & \left. \binom{u-1}{u-v}_2 2^{(u-v)(\kappa-\delta-v-1)} \binom{\kappa-u-1}{\kappa-\delta-v-1}_2  \right. \\
    &\;\;\;\; \text{ if } i < 2^u, j \geq 2^u \text{ or } i,j \geq 2^u, \nu(i)+\nu(j) \in U \\
    & \left. \binom{u}{u-v}_2 2^{(u-v)(\kappa-\delta-v-2)} \binom{\kappa-u-2}{\kappa-\delta-v-2}_2  \right. \\
    &\;\;\;\; \text{ if } i,j \geq 2^u, \nu(i)+\nu(j) \notin U \mathrm{.}
    \end{cases}
    \end{split}
\end{equation}
Limiting this expression to the case of $u=\kappa-1$ gives
\begin{equation}
    \label{eqn:sec_HessianLimited}
    \begin{split}
    &\mathbf{H}(l)_{i,j} = \begin{cases}
    & t^2 e^{-t} \sum_{\delta=1}^{\kappa}{\left( K_{\delta} \cdot \left( \binom{\kappa-2}{\delta-1}_2  + \binom{\kappa-2}{\delta}_2 2^{\delta} e^{t 2^{-\delta}}  \right) \right)} \\
    &\;\;\;\; \text{ if } i=j \leq 2^u \\
    & t^2 e^{-t} \sum_{\delta=1}^{\kappa}{\left( K_{\delta} \cdot \left( \binom{\kappa-1}{\delta}_2 e^{t 2^{-\delta}} \right) \right)} \\
    &\;\;\;\; \text{ if } i=j > 2^u \\
    & t^2 e^{-t} \sum_{\delta=1}^{\kappa}{\left( K_{\delta} \cdot \left( \binom{\kappa-3}{\delta-1}_2 + \binom{\kappa-3}{\delta}_2 2^{\delta} e^{t 2^{-\delta}} \right) \right)} \\
    &\;\;\;\; \text{ if } (i \leq 2^u) \neq (j \leq 2^u) \\
    & t^2 e^{-t} \sum_{\delta=1}^{\kappa}{\left( K_{\delta} \cdot \left( \binom{\kappa-2}{\delta}_2 e^{t 2^{-\delta}} \right) \right)} \\
    &\;\;\;\; \text{ if } i < 2^u, j \geq 2^u \text{ or } i,j \geq 2^u, \nu(i)+\nu(j) \in U \mathrm{.}
    \end{cases}
    \end{split}
\end{equation}

Next, we turn our attention to the properties of a compliant differential element for $\qbar$. It is clear that $\dot{\qbar}$ must satisfy \eqref{eqn:ufc_qbardotConstraintTotal} because $\qbar$ must satisfy \eqref{eqn:QConstraintTotal}. We also require $\qbar$ to satisfy \eqref{eqn:ufc_qbardotConstraintUnit}- that is, to be a unit vector. Additionally, because $\qbar$ must satisfy the radius constraint \eqref{eqn:sec_radiusConstraint}, $\dot{\qbar}$ must satisfy an orthogonality constraint 
\begin{equation}
    \label{eqn:sec_qbardotConstraintOrthogonal}
    \rho(u)^{\intercal} \dot{\qbar} = 0 \mathrm{.}
\end{equation}
Using the definition \eqref{eqn:sec_defRadius} of $\rho(u)$, this constraint may equivalently be expressed as 
\begin{equation}
    \label{eqn:sec_qbardotConstraintOrthogonal1}
    \begin{split}
    \sum_{i=1}^{2^u-1}{\left( -\frac{1}{2^\kappa-1} \dot{\qbar}_i \right)} + \sum_{i=2^u}^{2^\kappa-1}{\left( \frac{2^u-1}{(2^\kappa-1)(2^\kappa-2^u)} \dot{\qbar}_i \right)} &=  -\frac{1}{2^\kappa-1} \sum_{i=1}^{2^u-1}{\left( \dot{\qbar}_i \right)} + \frac{2^u-1}{(2^\kappa-1)(2^\kappa-2^u)} \sum_{i=2^u}^{2^\kappa-1}{\left( \dot{\qbar}_i \right)} \\
    & = \frac{1}{2^\kappa-1} \left( - \sum_{i=1}^{2^\kappa-1}{\left(\dot{\qbar}_i\right)} + \frac{2^\kappa-1}{2^\kappa-2^u} \sum_{i=2^u}^{2^\kappa-1}{\left(\dot{\qbar}_i\right)} \right) = 0\mathrm{.} 
    \end{split}        
\end{equation}
Then combining \eqref{eqn:sec_qbardotConstraintOrthogonal1} with \eqref{eqn:ufc_qbardotConstraintTotal}, we obtain simply 
\begin{equation}
    \label{eqn:sec_qbardotConstraintOrthogonal2}
    \sum_{i=2^u}^{2^\kappa-1}{\left(\dot{\qbar}_i\right)} = \sum_{i=1}^{2^u-1}{\left(\dot{\qbar}_i\right)} = 0 \mathrm{.}
\end{equation}

\eqref{eqn:ufc_qbardotConstraintUnit}, \eqref{eqn:ufc_qbardotConstraintTotal}, and \eqref{eqn:sec_qbardotConstraintOrthogonal} comprise a complete set of constraints on a first-order differential element $\dot{\qbar}$ for $\qbar$. There is, however, an additional constraint on the second-order differential element $\ddot{\qbar}$ arising from the radius constraint \eqref{eqn:sec_radiusConstraint} on $\qbar$. Because the surface defined by \eqref{eqn:sec_radiusConstraint} has curvature equal to $\frac{1}{|\rho(u)|}$, any first-order unit movement vector $\dot{\qbar}$ orthogonal to $\rho(u)$ must be accompanied by a second-order movement vector equal to 
\begin{equation}
    \label{eqn:sec_qdotdot}
    \ddot{\qbar} = -\frac{\rho(u)}{|\rho(u)|^2} \mathrm{.}
\end{equation}
The components of $\ddot{\qbar}$ are thus given by 
\begin{equation}
    \label{eqn:sec_qdotdot2}
    \ddot{\qbar}_i = \begin{cases}
        \frac{2^{\kappa}-2^{u}}{2^{u}-1} & \text{ if } 1 \leq i < 2^u \\
        -1 & \text{ if } i \geq 2^u \mathrm{.}
    \end{cases} \mathrm{,}
\end{equation}
or limited to $u=\kappa-1$, 
\begin{equation}
    \label{eqn:sec_qdotdot2}
    \ddot{\qbar}_i = \begin{cases}
        \frac{2^{\kappa-1}}{2^{\kappa-1}-1} & \text{ if } 1 \leq i < 2^{\kappa-1} \\
        -1 & \text{ if } i \geq 2^{\kappa-1} \mathrm{.}
    \end{cases} \mathrm{.}
\end{equation}

Using the expression \eqref{eqn:sec_nabla3} and \eqref{eqn:sec_qbardotConstraintOrthogonal2}, we may compute the derivative of $l(\qbar)$ with movement in the $\dot{\qbar}$ direction as  
\begin{equation}
    \label{eqn:sec_firstDeriv}
    \begin{split}
    \nabla l(\qbar) \dot{\qbar} &= t e^{-t} \sum_{i=1}^{2^{\kappa-1}-1}{\left( \dot{\qbar}_i \sum_{\delta=1}^{\kappa}{\left( K_{\delta} \left( \binom{\kappa-2}{\delta-1}_{\!2} \!\!\!+\! \binom{\kappa-2}{\delta}_{\!2} 2^{\delta} e^{t2^{-\delta}} \right) \right)} \right)} + t e^{-t} \sum_{i=2^{\kappa-1}}^{2^{\kappa}-1}{\left( \dot{\qbar}_i \sum_{\delta=1}^{\kappa}{\left( K_{\delta} \binom{\kappa-1}{\delta}_{\!2} e^{t2^{-\delta}} \right)} \right)} \\
    &= \sum_{i=1}^{2^{\kappa-1}-1}{ \!\!\!\! \left( \dot{\qbar}_i\right)} \sum_{\delta=1}^{\kappa}{\left( K_{\delta} \left( \binom{\kappa-2}{\delta-1}_{\!2} \!\!\!+\! \binom{\kappa-2}{\delta}_{\!2} 2^{\delta} e^{t2^{-\delta}} \right) \right)} + t e^{-t} \sum_{i=2^{\kappa-1}}^{2^{\kappa}-1}{\left(\dot{\qbar}_i \right)} \sum_{\delta=1}^{\kappa}{\left( K_{\delta} \binom{\kappa-1}{\delta}_{\!2} e^{t2^{-\delta}} \right)} \\
    &= t e^{-t} \cdot 0 \cdot \sum_{\delta=1}^{\kappa}{\left( K_{\delta} \left( \binom{\kappa-2}{\delta-1}_{\!2} \!\!\!+\! \binom{\kappa-2}{\delta}_{\!2} 2^{\delta} e^{t2^{-\delta}} \right) \right)} + t e^{-t} \cdot 0 \cdot \sum_{\delta=1}^{\kappa}{\left( K_{\delta} \binom{\kappa-1}{\delta}_{\!2} e^{t2^{-\delta}} \right)} = 0 \mathrm{.} 
    \end{split}
\end{equation}

We must now examine the second derivative of $l$ with respect to movement in a compliant $\dot{\qbar}$ direction. Such movement produces a second-order term resulting from the first-order movement term multiplied by the Hessian as well as from the second-order movement term multiplied by the gradient. Thus the second derivative is equal to 
\begin{equation}
    \label{eqn:sec_secDerivDef}
    \frac{\partial^2l(\qbar)}{\partial \qbar^2} = \dot{\qbar}^{\intercal}\mathbf{H}(l)\dot{\qbar} + \nabla l(\qbar) \ddot{\qbar} \mathrm{.} 
\end{equation}
To calculate the first term, we observe from \eqref{eqn:sec_HessianLimited} that $\left. \mathbf{H}(l) \right|_{\check{\qbar}^{\{\kappa-1\}}}$ may be expressed in terms of identity matrices and all-ones matrices as 
\begin{equation}
    \label{eqn:sec_HessianIJ}
    \begin{split}
    &\mathbf{H}(l) \rvert_{\check{\qbar}^{\{\kappa-1\}}} = \begin{bmatrix}
        D_1 \mathbf{J}_{2^{\kappa}-1}
    \end{bmatrix} + \begin{bmatrix}
        D_2 \mathbf{J}_{2^{\kappa-1}-1} & \mathbf{0} \\ \mathbf{0} & \mathbf{0}
    \end{bmatrix} + \begin{bmatrix}
        D_3 \mathbf{I}_{2^{\kappa-1}-1} & \mathbf{0} \\ \mathbf{0} & D_4 \mathbf{I}_{2^{\kappa-1}}
    \end{bmatrix} \mathrm{,} 
    \end{split}
\end{equation}
where 
\begin{equation}
    \label{eqn:sec_D1_def}
    D_1 = t^2 e^{-t} \sum_{\delta=1}^{\kappa}{\left( \!K_{\delta}\! \left( \!\!\binom{\kappa-2}{\delta}_{\!\!2} e^{t 2^{\!-\delta}} \right)\!\! \right)} \mathrm{,}
\end{equation}
\begin{equation}
    \label{eqn:sec_D2_def}
    D_2 = t^2 e^{-t} \sum_{\delta=1}^{\kappa}{\left( \!K_{\delta}\! \left( \!\!\binom{\kappa-3}{\delta-1}_{\!\!2} \!\!\!+\!\! \binom{\kappa-3}{\delta}_{\!\!2} 2^{\delta} e^{t 2^{\!-\delta}} \right)\!\! \right)} \!\!-\! D_1 \mathrm{,}
\end{equation}
\begin{equation}
    \label{eqn:sec_D3_def}
    \begin{split}
    D_3 &= t^2 e^{-t} \sum_{\delta=1}^{\kappa}{\left( \!K_{\delta}\! \left( \!\!\binom{\kappa-2}{\delta-1}_{\!\!2}  \!\!\!+\!\! \binom{\kappa-2}{\delta}_{\!\!2} 2^{\delta} e^{t 2^{\!-\delta}}  \right)\!\! \right)} \!-\! D_2 \!\!-\! D_1 \\
    &=\! t^2 e^{-t} \sum_{\delta=1}^{\kappa}{\left( \!K_{\delta} 2^{\kappa-\delta-1} \left( \!\!\binom{\kappa-3}{\delta-2}_{\!\!2} \!\!\!+\!\! \binom{\kappa-3}{\delta-1}_{\!\!2} 2^{\delta-1} e^{t 2^{\!-\delta}}  \right)\!\! \right)} \mathrm{,}
    \end{split}
\end{equation}
and 
\begin{equation}
    \label{eqn:sec_D4_def}
    \begin{split}
    D_4 &= t^2 e^{-t} \sum_{\delta=1}^{\kappa}{\left( \!K_{\delta}\! \left( \!\!\binom{\kappa-1}{\delta}_{\!\!2} e^{t 2^{\!-\delta}} \right)\!\! \right)} - D_1 \\
    &=\! t^2 e^{-t} \sum_{\delta=1}^{\kappa}{\left( \!K_{\delta}\! \left( \!\!\binom{\kappa-2}{\delta-1}_{\!\!2} 2^{\kappa-\delta-1} e^{t 2^{\!-\delta}} \right) \!\!\right)} \mathrm{.}
    \end{split}
\end{equation}
It is clear from inspection of \eqref{eqn:sec_HessianIJ} that the first two terms will equal zero when multiplied in a quadratic form with a $\dot{\qbar}$ that satisfies \eqref{eqn:ufc_qbardotConstraintTotal} and \eqref{eqn:sec_qbardotConstraintOrthogonal2}. Furthermore, it is clear that a quadratic form of the third term of \eqref{eqn:sec_HessianIJ} with a unit vector $\dot{\qbar}$ must have value between $D_3$ and $D_4$. 

The next step in the proof is to show that $D_3 \geq D_4$. To do this, we first recall the expression \eqref{eqn:MersenneProof9} for the quantity $a_n$ defined by \eqref{eqn:MersenneProof1}. With minor changes to variables and limits, we may write 
\begin{equation}
    \label{eqn:sec_MersenneExpression}
    \begin{split}
    \sum_{i=1}^{n}{\left( 2^{ci} \prod_{j=i}^{n}{(1-2^j)} \right)} &= (1-2^n) \sum_{j=1}^{n}{\left( \binom{n-1}{j-1}_2 2^{c} \cdot \prod_{i=1}^{j-1}{(2^c-2^i)} \right)} \mathrm{.}
    \end{split}
\end{equation}

By the properties of Gaussian binomials, $D_4$ may be expressed as 
\begin{equation}
    \label{eqn:sec_D4_1}
    \begin{split}
    D_4 &= t^2 e^{-t} \sum_{\delta=1}^{\kappa}{\left( \!K_{\delta} 2^{\kappa-\delta-1} \!\left( \!\!\binom{\kappa\!-\!3}{\delta\!-\!2}_{\!\!2} \!\!\!+\!\! \binom{\kappa\!-\!3}{\delta\!-\!1}_{\!\!2} 2^{\delta-1} \right) e^{t 2^{-\delta}} \right)} \\
    &= D_3 - t^2 e^{-t} \sum_{\delta=2}^{\kappa}{\left( K_{\delta} \cdot 2^{\kappa-\delta-1}  \binom{\kappa-3}{\delta-2}_{\!\!2} \left(1-e^{t 2^{-\delta}} \right)  \right)} \mathrm{.} 
    \end{split}
\end{equation}
The difference between $D_3$ and $D_4$ is then given by 
\begin{equation}
    \label{eqn:sec_D3-D4}
    \begin{split}
    (D_3-D_4)/(t^2 e^{-t} )&= \!\sum_{\delta=2}^{\kappa}{\left( \prod_{i=1}^{\delta-1}{(1\!-\!2^i)} \cdot 2^{\kappa-\delta-1}  \prod_{\iota=0}^{\delta-3}{\frac{1 \!-\! 2^{\kappa-3-\iota}}{1 \!-\! 2^{\delta-2-\iota}} } \left(1\!-\!e^{t 2^{-\delta}} \right) \! \right)} \\
    &= \!\sum_{\delta=2}^{\kappa}{\left( \prod_{i=1}^{\delta-1}{(1\!-\!2^{\delta-i})} \cdot 2^{\kappa-\delta-1}  \prod_{\iota=2}^{\delta-1}{\frac{1 \!-\! 2^{\kappa-1-\iota}}{1 \!-\! 2^{\delta-\iota}} } \left(1\!-\!e^{t 2^{-\delta}} \right) \! \right)} \\
    &= \!\sum_{\delta=2}^{\kappa}{\left( (1\!-\!2^{\delta-1}) \cdot 2^{\kappa-\delta-1}  \prod_{\iota=2}^{\delta-1}{(1 \!-\! 2^{\kappa-1-\iota} )} \left(1\!-\!e^{t 2^{-\delta}} \right) \! \right)} \\
    &= \!\sum_{\delta=2}^{\kappa}{\left( (1\!-\!2^{\delta-1}) \cdot 2^{\kappa-\delta-1}  \prod_{j=\kappa-\delta}^{\kappa-3}{(1 \!-\! 2^{j} )} \left(1\!-\!e^{t 2^{-\delta}} \right) \! \right)} \\
    &= \!\sum_{i=0}^{\kappa-2}{\left( (1-2^{\kappa-i-1}) \cdot 2^{i-1}  \prod_{j=i}^{\kappa-3}{(1 \!-\! 2^{j} )} \left(1\!-\!e^{t 2^{i-\kappa}} \right) \! \right)} \\
    &= \!\frac{1}{1-2^{\kappa-2}} \sum_{i=1}^{\kappa-2}{\left( 2^{i-1}  \prod_{j=i}^{\kappa-2}{(1 \!-\! 2^{j} )} \left(1\!-\!e^{t 2^{i-\kappa}} \right) \! \right)} - \frac{2^{\kappa-2}}{1-2^{\kappa-2}} \sum_{i=1}^{\kappa-2}{\left(  \prod_{j=i}^{\kappa-2}{(1 \!-\! 2^{j} )} \left(1\!-\!e^{t 2^{i-\kappa}} \right) \! \right)}\mathrm{.}
    \end{split}
\end{equation}
We now expand the exponentials in \eqref{eqn:sec_D3-D4} to yield 
\begin{equation}
    \label{eqn:sec_D3-D4_1}
    \begin{split}
    (D_3-D_4)/(t^2 e^{-t} ) &= -\frac{1}{1\!-\!2^{\kappa-2}} \sum_{i=1}^{\kappa-2}{\left( 2^{i-1}  \prod_{j=i}^{\kappa-2}{(1 - 2^{j} )} \sum_{\iota=1}^{\infty}{\frac{(2^{-\kappa}t)^\iota 2^{i\iota}}{\iota!}}  \right)}  + \frac{2^{\kappa-2}}{1-2^{\kappa-2}} \sum_{i=1}^{\kappa-2}{\left(  \prod_{j=i}^{\kappa-2}{(1 - 2^{j} )} \sum_{\iota=1}^{\infty}{\frac{(2^{-\kappa}t)^\iota 2^{i\iota}}{\iota!}}  \right)} \\
    &= -\sum_{\iota=1}^{\infty}{\frac{(2^{-\kappa}t)^\iota }{\iota!} \frac{1}{1\!-\!2^{\kappa-2}} \frac{1}{2} \sum_{i=1}^{\kappa-2}{\left( 2^{i(\iota+1)} \prod_{j=i}^{\kappa-2}{(1 \!-\! 2^{j} )}   \right)}} \!+\! \sum_{\iota=1}^{\infty}{\frac{(2^{-\kappa}t)^\iota}{\iota!} \frac{2^{\kappa-2}}{1-2^{\kappa-2}} \sum_{i=1}^{\kappa-2}{\left(  2^{i\iota} \prod_{j=i}^{\kappa-2}{(1 \!-\! 2^{j} )}   \right)}}\mathrm{.}
    \end{split}
\end{equation}
Next, we use \eqref{eqn:sec_MersenneExpression} to obtain 
\begin{equation}
    \label{sec_D3-D4_2}
    \begin{split}
    D_3-D_4 = &-t^2 e^{-t} \sum_{\iota=1}^{\infty}{\frac{(2^{-\kappa}t)^\iota }{\iota!} \frac{1}{1-2^{\kappa-2}} \frac{1}{2} \left( (1-2^{\kappa-2}) \sum_{j=1}^{\kappa-2}{\left( \binom{\kappa-3}{j-1}_2 2^{\iota+1} \cdot \prod_{i=1}^{j-1}{(2^{\iota+1}-2^i)} \right)} \right)} \\
    &+ t^2 e^{-t} \sum_{\iota=1}^{\infty}{\frac{(2^{-\kappa}t)^\iota }{\iota!} \frac{1}{1-2^{\kappa-2}} \frac{1}{2} \left((1-2^{\kappa-2}) 2^{\kappa-1} \sum_{j=1}^{\kappa-2}{\left( \binom{\kappa-3}{j-1}_2 2^{\iota} \cdot \prod_{i=1}^{j-1}{(2^{\iota}-2^i)} \right)} \right)} \\
    &= t^2 e^{-t} \sum_{\iota=1}^{\infty}{\left( \frac{(2^{-\kappa}t)^\iota }{\iota!} 2^{\iota}  \sum_{j=1}^{\kappa-2}{\left( \binom{\kappa-3}{j-1}_2 \cdot \right.} \left. \left( 2^{\kappa-2} \prod_{i=1}^{j-1}{(2^{\iota}-2^i)} - \prod_{i=1}^{j-1}{(2^{\iota+1}-2^i)} \right) \!\! \right) \vphantom{\frac{(2^{-\kappa}t)^\iota }{\iota!} 2^{\iota}  \sum_{j=1}^{\kappa-2}{\left( \binom{\kappa-3}{j-1}_2 \cdot \right.}} \!\!\! \right)} \\
    &= t^2 e^{-t} \sum_{\iota=1}^{\infty}{\left( \frac{(2^{-\kappa}t)^\iota }{\iota!} 2^{\iota}  \sum_{j=1}^{\kappa-2}{\left( \binom{\kappa-3}{j-1}_2 \cdot \right.} \left. \left( 2^{\kappa-2} (2^{\iota}-2^{j-2}) \prod_{i=1}^{j-2}{(2^{\iota}-2^i)} - \prod_{i=0}^{j-2}{(2^{\iota+1}-2^{i+1})} \right) \!\! \right) \vphantom{\frac{(2^{-\kappa}t)^\iota }{\iota!} 2^{\iota}  \sum_{j=1}^{\kappa-2}{\left( \binom{\kappa-3}{j-1}_2 \cdot \right.}} \!\!\! \right)} \\
    &= t^2 e^{-t} \sum_{\iota=1}^{\infty}{\left( \frac{(2^{-\kappa}t)^\iota }{\iota!} 2^{\iota}  \sum_{j=1}^{\kappa-2}{\left( \binom{\kappa-3}{j-1}_2 \cdot \right.} \left. \left( \left( 2^{\kappa-2} (2^{\iota}-2^{j-2}) - 2^{j-1} (2^{\iota}-1) \right) \prod_{i=1}^{j-2}{(2^{\iota}-2^i)} \right) \!\! \right) \vphantom{\frac{(2^{-\kappa}t)^\iota }{\iota!} 2^{\iota}  \sum_{j=1}^{\kappa-2}{\left( \binom{\kappa-3}{j-1}_2 \cdot \right.}} \!\!\! \right)} \mathrm{.}
    \end{split}
\end{equation}
Now considering the last line of \eqref{sec_D3-D4_2}, observe that if $j-2 \geq \iota$, the product contains a zero term, so the entire expression is equal to zero. If $j-2 < \iota$, then since $j$ attains a maximum value of $\kappa-2$, the first term of $2^{\kappa-2}(2^\iota-2^{j-2})$ is always greater than the second term of $2^{j-1}(2^\iota-1)$. Thus all terms are zero or positive, and we may conclude that $D_3 \geq D_4$. Then we may further conclude that $\dot{\qbar}^{\intercal} \mathbf{H}(l) \dot{\qbar} \geq D_4$.

The next step in calculating the second derivative is to consider the second term of \eqref{eqn:sec_secDerivDef}, which arises as a result of the requirement that $\qbar$ retain a fixed distance from $\bar{\qbar}$. This term may be expanded using \eqref{eqn:sec_nabla3} and \eqref{eqn:sec_qdotdot} to yield 
\begin{equation}
    \label{eqn:sec_NablaDotDot1}
    \begin{split}
    \nabla l(\qbar) \ddot{\qbar} &= \! \frac{(2^{\kappa-1}\!\!-\!1) 2^{\kappa-1}}{2^{\kappa-1}\!\!-\!1} t e^{-t} \!\sum_{\delta=1}^{\kappa}{\!\left( \!\!K_{\delta}\! \left( \!\binom{\kappa\!-\!2}{\delta\!-\!1}_{\!\!2} \!\!+\! \binom{\kappa\!-\!2}{\delta}_{\!\!2} 2^{\delta} e^{t2^{-\delta}} \!\right) \!\right)} - (2^{\kappa-1}) t e^{-t} \sum_{\delta=1}^{\kappa}{\left( K_{\delta} \left( \!\binom{\kappa-1}{\delta}_{\!\!2} e^{t2^{-\delta}} \right) \! \right)} \\
    & = \!2^{\kappa-1} t e^{-t} \sum_{\delta=1}^{\kappa}{\left( K_{\delta} \left( \!\binom{\kappa-2}{\delta-1}_{\!\!2} + \binom{\kappa-2}{\delta}_{\!\!2} 2^{\delta} e^{t2^{-\delta}} - \binom{\kappa-1}{\delta}_{\!\!2} e^{t2^{-\delta}} \right) \!\right)} \\
    & =\! 2^{\kappa-1} t e^{-t} \sum_{\delta=1}^{\kappa}{\left( K_{\delta} \binom{\kappa-2}{\delta-1}_{\!\!2} \left( 1- e^{t2^{-\delta}} \right) \!\right)}  \mathrm{.} 
    \end{split}
\end{equation}
Combining this with \eqref{eqn:sec_D4_def} gives 
\begin{equation}
    \label{eqn:sec_secDeriv_2}
    \begin{split}
    \frac{\partial^2l(\qbar)}{\partial \qbar^2} \geq D_4 + \nabla l(\qbar) \ddot{\qbar} &= 2^{\kappa-1} t e^{-t} \sum_{\delta=1}^{\kappa}{\left( \!K_{\delta}\! \left( \binom{\kappa\!-\!2}{\delta\!-\!1}_{\!\!2} \left( 2^{-\delta} t e^{t 2^{-\delta}} \!+\! 1 \!-\! e^{t2^{-\delta}} \right) \right) \right)} \mathrm{.} 
    \end{split}
\end{equation}
Dividing out the always-positive pre-sum terms and continuing, we have 
\begin{equation}
    \label{eqn:sec_secDeriv_3}
    \begin{split}
    \frac{\partial^2l(\qbar)}{\partial \qbar^2} \frac{1}{2^{\kappa-1}te^{-t}}    &\geq\! \sum_{\delta=1}^{\kappa}{\left( K_{\delta} \left( \binom{\kappa-2}{\delta-1}_2 \left( 2^{-\delta} t e^{t 2^{-\delta}} + 1 - e^{t2^{-\delta}} \right) \right) \right)} \\
    & =\! \sum_{\delta=1}^{\kappa}{\!\left( \prod_{i=1}^{\delta-1}{\!(1\!-\!2^i)} \!\left( \prod_{\iota=0}^{\delta-2}{\!\frac{1 \!-\! 2^{\kappa-2-\iota}}{1 \!-\! 2^{\delta-1-\iota}} } \!\left( 2^{-\delta} t e^{t 2^{-\delta}} \!\!\!+\! 1 \!-\! e^{t2^{-\delta}} \right)\! \right)\! \right)} \\
    & =\! \sum_{\delta=1}^{\kappa}{\left( \left( \prod_{\iota=0}^{\delta-2}{(1 - 2^{\kappa-2-\iota})} \left( 2^{-\delta} t e^{t 2^{-\delta}} + 1 - e^{t2^{-\delta}} \right) \right) \right)} \\
    & =\! \sum_{d=0}^{\kappa-1}{\left( \left( \prod_{\iota=0}^{\kappa-d-2}{(1 - 2^{\kappa-2-\iota})} \left( 2^{d-\kappa} t e^{t 2^{d-\kappa}} \!\!\!+\! 1 \!-\! e^{t2^{d-\kappa}} \right) \!\right)\! \right)} \\
    & =\! \sum_{d=0}^{\kappa-1}{\left( \left( \prod_{i=d}^{\kappa-2}{(1 - 2^{i})} \left( \sum_{j=2}^{\infty}{\frac{(j-1)(t2^{d-\kappa})^j}{j!}} \right) \right) \right)} \\
    & =\! \frac{1}{1-2^{\kappa-1}} \sum_{j=2}^{\infty}{\frac{(j-1)(t2^{-\kappa})^j}{j!}}\sum_{d=0}^{\kappa-1}{\left( 2^{jd} \prod_{i=d}^{\kappa-1}{(1 - 2^{i})} \right)} \mathrm{.} 
    \end{split}
\end{equation}
Now by Lemma \ref{thm:exponentialMersenneLemma}, the last sum of \eqref{eqn:sec_secDeriv_3} is negative, as is the $1/(1-2^{\kappa-1})$ term, while all other terms are positive. Thus, we have  $\partial^2l(\qbar)/\partial \qbar^2 \geq D_4 + \nabla l(\qbar) \ddot{\qbar} > 0$. 

\section{Proof of Theorem \ref{thm:x2_zc}}
\label{apx:x2_zc_proof}
We prove this theorem using an approach similar to that used in Appendix \ref{Appendix:zcDerivative} for Theorem \ref{thm:zeroColumn}- we show that movement in the direction of $\grave{q}$ as defined in \eqref{eqn:zc_defQGrave} always results in a reduction of $\lambda(n,\epsilon,q)$. Hence, we first find the derivative $\grave{\lambda}(n,\epsilon,q)$ of $\lambda(n,\epsilon,q)$ with movement in the $\grave{q}$ direction. This is given by 
\begin{equation}
    \label{eqn:x2_zc_lambdaGraveDef}
    \grave{\lambda}(n,\epsilon,q) = (2-\epsilon)^n 2^{-\kappa} \sum_{S \in \Xi(W,\kappa-1)}{\grave{\varphi}(S,n,\epsilon,q)}
    \mathrm{,}
\end{equation}
where $\grave{\varphi}(S,n,\epsilon,q)$ is the rate of change of $\varphi(S,n,\epsilon,q)$ and is given (in terms of $\grave{\zeta}(S,q)$ as defined in \eqref{eqn:zc_defZetaGrave}) by
\begin{equation}
    \label{eqn:x2_zc_varphiGraveDef}
    \begin{split} 
    \grave{\varphi}(S,n,\epsilon,q) &= -n\ln(\frac{\epsilon}{2-\epsilon}) \grave{\zeta}(S,q) e^{n\ln(\frac{\epsilon}{2-\epsilon})(1-\zeta(S,q))} \\
    &= n\ln(\frac{\epsilon}{2-\epsilon}) (1-\zeta(S,q)) e^{n\ln(\frac{\epsilon}{2-\epsilon})(1-\zeta(S,q))}\mathrm{.}
    \end{split}
\end{equation}
Now we note that $(1-\zeta(S,q))$ is always nonnegative. Furthermore, $(1-\zeta(S,q))$ will be positive for at least one $S$ (ensuring that $\grave{\lambda}(n,\epsilon,q)$ is negative) except in the case that $q_0=1$. Then because the $q$ defined by the case $q_0=1$ is clearly not locally optimal and $\lambda(n,\epsilon,q)$ always decreases with movement in the $\grave{q}$ direction, any $q$ with $q_0 \neq 0$ is not locally (or globally) optimal. 

\section{Proof of Theorem \ref{thm:x2_uvf}}
\label{apx:x2_uvf_proof}
We begin by introducing a vector-valued function $\mathcal{Z}$ of $\qbar$ defined by 
\begin{equation}
    \label{eqn:x2_ZetaDef}
    \mathcal{Z}(\qbar)_i = \zeta(\Xi(W,\kappa-1)_i,\qbar) \mathrm{.}
\end{equation}
That is, $\mathcal{Z}$ represents the value of $\zeta(S^{\{\kappa-1\}},\qbar)$ for every dimension-$(\kappa-1)$ subspace of the global space $W$. The only liberty we have taken with this definition is that we have implied an ordering of the elements of $\Xi(W,\kappa-1)$. The particular ordering is not important, but for completeness, we use the convention that the $i^{\mathrm{th}}$ element of $\Xi(W,\kappa-1)$ is that defined by $\nu(i)^{\intercal} \; \mathrm{flip}(v)=0 \; \forall \; v \in \Xi(W,\kappa-1)_i$. That is, the bit-reversed version of the $i^{\mathrm{th}}$ binary vector defines the null space of the $i^{\mathrm{th}}$ subspace. The $\chi^2$ divergence may be expressed simply in terms of a vector $\xi = \mathcal{Z}(\qbar)$ as 
\begin{equation}
    \label{eqn:x2_lambdaOfZ}
    \lambda(n,\epsilon,\qbar) = (2-\epsilon)^{n}2^{-\kappa} \left( 1+ e^{-\tau}\sum_{i=1}^{2^{\kappa}-1}{ e^{\tau \xi_i} } \right) -1 \mathrm{,}
\end{equation}
where 
\begin{equation}
    \label{eqn:x2_tauDef}
    \tau = -n \mathrm{ln} \!\! \left( \frac{\epsilon}{2-\epsilon} \right) \mathrm{.}    
\end{equation}
(See \eqref{eqn:x2_lambdaFinal} and \eqref{eqn:x2_def_varphi}.) 

We note several important properties of the $\mathcal{Z}(\qbar)$ function. First, the number of elements in $\mathcal{Z}(\qbar)$ is equal to the number of elements in $\qbar$. Second, $\mathcal{Z}(\qbar)$ is a linear function and may be defined in terms of a truncated binary Hadamard matrix. Such a matrix is full-rank, so there is a one-to-one relationship between $\qbar$ and $\mathcal{Z}(\qbar)$. Third, for any $\qbar$ that satisfies the unit-sum constraint \eqref{eqn:QConstraintTotal}, $\mathcal{Z}(\qbar)$ also satisfies the element-wise sum constraint 
\begin{equation}
    \label{eqn:x2_ZetaConstraintTotal}
    \sum_{i=1}^{2^\kappa-1}{\left(\mathcal{Z}(\qbar)_i\right)} = 2^{\kappa-1}-1 \mathrm{.} 
\end{equation}
Fourth, for vectors $\qbar$ and that comply with the unit-sum constraint \eqref{eqn:QConstraintTotal}, distances between corresponding $\mathcal{Z}(\qbar)$ are linearly related. The relationship may be determined by considering the sum of the squared elements of $\Delta \mathcal{Z}(\qbar)$. Each $\Delta \mathcal{Z}(\qbar)_i^2$ term accumulates $2^{\kappa-1}-1$ on-diagonal terms of the form $\Delta \qbar_i^2$ and $(2^{\kappa-1}-1)(2^{\kappa-1}-2)$ off-diagonal terms of the form $\Delta \qbar_i \Delta \qbar_j$. When all the $\Delta \mathcal{Z}(\qbar)^2$ are summed, each on-diagonal term occurs with equal frequency, so there are a total of $2^{\kappa-1}-1$ of each on-diagonal term. Similarly, all off-diagonal terms occur with equal frequency, so there are a total of $(2^{\kappa-1}-1)(2^{\kappa-1}-2)(2^{\kappa}-1)/((2^{\kappa}-1)(2^{\kappa}-2))$ of each off-diagonal term. The sum of all the squared $\Delta \mathcal{Z}(\qbar)$ terms may then be expressed as 
\begin{equation}
    \label{eqn:x2_zetaSumSquares}
    \begin{split}
    \sum_{i=1}^{2^{\kappa}\!-1}{\!\Delta \mathcal{Z}(\qbar)_i^2} &= \! (2^{\kappa-1}\!-\!1) \!\! \sum_{i=1}^{2^{\kappa}\!-1}{\! \Delta \qbar_i^2} + (2^{\kappa-2}\!-\!1)  \sum_{i=1}^{2^{\kappa}\!-1}{  \sum_{ j \neq i = 1}^{2^{\kappa}\!-1}{ \!\! \Delta \qbar_i \Delta \qbar_j}} \\
    &=  \! 2^{\kappa-2} \! \sum_{i=1}^{2^{\kappa}-1}{\! \Delta \qbar_i^2} + (2^{\kappa-2}\!-\!1)  \sum_{i=1}^{2^{\kappa}\!-1}{\;  \sum_{j=1}^{2^{\kappa}\!-1}{\! \Delta \qbar_i \Delta \qbar_j}} \\
    &= \! 2^{\kappa-2} \! \sum_{i=1}^{2^{\kappa}-1}{\Delta \qbar_i^2} + (2^{\kappa-2}\!-\!1) \! \left(\sum_{i=1}^{2^{\kappa}-1}{\Delta \qbar_i } \right)^2\mathrm{.}
    \end{split}
\end{equation}
Then because the elements of $\qbar$ have constant sum, the elements of $\Delta \qbar$ sum to zero, so the distance relationship, expressed in terms of squared distance, is 
\begin{equation}
    \label{eqn:x2_ZetaRelationDistance}
    |\mathcal{Z}(\qbar)-\mathcal{Z}(\qbar')|^2 = 2^{\kappa-2} \cdot |\qbar-\qbar'|^2 \mathrm{.} 
\end{equation}
Fifth, when the set of allowable values for $\qbar$ is further restricted by the nonnegativity constraint \eqref{eqn:QConstraintPositive} (in addition to the unit-sum constraint \eqref{eqn:QConstraintTotal}), the allowable values of $\mathcal{Z}(\qbar)$ are also restricted. These restrictions clearly include the requirement that the elements of $\mathcal{Z}(\qbar)$ must be nonnegative: 
\begin{equation}
    \label{eqn:x2_ZetaConstraintPositive}
    \mathcal{Z}(\qbar)_i \geq 0 \;\; \forall \; i \mathrm{.} 
\end{equation}
This constraint is not sufficient to ensure that the corresponding $\qbar$ satisfies \eqref{eqn:QConstraintPositive}. Nevertheless, the analyses performed in this work only require that $\mathcal{Z}(\qbar)$ satisfy \eqref{eqn:x2_ZetaConstraintPositive}. 

Using these properties, we may identify the vector $\xi$ which maximizes $\lambda(n,\epsilon,\mathcal{Z}^{-1}(\xi))$ subject to \eqref{eqn:x2_ZetaConstraintTotal} and \eqref{eqn:x2_ZetaConstraintPositive}. (Such a maximum is guaranteed to exist because \eqref{eqn:x2_ZetaConstraintTotal} and \eqref{eqn:x2_ZetaConstraintPositive} constrain $\xi$ to a compact set.) Then if the corresponding $\qbar = \mathcal{Z}^{-1}(\xi)$ satisfies all applicable constraints, it is guaranteed to be optimal. 

With this approach in mind, we now show that the optimal $\xi$ subject to \eqref{eqn:x2_ZetaConstraintTotal} and \eqref{eqn:x2_ZetaConstraintPositive} is given by the uniform-valued vector, denoted  $\bar{\xi}$, defined by 
\begin{equation}
    \label{x2_zetaUniformFracDef}
    \bar{\xi}_i = \frac{2^{\kappa-1}-1}{2^{\kappa}-1} \mathrm{.} 
\end{equation}
We do this by considering the minimum and maximum elements of $\xi$, denoted $\xi_a$ and $\xi_b$, respectively. Then if $\xi_a \neq \xi_b$, we may define a movement direction
\begin{equation}
    \label{eqn:x2_zetaDotUniform}
    \dot{\xi}_i = \begin{cases}
        1 &\text{if } i=a \\
        -1 &\text{if } i=b \\
        0 &\text{otherwise} \mathrm{.}
    \end{cases} 
\end{equation}
Movement in this direction is clearly compliant with constraints \eqref{eqn:x2_ZetaConstraintTotal} and \eqref{eqn:x2_ZetaConstraintPositive}, and the corresponding change $\dot{\lambda}(n,\epsilon,\qbar)$ in $\lambda(n,\epsilon,\qbar)$ is given by 
\begin{equation}
    \label{eqn:x2_lambdaDot}
    \dot{\lambda}(n,\epsilon,\qbar) = (2-\epsilon)^n 2^{-\kappa} \tau \left( e^{\tau \xi_a} - e^{\tau \xi_b} \right) \mathrm{.} 
\end{equation}
Then because $\xi_a < \xi_b$, $\dot{\lambda}(n,\epsilon,\qbar)$ is clearly negative, and $\xi$ is not optimal. Because the optimum is guaranteed to exist and no $\xi$ with $\mathrm{min}_i(\xi_i) < \mathrm{max}_i(\xi_i)$ is optimal, the optimum must be the uniform vector $\bar{\xi}$. 

As a final step, we must show that $\bar{\xi}$ corresponds to a valid $\qbar$. This is easily accomplished by verifying that $\bar{\xi} = \mathcal{Z}(\bar{\qbar})$, and $\bar{\qbar}$ satisfies \eqref{eqn:QConstraintPositive} and \eqref{eqn:QConstraintTotal}, completing the proof. 

\section{Proof of Theorem \ref{thm:x2_sec1}}
\label{apx:x2_sec1Proof}
We start with the same framework used to prove Theorem \ref{thm:x2_uvf}- that is, we search for a $\xi = \mathcal{Z}(\qbar)$ which optimizes $\lambda(n,\epsilon,\qbar)$. In this case, though, in addition to the nonnegativity constraint \eqref{eqn:x2_ZetaConstraintPositive} and the sum constraint \eqref{eqn:x2_ZetaConstraintTotal}, we also require $\xi$ to satisfy a radius constraint. The form of this constraint may be derived from the radius constraint \eqref{eqn:sec_radiusConstraint} applied to $\qbar$ and the relationship \eqref{eqn:x2_ZetaRelationDistance} of distances in the $\qbar$ domain to distances in the $\xi=\mathcal{Z}(\qbar)$ domain. We state this constraint in terms of squared distance as 
\begin{equation}
    \label{eqn:x2_ZetaConstraintRadius}
    |\xi-\bar{\xi}|^2 = \frac{2^{\kappa-2}(2^{\kappa-1}-1)}{(2^{\kappa}-2^{\kappa-1})(2^{\kappa}-1)} = \frac{(2^{\kappa-1}-1)}{2(2^{\kappa}-1)}\mathrm{.} 
\end{equation}

Now the $\dot{\xi}$ defined in \eqref{eqn:x2_zetaDotUniform} is no longer, in general, a valid movement direction, as it does not maintain compliance with the radius constraint \eqref{eqn:x2_ZetaConstraintRadius}. Instead, we consider a set of three unique elements of a given $\xi$: the smallest element, $\xi_a = \mathrm{min}_i(\xi_i)$; the second smallest element, $\xi_b = \mathrm{min}_i(\xi_i:i \neq a)$; and the largest element, $\xi_c = \mathrm{max}_i(\xi_i)$. We next define a movement direction $\dot{\xi}$ as  
\begin{equation}
    \label{eqn:x2_zetaDotSEC1}
    \dot{\xi}_i = \begin{cases}
        \xi_b - \xi_c &\text{if } i=a \\
        \xi_c - \xi_a &\text{if } i=b \\
        \xi_a - \xi_b &\text{if } i=c \\
        0 &\text{otherwise} \mathrm{.}
    \end{cases} 
\end{equation}
Movement in the direction defined by $\dot{\xi}$ maintains compliance of $\xi$ with \eqref{eqn:x2_ZetaConstraintTotal}, \eqref{eqn:x2_ZetaConstraintPositive}, and \eqref{eqn:x2_ZetaConstraintRadius}, provided $\xi_a>0$. (The case that $\xi_a=0$ would ordinarily be a separate edge case that needs to be checked for optimality, but we show below that this definition of $\dot{\xi}$ is sufficient to rule out all other cases, and the optimum does indeed require that $\xi_a=0$.) If a second derivative is required for $\xi$ with movement in the $\dot{\xi}$ direction, $\ddot{\xi}$ should be given by 
\begin{equation}
    \label{eqn:x2_zetaDotDotSEC1}
    \ddot{\xi}_i = \begin{cases}
        \xi_b + \xi_c - 2\xi_a &\text{if } i=a \\
        \xi_c + \xi_a - 2\xi_b &\text{if } i=b \\
        \xi_a + \xi_b - 2\xi_c &\text{if } i=c \\
        0 &\text{otherwise} \mathrm{.}
    \end{cases} 
\end{equation}
to ensure compliance with \eqref{eqn:x2_ZetaConstraintRadius}. The rate of change $\dot{\lambda}(n,\epsilon,\qbar)$ which corresponds to movement in the $\dot{\xi}$ direction is equal to 
\begin{equation}
    \label{eqn:x2_lambdaDotSEC1_1}
    \begin{split}
    \dot{\lambda}(n,\epsilon,\qbar) &= (2\!-\!\epsilon)^n 2^{-\kappa} \tau \!\left( \dot{\xi}_a e^{\tau \xi_a} + \dot{\xi}_b e^{\tau \xi_b} + \dot{\xi}_c e^{\tau \xi_c} \right) \\
    &= (2\!-\!\epsilon)^n 2^{-\kappa} \tau \!\left( (\xi_b\!-\!\xi_c)e^{\tau \xi_a} \!\!+\! (\xi_c\!-\!\xi_a)e^{\tau \xi_b} \!\!+\! (\xi_a\!-\!\xi_b)e^{\tau \xi_c} \right) \!\mathrm{.} 
    \end{split}
\end{equation}
Then dividing out always-positive terms and taking the series expansion of the exponential terms, we obtain 
\begin{equation}
    \label{eqn:x2_lambdaDotSEC1_2}
    \begin{split}
    \frac{\dot{\lambda}(n,\epsilon,\qbar)}{(2-\epsilon)^n 2^{-\kappa} \tau} &= \sum_{j=0}^{\infty}{\left( \frac{\tau^j}{j!}\left( (\xi_b-\xi_c)\xi_a^j + (\xi_c-\xi_a)\xi_b^j + (\xi_a-\xi_b)\xi_c^j \right) \right)} \\
    &= \sum_{j=0}^{\infty}{\left( \frac{\tau^j}{j!} (\xi_c-\xi_b)(\xi_a-\xi_c)(\xi_b-\xi_a)\Upsilon(\xi_a,\xi_b,\xi_c,j\!-\!1) \! \right)} \mathrm{,} 
    \end{split}
\end{equation}
where $\Upsilon(a,b,c,n)$ is the homogeneous polynomial or order $n$ in variables $a$, $b$, and $c$, with all coefficients equal to one (and with the convention that $\Upsilon(a,b,c,-1) = 0$). Considering \eqref{eqn:x2_lambdaDotSEC1_2}, clearly if $\xi_a < \xi_b < \xi_c$, there is exactly one negative term, so $\dot{\lambda}(n,\epsilon,\qbar)$ is always negative, and such a $\xi$ is not optimal. 

The only remaining cases are that $\xi_a=\xi_b$ or $\xi_b=\xi_c$. In either case, it is clear from \eqref{eqn:x2_lambdaDotSEC1_1} that $\dot{\lambda}(n,\epsilon,\qbar)$ is equal to zero. Then we calculate $\ddot{\lambda}(n,\epsilon,\qbar)$ using \eqref{eqn:x2_lambdaDotSEC1_1} and \eqref{eqn:x2_zetaDotDotSEC1} to obtain 
\begin{equation}
    \label{eqn:x2_lambdaDotDotSEC1_1}
    \begin{split}
    \ddot{\lambda}(n,\epsilon,\qbar) \!&=\! (2\!-\!\epsilon)^n 2^{-\kappa} \tau \!\left( \!\left(\tau (\xi_b\!-\!\xi_c)^2 \!+\! \xi_b\!+\!\xi_c\!-\!2\xi_a\right)\! e^{\tau \xi_a} \right. \!\!+\! \left(\tau (\xi_c\!-\!\xi_a)^2 \!+\! \xi_c \!+\! \xi_a \!-\! 2\xi_b \right) \!e^{\tau \xi_b} \!\!+\! \left. \left(\tau (\xi_a\!-\!\xi_b)^2 \!+\! \xi_a \!+\! \xi_b \!-\! 2\xi_c \right) \!e^{\tau \xi_c} \right) \!\mathrm{.}
    \end{split}
\end{equation}
Considering the case of $\xi_b = \xi_c$, \eqref{eqn:x2_lambdaDotDotSEC1_1} reduces to 
\begin{equation}
    \label{eqn:x2_lambdaDotDotSEC1_2}
    \begin{split}
    \ddot{\lambda}(n,\epsilon,\qbar) = (2\!-\!\epsilon)^n 2^{1-\kappa} \tau\! \left( \left(\xi_c\! -\! \xi_a \right)\! e^{\tau \xi_a}\!\! +\! \left(\tau (\xi_c\! -\! \xi_a)^2\!\! +\! \xi_a\! -\! \xi_c \right)\! e^{\tau \xi_c} \right)\! \mathrm{.}
    \end{split}
\end{equation}
With the substitution $\upsilon = \tau(\xi_c-\xi_a)$, this becomes 
\begin{equation}
    \label{eqn:x2_lambdaDotDotSEC1_3}
    \begin{split}
    &\ddot{\lambda}(n,\epsilon,\qbar) = (2\!-\!\epsilon)^n 2^{1-\kappa} \upsilon e^{\tau\xi_a} \! \left( 1 -e^{\upsilon} + \upsilon e^{\upsilon} \right) \mathrm{.}
    \end{split}
\end{equation}
All terms in \eqref{eqn:x2_lambdaDotDotSEC1_3} are positive, so $\ddot{\lambda}(n,\epsilon,\qbar)$ is also positive, and the case of $\xi_b=\xi_c$ is a candidate for a global minimum. 

Now considering the case that $\xi_a=\xi_b$, $\dot{\lambda}(n,\epsilon,\qbar)$ is still zero, and \eqref{eqn:x2_lambdaDotDotSEC1_1} reduces to 
\begin{equation}
    \label{eqn:x2_lambdaDotDotSEC1_4}
    \begin{split}
    \ddot{\lambda}(n,\epsilon,\qbar) &= (2\!-\!\epsilon)^n 2^{1-\kappa} \tau\! \left( \left(\xi_a\! -\! \xi_c \right)\! e^{\tau \xi_c}\!\! +\! \left(\tau (\xi_a\! -\! \xi_c)^2\!\! +\! \xi_c\! -\! \xi_a \right)\! e^{\tau \xi_a} \right)\! \mathrm{.}
    \end{split}
\end{equation}
With the substitution $\upsilon' = \tau(\xi_a-\xi_c)$ (which now implies that $\upsilon'$ is negative), this becomes 
\begin{equation}
    \label{eqn:x2_lambdaDotDotSEC1_5}
    \begin{split}
    &\ddot{\lambda}(n,\epsilon,\qbar) = (2\!-\!\epsilon)^n 2^{1-\kappa} \upsilon' e^{\tau\xi_c} \! \left( 1 -e^{\upsilon'} + \upsilon' e^{\upsilon'} \right) \mathrm{.}
    \end{split}
\end{equation}
Here, all terms are positive except the $\upsilon'$ term, which is negative. Thus, $\ddot{\lambda}(n,\epsilon,\qbar)$ is negative, and the case of $\xi_a=\xi_b$ is not a candidate for a global minimum. 

Because all other possibilities for optimal $\xi_a$, $\xi_b$, and $\xi_c$ have been excluded, the optimum must have $\xi_a<\xi_b=\xi_c$. But because $\xi_b$ is defined as the second-smallest element of $\xi$, this requires that all elements of $\xi$ except the minimum element must be equal. Because of constraints \eqref{eqn:x2_ZetaConstraintTotal} and \eqref{eqn:x2_ZetaConstraintRadius}, the only viable solutions must then have a minimum element equal to zero, and all other elements equal to one half. This solution also complies with the nonnegativity constraint \eqref{eqn:x2_ZetaConstraintPositive}, and its counterpart $\qbar = \mathcal{Z}^{-1}(\xi)$ defines a $\qbar$ with all elements in one $(\kappa-1)$-dimensional subspace equal to zero and all other elements set to $2^{1-\kappa}$. This $\qbar$ not only satisfies all applicable constraints (\eqref{eqn:QConstraintPositive}, \eqref{eqn:QConstraintTotal}, and \eqref{eqn:sec_radiusConstraint}), it is also a subspace exclusion code, equivalent to $\check{\qbar}^{\{\kappa-1\}}$, as required.  

\section{Proof of Theorem \ref{thm:x2_sec2}}
\label{apx:x2_sec2Proof}
We begin by examining the minimum distance constraint \eqref{eqn:x2_QConstraintSEC1} in conjunction with the $\xi$ radius constraint \eqref{eqn:sec_radiusConstraint}. We convert both of these from the $\qbar$ domain to the $\xi$ domain using \eqref{eqn:x2_ZetaRelationDistance} and express them in terms of squared distance to yield 
\begin{equation}
    \label{eqn:x2_ZetaConstraintRadiusSEC2}
    |\xi - \bar{\xi}|^2 = \frac{2^{\kappa-2}\left( 2^{u}-1\right)}{(2^{\kappa}-2^{u})(2^{\kappa}-1)}
\end{equation}
and 
\begin{equation}
    \label{eqn:x2_ZetaConstraintMinDistSEC2}
    \begin{split}
    \left|\xi - \check{\xi}^{[j]} \right|^2 \geq 2^{\kappa-2} \left| \check{\qbar}^{\{u\}} - \check{\qbar}^{\{\kappa-1\}} \right|^2 &= 2^{\kappa-2}\!\! \left(\! (2^{\kappa-1}\!\!\!-\!2^{u}) \! \left( \!\frac{1}{2^\kappa\!-\!2^u} \!\right)^2 \!\!\!+\! 2^{\kappa-1} \!\left( \!\frac{1}{2^{\kappa-1}} \!-\! \frac{1}{2^{\kappa}\!-\!2^{u}}\! \right)^2 \right) \\
    &=  \frac{(2^{\kappa-1}-2^{u})2^{\kappa-1} + \left( 2^{\kappa-1}-2^{u} \right)^2}{2(2^{\kappa}-2^{u})^2} \\
    &= \frac{2^{\kappa-2}-2^{u-1}}{2^{\kappa}-2^{u}} \mathrm{,}
    \end{split}
\end{equation}
where $\check{\xi}^{[j]}$ is, as the notation suggests, the representation in the $\xi$ domain of the subspace exclusion code formed by excluding the $j^{\mathrm{th}}$ $(\kappa-1)$-dimensional subspace. It is thus equal to $\mathcal{Z}(\check{\qbar}^{[\Xi(W,\kappa-1)_j]})$, and its elements are given by 
\begin{equation}
    \label{eqn:x2_ZetaCheckDef}
    \check{\xi}^{[j]}_i = \begin{cases}
        0 & \text{if } i=j \\
        \frac{1}{2} &\text{otherwise.}
    \end{cases}
\end{equation}

Because $\bar{\xi}$ is constant, \eqref{eqn:x2_ZetaConstraintRadiusSEC2} may be converted to a constraint on the magnitude of $\xi$, giving 
\begin{equation}
    \label{eqn:x2_zetaMagnitudeSEC2}
    \begin{split}
    |\xi|^2 &= \frac{2^{\kappa-2}\left( 2^{u}-1\right)}{(2^{\kappa}-2^{u})(2^{\kappa}-1)} + 2\bar{\xi}^{\intercal} - |\bar{\xi}|^2 \\
    &= \frac{2^{\kappa-2}\left( 2^{u}-1\right)}{(2^{\kappa}-2^{u})(2^{\kappa}-1)} + \frac{(2^{\kappa-1}-1)^2}{2^{\kappa}-1} \mathrm{.} 
    \end{split}
\end{equation}

Then considering \eqref{eqn:x2_ZetaConstraintMinDistSEC2}, the expression left of the inequality may be expanded using \eqref{eqn:x2_ZetaCheckDef} to yield
\begin{equation}
    \label{eqn:x2_ZetaMinDistEquivalent1}
    \begin{split}
    \left|\xi - \check{\xi}^{[j]} \right|^2 &= \sum_{i=1}^{2^{\kappa}-1}{\left(\xi_i^2 -2\xi_1 \check{\xi}^{[j]}_i + (\check{\xi}^{[j]}_i)^2  \right)}\\
    &= |\xi|^2 +|\check{\xi}^{[j]}|^2 -2\left( \sum_{i \in \{[\![ 0,2^\kappa-1]\!] \setminus j\}}{\frac{1}{2}\xi_i} \right) \\
    &= |\xi|^2 + \frac{2^{\kappa-1}-1}{2} - \left( \left(\sum_{i=1}^{2^{\kappa}-1}{\xi_i}\right) - \xi_j \right) \\
    &= \xi_j \!+\! \frac{2^{\kappa-2}\!\left( 2^{u}\!-\!1\right)}{(2^{\kappa}\!-\!2^{u})(2^{\kappa}\!-\!1)} \!+\! \frac{(2^{\kappa-1}\!\!-\!1)^2}{2^{\kappa}\!-\!1} \!+\! \frac{2^{\kappa-1}\!\!-\!1}{2} \!-\! (2^{\kappa-1}\!\!-\!1) \\
    &= \xi_j - \frac{2^{\kappa-2}-2^{u-1}}{2^{\kappa}\!-\!2^{u}} \mathrm{.} 
    \end{split}
\end{equation}
Substituting this expression back into \eqref{eqn:x2_ZetaConstraintMinDistSEC2} gives simply 
\begin{equation}
    \label{eqn:x2_ZetaConstraintMinDistSimplified}
    \xi_j \geq \frac{2^{\kappa-1}-2^{u}}{2^{\kappa}\!-\!2^{u}} \; \forall \; j \in [\![ 0,2^\kappa-1]\!] \mathrm{.} 
\end{equation}
This clearly supersedes the nonnegativity constraint \eqref{eqn:x2_ZetaConstraintPositive}, so we now require the optimal $\xi$ that satisfies only the element-wise sum constraint \eqref{eqn:x2_ZetaConstraintTotal}, the radius constraint \eqref{eqn:x2_ZetaConstraintRadiusSEC2}, and the element minimum constraint \eqref{eqn:x2_ZetaConstraintMinDistSimplified}. 

We may now use the same movement direction $\dot{\xi}$ defined in \eqref{eqn:x2_zetaDotSEC1} and its associated result in the proof of Theorem \ref{thm:x2_sec1}- namely: if there are three distinct elements $\xi_a$, $\xi_b$, and $\xi_c$ of $\xi$ such that $\xi_a \leq \xi_b < \xi_c$, and $\xi_a$ is not at a minimum boundary, then $\xi$ is not optimal. We choose $\xi_a$ to be the minimum element which is greater than the minimum bound of \eqref{eqn:x2_ZetaConstraintMinDistSimplified}, $\xi_b$ to be the next largest element after $\xi_a$, and $\xi_c$ to be the maximum element. Then the optimal solution must have $\xi_b=\xi_c$. That is, there may be at most one element of $\xi$ which is not equal to either $\xi_{\mathrm{min}}=(2^{\kappa-1}-2^{u})/(2^{\kappa}-2^{u})$ or to the maximum element value $\xi_c$. If we now use $N$ to represent the number of elements of $\xi$ equal to $\xi_{\mathrm{min}}$, we may use the remaining constraints \eqref{eqn:x2_ZetaConstraintTotal} and \eqref{eqn:x2_ZetaConstraintRadius} to calculate the required $\xi_a$ and $\xi_c$. We do this by splitting the vector $\xi$ into two smaller vecors: $\xi_{\text{-}}$, composed of $N$ elements equal to $\xi_{\mathrm{min}}$ and $\xi_{\text{+}}$, composed of $\xi_a$ and $(2^{\kappa}-2-N)$ elements equal to $\xi_c$. We can now express the means ($\bar{\xi_{\text{-}}}$ and $\bar{\xi_{\text{+}}})$ and squared distances from the mean ($\varrho_{\text{-}}^2$ and $\varrho_{\text{+}}^2)$ for these vectors in terms of $N$ and $\xi_{\Delta}=\xi_c-\xi_a$ as 
\begin{equation}
    \label{eqn:x2_SEC2N1}
    \bar{\xi_{\text{\text{-}}}} = \xi_{\mathrm{min}}=\frac{2^{\kappa-1}-2^{u}}{2^{\kappa}-2^{u}} \mathrm{,}
\end{equation}
\begin{equation}
    \label{eqn:x2_SEC2N2}
    \begin{split}
    &\bar{\xi_{\text{\text{+}}}} = \frac{2^{\kappa-1}-1 - N\xi_{\mathrm{min}}}{2^{\kappa}-N-1} = \frac{2^{\kappa-1}-1 - N\frac{2^{\kappa-1}-2^{u}}{2^{\kappa}-2^{u}}}{2^{\kappa}-N-1} \mathrm{,}
    \end{split}
\end{equation}
\begin{equation}
    \label{eqn:x2_SEC2N3}
    \varrho^2_{\text{-}} = 0 \mathrm{,}
\end{equation}
and
\begin{equation}
    \label{eqn:x2_SEC2N4}
    \varrho^2_{\text{+}} = \frac{2^{\kappa}-N-2}{2^{\kappa}-N-1} \xi_{\Delta}^2 \mathrm{.} 
\end{equation}
These quantities may then be related to the squared distance $\varrho$ of $\xi$ from its mean $\bar{\xi}$ as 
\begin{equation}
    \label{eqn:x2_SECN5}
    \varrho^2 = \varrho^2_{\text{-}} + \varrho^2_{\text{+}} + N(\bar{\xi_{\text{-}}}-\bar{\xi})^2 + (2^{\kappa}-1-N)(\bar{\xi_{\text{+}}}-\bar{\xi})^2 \mathrm{.}
\end{equation}
Substituting \eqref{eqn:x2_ZetaConstraintRadiusSEC2}, \eqref{eqn:x2_SEC2N2}, \eqref{eqn:x2_SEC2N3}, and \eqref{eqn:x2_SEC2N4} into \eqref{eqn:x2_SECN5} gives 
\begin{equation}
    \label{eqn:x2_SECN6}
    \begin{split}
    \frac{2^{\kappa-2}\left( 2^{u}-1\right)}{(2^{\kappa}-2^{u})(2^{\kappa}-1)} &= \frac{2^{\kappa}-N-2}{2^{\kappa}-N-1} \xi_{\Delta}^2 + N\left(\xi_{\mathrm{min}} - \bar{\xi}\right)^2 + (2^{\kappa}-N-1) \left(\frac{2^{\kappa-1}-1-N\xi_{\mathrm{min}}}{2^{\kappa}-N-1} - \bar{\xi}\right)^{2} \\
    &= \frac{2^{\kappa}-N-2}{2^{\kappa}-N-1} \xi_{\Delta}^2 + N\xi_{\mathrm{min}}^2 + \frac{\left(2^{\kappa-1}-1-N\xi_{\mathrm{min}}\right)^2}{2^{\kappa}-N-1} - (2^{\kappa}-2)\bar{\xi} + (2^{\kappa}-1)\bar{\xi}^2 \\
    &= \frac{2^{\kappa}-N-2}{2^{\kappa}-N-1} \xi_{\Delta}^2 + \frac{(2^{\kappa}-1)N}{2^{\kappa}-N-1}\xi_{\mathrm{min}}^2 - \frac{(2^{\kappa}-2)N}{2^{\kappa}-N-1}\xi_{\mathrm{min}} + \frac{(2^{\kappa-1}-1)^2}{2^{\kappa}-N-1} - (2^{\kappa}-2)\bar{\xi} +(2^{\kappa}-1)\bar{\xi}^2 \mathrm{.}
    \end{split}
\end{equation}
Multiplying by $(2^{\kappa}-N-1)$ and substituting in \eqref{x2_zetaUniformFracDef} then gives
\begin{equation}
    \label{eqn:x2_SECN7}
    \begin{split}
    \frac{(2^{\kappa}\!-\!N\!-\!1)2^{\kappa-2}\left( 2^{u}\!-\!1\right)}{(2^{\kappa}\!-\!2^{u})(2^{\kappa}\!-\!1)} &= (2^{\kappa}\!-\!N\!-\!2) \xi_{\Delta}^2 \!+\!  N(2^{\kappa}\!-\!1)\xi_{\mathrm{min}}^2 \!-\! N(2^{\kappa}\!-\!2)\xi_{\mathrm{min}} \!+\! N\frac{(2^{\kappa}\!-\!2)(2^{\kappa-1}\!-\!1)}{(2^{\kappa}\!-\!1)} \!-\! N\frac{(2^{\kappa-1}\!-\!1)^2}{(2^{\kappa}\!-\!1)} \\
    &= (2^{\kappa}\!-\!N\!-\!2) \xi_{\Delta}^2 \!+\!  \frac{N \left((2^{\kappa}\!-\!1)\xi_{\mathrm{min}} \!-\! (2^{\kappa-1}\!-\!1)\right)^2}{(2^{\kappa}\!-\!1)}\mathrm{.}
    \end{split}
\end{equation}
Multiplying by $(2^{\kappa}-2^{u})^2(2^{\kappa}-1)$ and substituting in \eqref{eqn:x2_SEC2N1}, we now have 
\begin{equation}
    \label{eqn:x2_SECN8}
    \begin{split}
    (2^{\kappa}\!-\!N\!-\!1)2^{\kappa-2}\left( 2^{u}\!-\!1\right)(2^{\kappa}\!-\!2^{u}) &= (2^{\kappa}\!-\!2^{u})^2(2^{\kappa}\!-\!1)(2^{\kappa}\!-\!N\!-\!2) \xi_{\Delta}^2 + N \left((2^{\kappa}\!-\!1)(2^{\kappa-1}\!-\!2^{u}) \!-\! (2^{\kappa-1}\!-\!1)(2^{\kappa}\!-\!2^{u})\right)^2 \\
    &= (2^{\kappa}\!-\!2^{u})^2(2^{\kappa}\!-\!1)(2^{\kappa}\!-\!N\!-\!2) \xi_{\Delta}^2 + N \left(2^{\kappa-1}(1\!-\!2^{u})\right)^2 \mathrm{.}
    \end{split}
\end{equation}
We now solve for $\xi_{\Delta}^2$ to obtain 
\begin{equation}
    \label{eqn:x2_SECN9}
    \begin{split}
    \xi_{\Delta}^2 \! &= \frac{(2^{\kappa}\!-\!N\!-\!1)2^{\kappa-2}\left( 2^{u}\!-\!1\right)(2^{\kappa}\!-\!2^{u}) \!-\! N \!\left(2^{\kappa-1}(1 \!-\! 2^{u})\right)^2}{(2^{\kappa}\!-\! 2^{u})^2(2^{\kappa}\!-\! 1)(2^{\kappa}\!-\!N\!-\!2)} \\
    &= \frac{(2^{u}\!-\!1)2^{\kappa-2} \left( 2^{\kappa}\!-\! 2^{u}\!-\!N 2^{u} \right)}{(2^{\kappa}\!-\! 2^{u})^2(2^{\kappa}\!-\!N\!-\!2)} \mathrm{.}
    \end{split}
\end{equation}
We now observe several properties of the expression for $\xi_{\Delta}^2$ in \eqref{eqn:x2_SECN9}. First, $\xi_{\Delta}^2$ is monotonically decreasing in $N$ (with a pole at $N=2^{\kappa}-2$, which is not a concern because $N \geq 2^{\kappa}-2$ does not result in a $\xi$ that meets the required $\xi_b=\xi_c$ condition). Second, at $N=2^{\kappa-u}-1$, $\xi_{\Delta}$ is equal to zero, indicating that $\xi_a = \xi_c$, and the corresponding $\xi$ is given by 
\begin{equation}
    \label{eqn:x2_SEC2_FinalXi}
    \xi_i = \begin{cases}
        \xi_{\mathrm{min}} &\text{ if } i < 2^{u} \\
        \frac{1}{2} &\text{ otherwise. }
    \end{cases}
\end{equation}
Third, at $N=2^{\kappa-u}-2$, $\xi_{\Delta}$ is equal to $(1/2-\xi_{\mathrm{min}})$, indicating that $\xi_a = \xi_{\mathrm{min}}$, and $\xi$ in this case is identical to that \eqref{eqn:x2_SEC2_FinalXi} of the $N=2^{\kappa-u}-1$ case. Then because $\xi_{\Delta}^2$ is monotonically decreasing in $N$, any $N>2^{\kappa-u}-1$ results in negative $\xi_{\Delta}^2$, implying that no real solution exists, and any $N<2^{\kappa-u}-2$ will result in $\xi_a<\xi_{\mathrm{min}}$, violating the minimum element constraint. Thus, the only possible values for $N$ result in the same $\xi$, and this $\xi$ (defined in \eqref{eqn:x2_SEC2_FinalXi}) must be optimal. As a final step, we seek a valid $\qbar$ which produces the $\xi=\mathcal{Z}(\qbar)$ described above. This $\qbar$ is simply $\check{\qbar}^{\{u\}}$ as required, completing the proof. 

\bibliographystyle{IEEEtran}
\bibliography{./SDCC}

\end{document}